\newcommand{\defn}[1]{\emph{\textbf{{#1}}}}
\newcommand{\av}{\operatorname{av}}
\newcommand{\tot}{\operatorname{tot}}
\newcommand{\poly}{\operatorname{poly}}
\newcommand{\polylog}{\operatorname{polylog}}
\newcommand{\E}{\mathbb{E}}
\begin{document}

\title{Achieving Optimal Backlog in the Vanilla Multi-Processor Cup Game}
\author{William Kuszmaul\thanks{MIT CSAIL. Supported by NSF Grants CCF 1314547 and CCF 1533644; by a  Fannie \& John Hertz Foundation Fellowship; and by an NSF GRFP Fellowship.}}
\date{}
\maketitle

\fancyfoot[R]{\scriptsize{Copyright \textcopyright\ 2020 by SIAM\\
Unauthorized reproduction of this article is prohibited}}

\begin{abstract}\small\baselineskip=9pt
In each step of the $p$-processor cup game on $n$ cups, a filler distributes up to $p$ units of water among the cups, subject only to the constraint that no cup receives more than $1$ unit of water; an emptier then removes up to $1$ unit of water from each of $p$ cups. Designing strategies for the emptier that minimize backlog (i.e., the height of the fullest cup) is important for applications in processor scheduling, buffer management in networks, quality of service guarantees, and deamortization.

We prove that the greedy algorithm (i.e., the empty-from-fullest-cups algorithm) achieves backlog $O(\log n)$ for any $p \ge 1$. This resolves a long-standing open problem for $p > 1$, and is asymptotically optimal as long as $n \ge 2p$.

If the filler is an oblivious adversary, then we prove that there is a randomized emptying algorithm that achieve backlog $O(\log p + \log \log n)$ with probability $1 - 2^{-\polylog(n)}$ for $2^{\polylog(n)}$ steps. This is known to be asymptotically optimal when $n$ is sufficiently large relative to $p$. The analysis of the randomized algorithm can also be reinterpreted as a smoothed analysis of the deterministic greedy algorithm.

Previously, the only known bound on backlog for $p > 1$, and the only
known randomized guarantees for any $p$ (including when $p = 1$),
required the use of resource augmentation, meaning that the filler can
only distribute at most $p(1 - \epsilon)$ units of water in each step,
and that the emptier is then permitted to remove $1 + \delta$ units of
water from each of $p$ cups, for some $\epsilon, \delta > 0$.
\end{abstract}

\section{Introduction}

The \defn{$p$-processor cup game} is a multi-round game in which there
are $n$ cups, each initially empty, and two players take turn placing
water into and removing water from the cups. In each step of the game,
the \defn{filler} distributes (up to) $p$ units of water among $n$
cups, placing no more than $1$ unit in any individual cup; the
\defn{emptier} then selects $p$ cups and removes (up to) $1$ unit of
water from each. The goal of the emptier is to minimize the amount of
water in the fullest cup, also known as the \defn{backlog} of the
system. When $p = 1$, the game is often called the
\defn{single-processor cup game}, and when $p > 1$, the game is often
called the \defn{multi-processor cup game}.

The $p$-processor cup game naturally arises in the study of processor
scheduling \cite{AdlerBeFr03,LitmanMo09,DietzRa91}. The $n$ cups
represent tasks that are being scheduled to run in time slices on $p$
processors. The amount of water in each cup represents work that needs
to be performed on that task. During each time slice, the $p$
processors can each make $1$ unit of progress on some task, although
no two processors can work on the same task. New work also arrives
during each time slice, with $p$ new units of work being distributed
arbitrarily among the tasks, subject only to the constraint that each
task receives a maximum of $1$ new unit of work. (Note that without
this constraint, all of the new work might go to a single task, in
which case the processors would have no possibility of keeping up.)
The goal is to design a scheduling algorithm that prevents any task
from having a large amount of unfinished work. This corresponds directly
to finding an emptying algorithm for the $p$-processor cup game that
achieves small backlog.

The $p$-processor cup game (and its relaxations) has also appeared in
a variety of other applications, ranging from deamortization
\cite{AmirFaId95,DietzRa91,DietzSl87,AmirFr14,Mortensen03,GoodrichPa13,FischerGa15,Kopelowitz12},
to buffer management in network
switches~\cite{Goldwasser10,AzarLi06,RosenblumGoTa04,Gail93}, to
quality of service
guarantees~\cite{BaruahCoPl96,AdlerBeFr03,LitmanMo09}.

A natural emptying strategy is \defn{the greedy algorithm}, in which
the emptier always selects the $p$ fullest cups at each step. For the
single-processor cup game, the greedy algorithm is known to achieve
backlog $O(\log n)$, which is optimal for any deterministic algorithm
\cite{AdlerBeFr03, DietzSl87}. Designing algorithms with provable guarantees for the full multi-processor cup game has proven more
difficult, and has remained an open question since the late 1960's \cite{Liu69}. Even when $p = 2$, no nontrivial bounds on the
performance of the greedy algorithm are known.

The difficulty of analyzing the multi-processor cup game stems from
the constraint that the emptier can remove at most $1$ unit of water
from each cup. This means that, even when a single cup contains far
more water than any other cup, only a $\frac{1}{p}$-fraction of the
emptier's resources can be devoted to emptying that cup. As was first
noted by Liu~\cite{Liu69} in 1969, and later reiterated by other
authors~\cite{LitmanMo09,BaruahCoPl96, GuanYi12, BenderFaKu19}, this
constraint adds a ``surprising amount of difficulty'' to the
scheduling problem.

\vspace{.3 cm} \noindent \textbf{Resource augmentation to help the
  emptier. }In order to analyze the multi-processor cup game, a key
tool has been the use of resource augmentation. In the
\defn{$(\epsilon, \delta)$-resource-augmented $p$-processor cup game},
the filler is restricted at each step to distribute at most
$p(1-\epsilon)$ units of water among the cups, placing no more than
$1$ unit in each cup, and then the emptier is permitted to remove up
to $1 + \delta$ water from each of $p$ cups.

Recently, Bender et al. \cite{BenderFaKu19} showed that as long as
$\delta \ge \frac{1}{\poly(n)}$, the greedy emptying algorithm
achieves backlog $O(\frac{1}{\epsilon} \log n)$. When $\epsilon \ge
\Omega(1)$, this results in an asymptotically optimal backlog of
$O(\log n)$.

Resource augmentation has also played a pivotal role in the design of
\emph{randomized} emptying algorithms for both the single- and multi-
processor cup games. When the emptier's algorithm is randomized, the
filler is presumed to be an oblivious adversary.

Dietz and Raman~\cite{DietzRa91} proved that if the emptier is
permitted to completely empty $p$ cups (i.e., $\delta = \infty$),
then there is a randomized emptying algorithm that achieves backlog
$O(\log \log n)$ with probability $1 - 1/\poly(n)$. Dietz and Raman
also give a matching lower bound construction that achieves backlog
$\Omega(\log \log n)$ with probability $\frac{1}{\poly(n)}$.

Recently, Bender et al. \cite{BenderFaKu19} presented several
randomized algorithms that work with far less resource
augmentation. For the single-processor game, they introduced the
\defn{smoothed greedy algorithm}, which achieves backlog $O(\log \log
n)$ with probability $1 - \frac{1}{\poly(n)}$, as long as $\epsilon
\ge \frac{1}{\polylog(n)}$. 

Bender et al. \cite{BenderFaKu19} also gave a more intricate algorithm
for the multi-processor cup game. Their algorithm achieves backlog
$O(\frac{1}{\epsilon} \log \log n)$ with probability $1 -
\frac{1}{\poly(n)}$, as long as $\epsilon, \delta \ge
\frac{1}{\poly(p)}$ satisfy certain natural constraints. The algorithm
has the additional remarkable property that it achieves backlog $O(1)$
after a given step with probability $1 - e^{-\Omega(\epsilon^2 p)}$.

The algorithms of Dietz and Raman~\cite{DietzRa91} and Bender et
al.~\cite{BenderFaKu19} both rely heavily on the use of resource
augmentation to achieve their bounds. Designing randomized algorithms
for either the single-processor or multi-processor cup game that do
not rely on resource augmentation remains an open question.

\vspace{.3 cm} \noindent \textbf{Our results: augmentation-free analyses of greedy
  and smoothed-greedy. } We prove that the greedy algorithm for the
$p$-processor cup game on $n$ cups achieves backlog $O(\log n)$
deterministically. Moreover, we show that no deterministic algorithm
can do asymptotically better, as long as $n \ge 2p$. At the heart of
our analysis is an intricate system of invariants for the
$p$-processor cup game.

Our second main result is an analysis of the smoothed greedy algorithm
that does not rely on resource augmentation, and that works for any
number of processors $p$. We show that the algorithm achieves backlog
$O(\log p + \log \log n)$ with probability $1 - 2^{-\polylog(n)}$ for
all of the first $2^{\polylog(n)}$ steps of the $p$-processor cup game. For
fixed $p$, and as $n$ grows large, this becomes backlog $O(\log \log
n)$, which is known to be asymptotically optimal.

As noted by Bender et al. \cite{BenderFaKu19}, any analysis of the
smoothed greedy algorithm also doubles as a \emph{smoothed analysis}
for the deterministic greedy algorithm. This is because the smoothed
greedy algorithm works by first randomly perturbing the initial
starting state of the game, and then following a variant of the
standard deterministic greedy algorithm on the perturbed game.

\vspace{.3 cm} \noindent \textbf{Lower bounds against possible improvements. }We also
prove several impossibility results separating the $p$-processor cup
game from its resource-augmented counterpart.

First, we design an oblivious strategy for the filler that achieves
backlog $\Omega(\log \log p)$ with \emph{constant} probability. This
contrasts with resource-augmented $p$-processor cup game, for which a
randomized emptying algorithm is known to achieve backlog $O(1)$ with
probability $1 - e^{- \Omega(\epsilon^2 p)}$ \cite{BenderFaKu19}.

The second lower bound considers the possibility of a randomized
emptying algorithm for the multi-processor cup game that provides an
\defn{unending guarantee}, meaning that for any (arbitrarily large)
time step $t \in \mathbb{N}$, the algorithm gives an
at-least-constant-probability bound of $o(\log n)$ on the backlog at
time $t$. The randomized algorithms of Bender et
al. \cite{BenderFaKu19} all provide unending guarantees with the use
of resource augmentation. We show that, without resource augmentation,
unending guarantees are impossible for any ``greedy-like'' emptying
strategy, including the known variants of the smoothed greedy
algorithm.

\vspace{.3 cm} \noindent \textbf{Related work and problem history. }The problem of
designing and analyzing low-backlog algorithms for the $p$-processor
cup game was first discussed in 1969 by Liu \cite{Liu69}.  Much of the
work on the problem (and especially early work on the problem) adds
the additional constraint that water arrives at \emph{fixed rates},
meaning that each cup $j$ receives the same fixed amount of water
$f_j$ during each step, with $\sum_j f_j = p$
\cite{BaruahCoPl96,GkasieniecKl17,BaruahGe95,LitmanMo11,LitmanMo05,MoirRa99,BarNi02,GuanYi12,Liu69,
  LiuLa73}.

In 1987, Dietz and Sleator analyzed the greedy algorithm
\cite{DietzSl87} for a relaxed version of the cup game in which the
emptier is permitted to remove all of the water from a cup -- this is
sometimes called the \defn{cup flushing game}. Adler et
al. \cite{AdlerBeFr03} later improved upon this to achieve a bound of
$O(\log n)$ for single-processor cup game without the use of resource
augmentation.

Initial efforts to extend the result of Adler et al. to $p > 1$
required either that the emptier be able to remove multiple units of
water from a single cup \cite{AdlerBeFr03} (in which case the main task at hand becomes to prove lower bound constructions), or that the emptier be
able to see (at least partially) into the future \cite{LitmanMo09,
  GuanYi12}. In recent work, Bender et al. \cite{BenderFaKu19} showed
how to bound the backlog using only resource augmentation. Presenting
a bound without the use of resource augmentation (or
semi-clairvoyance) has until now remained open.

The effort to design randomized algorithm that achieve better backlog
than their deterministic counterparts began with work by Dietz and Raman in 1991 \cite{DietzRa91} that
achieved a bound of $O(\log \log n)$ for the cup flushing game. Dietz
and Raman posed as an open question whether a simpler randomized
algorithm might exist. The smoothed greedy algorithm later introduced
by Bender et al. \cite{BenderFaKu19} gives one solution to this
problem, and also allows for backlog guarantees to be proven in the
presense of only a small amount of resource augmentation. This raised
the question of whether backlog bounds could be achieved without the
use of any resource augmentation, which is addressed in this paper.

In the context of packet-switching, Bar-Noy et
al.~\cite{Bar-NoyFrLa02} considered a variant of the single-processor
cup game in which the filler is permitted to place arbitrarily large
amounts of water into the cups at each step, subject only to the
constraint that each cup receives an integer amount. Rather than
proving absolute bounds on backlog, which would be impossible, the
authors prove that the greedy algorithm is $O(\log n)$-competitive
with the offline optimal algorithm. Moreover, they show that no online
algorithm, including randomized algorithms, can do better. Several
similar results were also discovered concurrently by Fleischer and
Koga~\cite{FleischerKo04}. Subsequent work has considered weaker
adversaries~\cite{DamaschkeZh05}.

Researchers have also studied variants of the multi-processor cup game
in which the cups are nodes on a graph, and in which further
constraints are placed on the emptier based on the structure of the
graph~\cite{BodlaenderHuKu12,BenderFeKr15,BodlaenderHuWo11,ChrobakCsIm01}. This
can be used to model multiprocessor scheduling with conflicts between
tasks~\cite{BodlaenderHuWo11,ChrobakCsIm01} as well as some problems
in sensor radio networks~\cite{BenderFeKr15}.

\vspace{.3 cm} \noindent \textbf{Paper outline. }The rest of the paper
proceeds as follows. In Section \ref{sec:prelims} we establish
conventions and notations for discussing the $p$-processor cup
game. In Section \ref{sec:technical} we give a technical overview of
the main results in the paper. In Section \ref{sec:greedy}, we present
the full analysis of the greedy algorithm. Finally, in Section
\ref{sec:smoothedgreedy}, we analyze the smoothed greedy algorithm,
and prove lower bounds against constant backlog and unending
guarantees.

\section{Preliminaries}\label{sec:prelims}

In this section, we briefly establish conventions and notation for
discussing the $p$-processor cup game on $n$ cups.  A \defn{cup state}
$S$ assigns a non-negative fill to each of the cups $1, \ldots,
n$. The fill of the $i$-th fullest cup, also known as the
\defn{rank-$i$ cup}, in a state $S$ is denoted by $S(i)$. Note that
$S(i)$ may differ from the fill of cup $i$ (i.e., the cup with label
$i$). The total fill of the $i$ fullest cups is denoted by
$\tot_i(S)$, and the average fill of the $i$ fullest cups is denoted
by $\av_i(S)$. If $X \subset [n]$ is a set of cup-indices, then we
define $\av_X(S)$ to denote the average fill of the cups $X$ in state
$S$; and similarly we define $\tot_X(S)$ to be the total fill of the
cups $X$ in state $S$.

We use $S_t$ to denote the cup state after the $t$-th step in the
game, with $S_0$ representing the initial (empty) state, and we use
$\mathcal{S}_t$ to denote the set of all the possible states that
$S_t$ could take.

\section{Technical Overview}\label{sec:technical}

In this section, we present technical overviews of the paper's main
results. Section \ref{sec:greedy_overview} discusses the analysis of
the greedy algorithm for the $p$-processor cup game. Section
\ref{sec:randomized_overview} then discusses the analysis of the
(randomized) smoothed greedy algorithm. Finally Section
\ref{sec:lowerbounds_overview} briefly describes lower-bound
constructions for the filler.

\subsection{Analyzing Greedy}\label{sec:greedy_overview}

Theorem \ref{thm:lognbacklog_pre} establishes that if the emptier in the
$p$-processor cup game follows the greedy strategy, then the backlog
never exceeds $O(\log n)$, and that this is tight for $n \ge 2p$.

\begin{theorem}
The greedy algorithm for the $p$-processor cup game on $n$ cups
achieves backlog $O(\log n)$. Moreover, there is an adaptive strategy
for the filler that achieves backlog $\Omega(\log (n - p))$ against
any emptying strategy.
\label{thm:lognbacklog_pre}
\end{theorem}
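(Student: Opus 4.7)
The plan is to prove the two halves of Theorem \ref{thm:lognbacklog_pre} separately: an invariant argument for the $O(\log n)$ upper bound, and an adaptive amplification for the $\Omega(\log(n-p))$ lower bound.

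For the upper bound, I would maintain a family of invariants $\{I_j\}_{j=1}^n$, one per cup rank, that together control the fill profile under greedy emptying. A natural template is $I_j : \av_j(S_t) \leq c\log(n/j) + c'$ for universal constants $c, c' > 0$, which at $j=1$ specializes to $S_t(1) = O(\log n)$. I would prove by induction on $t$ that every $I_j$ is preserved by a single greedy step. The inductive step splits on $j$. When $j > p$, the emptier removes exactly $p$ units from cups inside the top $j$ while the filler adds at most $p$ units in total, so $\tot_j$ cannot grow unless a lower-ranked cup climbs into the top $j$; by the sorted-fill structure, such an entrance forces the previous $I_{j+1}$ to have been near-tight, which supplies exactly the slack needed to reestablish $I_j$. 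When $j \le p$, the emptier removes only $j$ units from the top $j$, so in principle the filler can grow $\tot_j$ by up to $p - j$ per step. The crucial coupling is that under greedy play this excess must be paid for by the simultaneous draining of cups of ranks $j+1, \ldots, p$; this compensating decrease has to be built into the joint structure of the $I_j$'s. The main obstacle---and what the paper flags as ``intricate''---is calibrating the $I_j$'s so that simultaneously (a) $I_1$ yields backlog $O(\log n)$, and (b) the full system is closed under every legal filler move. I expect this balancing to require splitting each $I_j$ into a pair of coupled inequalities, one on $\tot_j$ and one on a difference $\tot_j - \tot_{j'}$ for some $j' > j$, and then arguing the two components shift in tandem.

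For the lower bound, I would use an adaptive amplification construction in the spirit of the classical single-processor lower bound, restricted to the $n - p$ cups that the emptier cannot simultaneously keep empty. The filler proceeds in phases, maintaining a set $A$ of currently ``active'' cups sharing a common floor fill $h$. Within a phase, the filler repeatedly spreads its $p$ units per step over the lightest cups of $A$, using adaptivity to avoid cups the emptier has just drained. After $\Theta(|A|)$ steps, the $|A|/2$ heaviest cups of $A$ sit at fill at least $h + \Omega(1)$, and these form the active set for the next phase. Starting with $|A| = n - p$ and iterating $\Theta(\log(n-p))$ times produces a single cup of fill $\Omega(\log(n-p))$. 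The $p$ cups excluded from $A$ absorb the emptier's excess capacity each round; the filler's adaptivity ensures that its $p$ units always land in cups that actually contribute to future growth. This is exactly why the bound degrades from $\log n$ to $\log(n-p)$, and it matches the upper bound up to the $O(1)$ constant whenever $n \geq 2p$.
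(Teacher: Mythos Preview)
Your upper-bound plan has a structural gap that the paper's appendix pinpoints exactly. The invariant family $\av_j(S_t) \le c\log(n/j) + c'$ is the natural generalization of the $p=1$ proof, but for $p>1$ it does not close under a greedy step. The failure is in your Case-2 analogue: when one of the top-$p$ cups in $I_t$ (say the rank-$2$ cup) drops below rank $k$ after emptying, you would need that cup's fill to be at least $\av_{k+1}(I_t)$ in order to conclude $\av_k(S_t) \le \av_{k+1}(I_t)$, and for ranks $2,\ldots,p$ there is no such guarantee. Your proposed patch of adjoining difference constraints $\tot_j - \tot_{j'}$ does not address this, because the obstruction is pointwise (a single cup's fill versus an average), not aggregate. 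The paper's resolution is to replace $\av_k$ by the \emph{skewed average} $f^M_k(S) = \max\big((\tot_{p+k}(S) - pM)/k,\,0\big)$ where $M = \sup_t \av_p(S_t)$; the point is that each top-$p$ cup then automatically has fill at least $f^M_{k}$, which restores the rank-drop argument. The catch is that the resulting bound $f^M_1(S_t) \le O(\log n)$ controls only $\tot_{p+1}(S_t) - pM$, not the backlog directly, so a separate argument (analyzing record-setting steps to show $S_t(1) - S_t(p+1) \le O(\log p)$, and from that deducing $M \le O(\log n)$) is still required. None of this machinery is visible in your sketch.

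Your lower-bound plan also falls short of $\Omega(\log(n-p))$. The halving construction you describe, even granting that each phase raises the floor by $\Omega(1)$, must terminate once $|A| \le p$: at that point the emptier can remove one unit from every cup in $A$ each step and perfectly cancel the filler. Starting from $|A| = n-p$ you therefore get at most $\log\big((n-p)/p\big)$ phases, i.e.\ backlog $\Omega(\log((n-p)/p))$, which is the previously known bound and is vacuous when $n \le 2p$. The paper instead dedicates $p-1$ units per step to an anchor set $\{1,\ldots,p-1\}$ and runs the single-processor lower bound on the remaining $n-p+1$ cups with the last unit; the emptier must either let that single-processor game reach backlog $\Omega(\log(n-p))$, or neglect an anchor cup, in which case the anchor total grows without bound. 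This reduction to $p=1$ is precisely what buys the full $\log(n-p)$.
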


Here, we focus on proving the upper bound of $O(\log n)$, and we defer
the discussion of the lower bound to Subsection
\ref{sec:lowerbounds_overview}.

\vspace{.3 cm}

\noindent \textbf{Proving a bound of $O(p + \log n)$. } We begin by
proving a weaker bound of $O(p + \log n)$ on the backlog.

Define the \defn{$N$-truncated} $p$-processor cup game to be the
standard $p$-processor cup game, except with one additional
restriction on the filler: the filler is never permitted to increase
the fill of any cup to be larger than $N$. We prove that, if $N$ is a
sufficiently large constant multiple of $p + \log n$, then the greedy
emptying algorithm prevents the backlog from ever exceeding $N - 2$;
it follows that the maximum backlog in the standard (non-truncated)
game is also at most $N - 2 \le O(p + \log n)$.

The key to analyzing the $N$-truncated cup game is to examine the
\defn{$N$-skewed averages} of cup states. These are defined by,
\begin{equation}
f^N_k(S) = \max\left(\frac{\tot_{p + k}(S) - p \cdot N}{k}, 0\right),
\label{eq:fk}
\end{equation}
for $k = 1, 2, \ldots, n - p$. Combinatorially, one should think of
the $N$-skewed average $f^N_k(S)$ of a state $S$ as follows: Take the
total amount of water in the $p + k$ fullest cups; move as much of
that water to the $p$ fullest cups as possible, without allowing the
average fill of the $p$ fullest cups to exceed $N$; then distribute the
remaining water evenly among the cups of rank $p + 1, \ldots, p +
k$. The amount of water in each of the cups of rank $p + 1, \ldots, p
+ k$ is the skewed average $f^N_k(S)$.

The main technical step in the proof is to prove a system of invariants
involving the $n$-skewed averages.
\begin{lemma}
Supposed the emptier follows the greedy algorithm for the the
$N$-truncated $p$-processor cup game on $n$ cups. Then, for all steps
$t \ge 0$ and for all $k \in [n - p]$,
$$f^N_k(S_t) \le 1 + \left(\frac{1}{k + 1} + \frac{1}{k + 2} + \cdots + \frac{1}{n}\right).$$
\label{lem:truncatedinvariant_technical_overview}
\end{lemma}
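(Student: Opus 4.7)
The plan is to induct on the time step $t$. The base case $t=0$ is immediate: $S_0$ is empty, so $\tot_{p+k}(S_0) = 0 \le pN$, giving $f^N_k(S_0) = 0$, which satisfies the bound for every $k$. For the inductive step, assume the invariants hold at time $t-1$; let $S = S_{t-1}$, let $S'$ be the state after the filler's move, and let $S_t$ be the state after the greedy emptier's move. Fix $k \in [n-p]$ and assume $\tot_{p+k}(S_t) > pN$ (otherwise $f^N_k(S_t) = 0$ and we are done).

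The filler step is straightforward: since the filler adds at most $p$ total units with at most $1$ per cup, we have $\tot_{p+k}(S') \le \tot_{p+k}(S) + p$. The technical work lies in the emptier step, where rank reshuffling can cause cups that were previously outside the top $p+k$ of $S'$ to jump into the top $p+k$ of $S_t$. In the ``clean'' subcase where all top-$p$ cups of $S'$ have fill $\ge 1$ and no such rank-swap occurs, we have $\tot_{p+k}(S_t) = \tot_{p+k}(S') - p$, so the filler's $+p$ and the emptier's $-p$ cancel and the invariant at $t-1$ transfers to $t$ unchanged. To accommodate rank-swaps, I would let $J^*$ denote the top $p+k$ cups of $S_t$ and let $\ell$ be the number of top-$p$ cups of $S'$ that are not in $J^*$; since the emptier removes $1$ from each top-$p$ cup and $|J^* \cup (\text{top } p \text{ of } S')| = p+k+\ell$, a direct calculation gives
\[
\tot_{p+k}(S_t) \;\le\; \tot_{p+k+\ell}(S') - p \;\le\; \tot_{p+k+\ell}(S).
\]

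Writing $g_k := 1 + H_n - H_k$ with $H_m := \sum_{i=1}^m 1/i$, the inductive hypothesis gives $\tot_{p+k+\ell}(S) \le pN + (k+\ell)\,g_{k+\ell}$, which is strictly weaker than the target $pN + k\,g_k$ whenever $\ell \ge 1$. The main obstacle, and the heart of the argument, will be closing this gap by exploiting the structural fact that a rank-swap of $\ell$ cups forces the corresponding $\ell$ top-$p$ cups of $S'$ to have fills only marginally above $S_t(p+k)$: otherwise, subtracting $1$ could not push them below rank $p+k$ in $S_t$. This should force $\tot_{p+k+\ell}(S) - \tot_{p+k}(S)$ to be proportionally small, and I expect the telescoping identity $g_{k+\ell} - g_k = -\sum_{i=k+1}^{k+\ell} 1/i$ to supply exactly the right amount of cancellation when combined with the inductive hypothesis applied at each intermediate $k' \in [k, k+\ell]$. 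The edge case where some top-$p$ cup of $S'$ has fill less than $1$ must be handled separately, likely by a direct argument showing that such ``low'' top cups cannot coexist with the hypothesis $\tot_{p+k}(S_t) > pN$ in the relevant regime.
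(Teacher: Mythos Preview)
Your induction setup, base case, and identification of the rank-swap count $\ell$ all match the paper. The genuine gap is in how you propose to close the shortfall when $\ell \ge 1$.

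Your inequality $\tot_{p+k}(S_t) \le \tot_{p+k+\ell}(S') - p$ is correct but wasteful: it comes from the crude lower bound $\tot_{P\setminus J^*}(S') \ge \ell$ (each swapped-out cup has fill $\ge 1$). You then correctly observe that applying the inductive hypothesis at $k+\ell$ leaves a gap of $(k+\ell)g_{k+\ell} - kg_k$. But your proposed fix---that the swapped-out cups have fill ``only marginally above $S_t(p+k)$''---is an \emph{upper} bound on $\tot_{P\setminus J^*}(S')$, and an upper bound there makes your inequality \emph{weaker}, not stronger. The telescoping over intermediate $k'$ is too vague to evaluate, and I do not see how it recovers the loss without a sharper lower bound on the swapped-out cups.

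The missing idea is the one place the $N$-truncation hypothesis is actually used. Work instead in the intermediate state $I'_t$ obtained from $S'$ by removing $1$ from each top-$p$ cup that \emph{does} stay in $J^*$. Because the game is $N$-truncated, $\tot_p(I'_t) \le pN$, and hence
\[
f^N_{k+\ell}(I'_t) \;=\; \frac{\tot_{p+k+\ell}(I'_t) - pN}{k+\ell} \;\le\; \frac{\tot_{p+k+\ell}(I'_t) - \tot_p(I'_t)}{k+\ell},
\]
which is the average fill of the cups with ranks $p+1,\ldots,p+k+\ell$ in $I'_t$. Each swapped-out cup (still at rank $\le p$ in $I'_t$) therefore has fill at least $f^N_{k+\ell}(I'_t)$. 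This is a \emph{lower} bound of the right strength: subtracting these $\ell$ cups from $\tot_{p+k+\ell}(I'_t)$ gives $f^N_k(S_t) \le f^N_{k+\ell}(I'_t)$ directly. Combined with $f^N_{k+\ell}(I'_t) \le f^N_{k+\ell}(S_{t-1}) + \frac{\ell}{k+\ell}$ and the inductive hypothesis, the harmonic bound follows since $\frac{\ell}{k+\ell} \le \sum_{i=k+1}^{k+\ell}\frac{1}{i}$. Your sketch never invokes truncation except through the inductive hypothesis, and that is precisely the missing ingredient.
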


Using the case of $k = 1$, Lemma
\ref{lem:truncatedinvariant_technical_overview} implies that
$f^N_1(S_t) \le O(\log n)$, and thus that,
\begin{equation}
  \tot_{p + 1}(S_t) \le O(\log n) + p \cdot N,
  \label{eq:fencepost}
\end{equation}
for all steps $t \ge 0$.

To exploit \eqref{eq:fencepost}, we prove the useful combinatorial
fact that, whenever $S_t(1) > S_{t - 1}(1)$ for consecutive steps $t
-1$ and $t$, the old state $S_{t - 1}$ necessarily satisfies $\tot_{p
  + 1}(S_{t - 1}) \ge (p + 1) \cdot (S_t(1) - 1)$. It follows that if
any state $S_t$ ever achieves backlog $N - 2$ or greater, then
$$\tot_{p + 1}(S_{t - 1}) \ge (p + 1) \cdot (N - 3) \ge (N - 3 - 3p) +
p \cdot N,$$ which contradicts \eqref{eq:fencepost} when $N$ is a
sufficiently large constant multiple of $p + \log n$. Assuming Lemma
\ref{lem:truncatedinvariant_technical_overview}, it follows that
backlog can never exceed $O(p + \log n)$.

\begin{proof}[Proof Sketch for Lemma \ref{lem:truncatedinvariant_technical_overview}]
We prove the result by induction on $t$. Consider $t > 0$, and suppose
as an inductive hypothesis that the result holds for $t - 1$. 

Let $I_t$ denote the intermediate state between $S_{t- 1}$ and $S_t$
after the filler has placed water into cups, but the emptier has not
yet removed water. To get from state $I_t$ to state $S_t$, one unit of
water is removed from each of the $p$ fullest cups\footnote{For
  simplicity, we are ignoring the case here in which one or more of
  the $p$ fullest cups contains less than $1$ unit of water in state
  $I_t$.}. Of these $p$ cups, let $A$ denote the subset that remain
among the $p + k$ fullest cups in $S_t$, and let $B$ denote the subset
that do not.

Let $I'_t$ denote an intermediate state between $I_t$ and $S_t$ at
which one unit of water has been removed from each cup in $A$, but not
from each cup in $B$. The $p + k$ fullest cups in $S_t$ are the $p + k
+ |B|$ fullest cups in $I'_t$, except without the cups in $B$, meaning
that,
\begin{equation}
  \tot_{p + k}(S_t) = \tot_{p + k + |B|}(I'_t) - \tot_{B}(I'_t).
  \label{eq:totB}
\end{equation}

Rewriting this in terms of $N$-skewed averages yields,
\begin{equation}
  k \cdot f^N_{k}(S_t) = (k + |B|) \cdot f^N_{k + |B|}(I'_t) - \tot_{B}(I'_t),
  \label{eq:tottoav}
\end{equation}
unless one of $f^N_k(S_t)$ or $f^N_{k + |B|}(I'_t)$ is zero. A key
property of the $N$-skewed average $f^N_{k + |B|}(I'_t)$ is that, because of
the $N$-truncation requirement,
\begin{align*}
  f^N_{k + |B|}(I'_t) &= \frac{\tot_{p + k + |B|}(I'_t) - p \cdot N}{k + |B|} \\ &\le
\frac{\tot_{p + k + |B|}(I'_t) - \tot_{p}(I'_t)}{k + |B|},\end{align*} unless $f^N_k(I'_t) =
0$. The right-hand side can be combinatorially interpreted as the
average fill of the cups with ranks $p + 1, p + 2, \ldots, p + k + |B|$ in
state $I'_t$. Since the cups in $B$ have ranks in $[p]$, each cup in
$B$ therefore has fill at least $f^N_{k + |B|}(I'_t)$. Applying this to
\eqref{eq:tottoav},
\begin{equation*}
  k \cdot f^N_{k}(S_t) = (k + |B|) \cdot f^N_{k + |B|}(I'_t) - |B| \cdot f^N_{k + |B|}(I'_t),
\end{equation*}
meaning that $f^N_k(S_t) \le f^N_{k + |B|}(I'_t)$.

The $p + k + |B|$ fullest cups in state $I'_t$ contain, in total, at
most $p - |A| = |B|$ more water in state $I'_t$ than the $p + k + |B|$
fullest cups contained in state $S_{t - 1}$. Thus we have,
$$f^N_k(S_t) \le f^N_{k + |B|}(I'_t) \le f^N_{k + |B|}(S_{t - 1}) + \frac{|B|}{k +
  |B|},$$
which by the inductive hypothesis is at most,
\begin{equation*}
  \begin{split}
    & 1 + \left(\frac{1}{k + |B| + 1} + \frac{1}{k + |B| + 2} + \cdots + \frac{1}{n - p}\right) + \frac{|B|}{k + |B|} \\ & \le 1 + \left(\frac{1}{k + 1} + \frac{1}{k + 2} + \cdots + \frac{1}{n - p}\right).
  \end{split}
\end{equation*}
\end{proof}

\vspace{.3 cm}

\noindent \textbf{Proving the full bound of $O(\log n)$. }The analysis
described above requires two major changes in order to achieve the
stronger bound of $O(\log n)$.

The first change is to analyze the $M$-skewed average of the
\emph{non-truncated} cup game, where
\begin{equation*}
M = \sup_{t \in \mathbb{N}, \, S_t \in \mathcal{S}_t} \av_p(S_t).
\end{equation*}
Since we have already proven that no cup ever has fill greater than
$O(\log n + p)$, we know that $M$ is finite.

Proving a result analogous to Lemma
\ref{lem:truncatedinvariant_technical_overview} is made difficult by
the fact that the cup game is not necessarily $M$-truncated. Moreover,
if we define $I'_t$ as in the proof of Lemma
\ref{lem:truncatedinvariant_technical_overview}, then $\tot_p(I'_t)$
could potentially be as large as $p \cdot M + |B|$ (if each cup $B$
contains $M + 1$ units of water, and each cup not in $B$ contains $M$
units of water); the proof of the lemma, on the other hand, requires $\tot_p(I'_t) \le p \cdot M$. Nonetheless, by exploiting the combinatorial properties
of the quantity $M$, we can establish that for all $1 \le k \le n -
p$, and for all states $S_t$,
\begin{equation}
  f^M_k(S_t) \le 1 + \left(\frac{1}{k + 1} + \frac{1}{k + 2} + \cdots + \frac{1}{n}\right).
  \label{eq:fbound}
\end{equation}

The second change to the analysis is in the use of
\eqref{eq:fbound}. We show that, in order to achieve a bound on
backlog of $O(\log n)$, it suffices to bound $M$ by $O(\log n)$. Thus
the challenge becomes to use \eqref{eq:fbound} in order to bound $M
\le O(\log n)$.

By \eqref{eq:fbound}, with $k = 1$, we know that
\begin{equation}
  \tot_{p + 1}(S_t) \le p \cdot M + O(\log n),
  \label{eq:fkequals1}
\end{equation}
for all steps $t$. We will show that, in order for this to always be
true, we must have $M \le O(\log n)$.

Call a step $t$ \defn{record-setting} if $\tot_{p}(S_t) > \tot_{p}(S_{t'})$ for all $t' < t$. The key to completing the proof is
to establish that, whenever a step $t$ is record-setting, the fills
$S_t(1), S_t(2), \ldots, S_t(p + 1)$ are all very close to one
another. In particular, we establish that,
\begin{equation}
  S_t(1) - S_t(p + 1) \le O(\log n).
  \label{eq:closetogethercups}
\end{equation}

By the definition of $M$, for any $\epsilon > 0$, there must exist a
sequence of step states $S_1, \ldots, S_t$ such that $S_t$ is a
record-setting state that achieves $\tot_p(S_t) = p \cdot M -
\epsilon$. It follows by \eqref{eq:closetogethercups} that $\tot_{p +
  1}(S_t) \ge (p + 1) \cdot M - O(\log n)$. In order to avoid a
contradiction with \eqref{eq:fkequals1}, $M$ is therefore forced to be at most
$O(\log n)$.

The final component to the proof is showing
\eqref{eq:closetogethercups}. Using the fact that $\tot_{p + 1}(S_t) >
\tot_{p + 1}(S_{t'})$ for all $t' < t$, we argue that for all $j \in [p]$,
\begin{equation}
  \frac{S_t(j + 1) + \cdots + S_t(p + 1)}{p + 1 - j} \ge S_t(j) - 1.
  \label{eq:singleconstraint}
\end{equation}
In
particular, this is because whichever step $t' \le t$ is the first
step to have had $S_{t'}(1) \ge S_{t}(1), S_{t'}(2) \ge S_{t}(2), \ldots, S_{t'}(j) \ge S_{t}(j)$
can be also seen to satisfy,
$$S_{t'}(j + 1), \ldots, S_{t'}(p + 1) \ge S_{t'}(j) - 1.$$ In order
so that $\tot_{p + 1}(S_t) \ge \tot_{p + 1}(S_{t'})$ (that way $t$ can
be a record-setting step), it follows that \eqref{eq:singleconstraint}
must hold.

The inequality \eqref{eq:singleconstraint} induces a system of
constraints on the values $S_t(1), S_t(2), \ldots, S_t(p + 1)$, which
together can be shown to enforce \eqref{eq:closetogethercups},
completing the analysis.

\subsection{Analyzing Smoothed Greedy}\label{sec:randomized_overview}

In Section \ref{sec:smoothedgreedy}, we analyze the \defn{smoothed
  greedy algorithm} for the $p$-processor cup game on $n$ cups
\cite{BenderFaKu19}. At the beginning of algorithm, the emptier
selects independent values $r_j$ uniformly at random from the interval
$[0, 1]$ for each $j = 1, \ldots, n$. Prior to the first step of the
game, the emptier inserts $r_j$ water into each cup $j$. The emptier's
strategy at the end of each step is then to simply remove $1$ unit of
water from each of the $p$ fullest cups. If, however, one or more of
the $p$ fullest cups contains less than $1$ unit of water, then the
emptier does not remove from those cups. This ensures the important
property that the fractional amount of water (i.e., the amount of
water modulo $1$) in each cup $j$ after a step $t$, is a function only
of the initial random offset $r_j$ and the filler's actions in the
first $t$ steps.

It was shown by \cite{BenderFaKu19} that, if the emptier has $\epsilon \ge
\frac{1}{\polylog n}$ resource augmentation, then with high
probability in $n$ the smoothed greedy algorithm achieves backlog
$O(\log \log n)$ in each step of the single-processor cup game.

We present a new analysis of the smoothed greedy algorithm that
applies without the use of resource augmentation, and to $p >
1$. Prior to this result, no sub-logarithmic bounds were known for any
randomized algorithm without resource augmentation, even in the case
of $p = 1$.

\begin{theorem}
  Let $c$ be at least a sufficiently large constant. Then with
  probability at least $1 - \exp\left(-\log^c
  n\right)$, the smoothed greedy algorithm achieves
  backlog $O(\log p + c \log \log n)$ after all of the first
  $\exp(\log^c n)$ steps.
  \label{thm:randomized_pre}
\end{theorem}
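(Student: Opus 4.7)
The plan is to combine the deterministic backlog bound from Theorem \ref{thm:lognbacklog_pre} with a doubly-exponential ``level shrinkage'' argument driven by the random offsets. The starting observation, already highlighted in the algorithm description, is that the fractional part of cup $j$'s fill after any step $t$ equals $r_j$ plus the filler's cumulative additions to cup $j$, reduced modulo $1$: the emptier only ever subtracts integer amounts ($0$ or $1$), so the fractional part is preserved by emptying. Since the filler is oblivious, its additions are fixed in distribution, making the fractional parts $\phi_1(t), \dots, \phi_n(t)$ mutually independent and each marginally uniform on $[0,1)$ throughout the game.

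Fix a target step $t$, let $L_\ell = \{j : f_j(t) \ge \ell\}$, and let $n_\ell = |L_\ell|$. The key claim I would prove is a doubly-exponential shrinkage: for some base $\ell_0 = O(\log p)$ and a constant $c_0$, with very high probability $n_{\ell_0 + i} \le n^{2^{-i/c_0}}$ for all $i \le c_0 \log \log n$. The base level absorbs the intrinsic $O(\log p)$ cost of the multi-processor setting, in which the filler can keep the top $p$ cups at comparable fills and thus defeat any per-cup randomization below level $\ell_0$. Above $\ell_0$, entering $L_{\ell+1}$ requires the filler to push a cup's fill across an integer boundary; conditional on a given filler strategy, the event that cup $j$ reaches $L_{\ell+1}$ by step $t$ is governed by $\phi_j$ through a margin whose width is proportional to the amount of water allotted to cup $j$. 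I would couple the restricted smoothed-greedy dynamics on $L_\ell$ to a virtual greedy sub-game on $n_\ell$ cups with independent uniform fractional parts, and then Chernoff-bound the number of cups promoted from $L_\ell$ into $L_{\ell + c_0}$ by an $n_\ell^{-\Omega(1)}$ fraction, giving the square-root-per-$c_0$-levels shrinkage.

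Iterating the recursion $i = \Theta(\log \log n)$ times drives $n_{\ell_0 + i}$ below $1$, so no cup attains fill exceeding $\ell_0 + i = O(\log p + c \log \log n)$ at step $t$. Taking a union bound over the $\exp(\log^c n)$ target steps and the $O(\log \log n)$ levels boosts the per-event failure probability to the global $\exp(-\log^c n)$ stated in the theorem, provided the single-level concentration gives $\exp(-\log^{c'} n)$ for some $c' > c$; tuning $c$ upward secures this.

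The main obstacle is the inductive shrinkage step: rigorously showing that, conditioned on the history up through the integer-part evolution at level $\ell$, the promotion events at level $\ell + c_0$ are dominated by independent-looking random variables amenable to Chernoff bounds. This requires a careful coupling between the actual smoothed-greedy game and an idealized process in which the emptier's tie-breaking uses only the uniform fractional parts, with the coupling maintained uniformly across all $t \le \exp(\log^c n)$. A related nuisance is the ``do not remove from cups with less than $1$ unit'' modification, which occasionally breaks the fractional-part invariance; I would handle this by arguing that such breaks occur only when the cup in question already has fill $O(1)$, so they contribute only an $O(1)$ additive error absorbed into the final big-$O$ bound.
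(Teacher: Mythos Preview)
Your high-level outline (level-by-level shrinkage, Chernoff on threshold crossings, union bound over steps and levels, then apply the deterministic $O(\log n)$ bound on the surviving sub-game) matches the paper's architecture. But the step you flag as ``the main obstacle'' is exactly where your proposal has a genuine gap, and the paper's resolution is not the coupling you sketch. The difficulty is that the level set $L_\ell$ (your notation) is itself a function of the random offsets $r_j$: an adversarial filler can arrange that the cups which reach level $\ell$ are disproportionately those whose $r_j$ are small, so conditioning on membership in $L_\ell$ destroys the uniform-and-independent structure of the fractional parts. Your proposed fix---couple to a virtual sub-game with fresh independent fractional parts---does not work without further argument, because there is no reason the coupling error stays small; the filler may be actively selecting against you.

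The paper circumvents this by never conditioning on the random level set. Instead it introduces the \emph{level-$i$ threshold bolus} on a step interval $[t_0,t_1]$ and proves two deterministic facts: (i) a large bolus forces all the relevant threshold crossings to occur in a set $S$ of at most $2A^{(i)}(t_0-1)$ cups (Lemma~\ref{lem:cupworkingset}, using the gap of $2$ between levels so that initially-inactive cups are strictly losing investments), and (ii) almost all of the filler's water in $[t_0,t_1]$ must land in $S$ (Lemma~\ref{lem:lotsofwaterinS}). Crucially, $S$ is determined by the oblivious filler's water placements alone, \emph{not} by the $r_j$'s. One can therefore apply a Chernoff bound to the threshold crossings inside the fixed set $S$, giving $m_{i+1}\le O(m_i^{0.75})$ with failure probability $\exp(-m_i^{\Omega(1)})$. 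This is the missing idea in your proposal. A smaller point: the ``do not empty below $1$'' rule is precisely what \emph{preserves} the fractional-part invariance (it prevents a cup at fill $0.7$ from being zeroed out), so your last paragraph has the role of that rule backwards.
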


The proof of Theorem \ref{thm:randomized_pre} analyzes $\Theta(\log
\log n)$ cup games concurrently, where the participant cups in the
level-$i$ cup game, also known as the \defn{level-$i$ active cups},
are the cups containing fill greater or equal to $2(i - 1)$. We denote
the \defn{number of level-$i$ active cups after step $t$ by
  $A^{(i)}(t)$}. If $m_i$ is the maximum number of level-$i$ active
cups during any of the first $2^{\polylog(n)}$ steps, then we wish to
show that $m_{i + 1} \le m_i^{.75}$ (unless $m_i$ is already quite
small).

Define $h_j(t)$ to be the height (i.e. fill) of cup $j$ after step
$t$. Define the \defn{level-$i$ fill} $h^{(i)}_j(t)$ of cup $j$ after
step $t$ to be $\max(h_j(t) - 2 (i - 1), 0)$. The set of level-$(i +
1)$ active cups after step $t$ is precisely the set of cups for which
$h^{(i)}_j(t) > 2$.

Rather than bounding the number of level-$(i + 1)$ active cups
directly, we instead bound a larger quantity that we call the
level-$i$ integer fill. The \defn{level-$i$ integer fill} after step
$t$, denoted by $T^{(i)}(t)$, is given by
$$T^{(i)}(t) = \sum_{j = 1}^n \max\left(\lfloor h^{(i)}_j(t) - 1
\rfloor, 0 \right).$$ We say a cup \defn{crosses a level-$i$
  threshold} $s$ at step $t$, whenever the amount of water $f$ placed
by the filler into cup $j$ during step $t$ satisfies $h^{(i)}_j(t - 1)
< s \le h^{(i)}_j(t - 1) + f$ for some integer $s \ge 2$. One can
think of the level-$i$ integer fill as counting the number of
level-$i$ threshold crossings that have not yet been undone by the
emptier.

The level-$i$ integer fill is at least as large as the number of
level-$(i + 1)$ active cups, since each active cup contributes at
least $1$ to the integer fill. One of the difficulties of bounding the
number of level-$(i + 1)$ active cups directly is that there may be
long series of steps during which the emptier is focused on very full
cups, and during which the filler is able to increase the number of
level-$(i + 1)$ active cups dramatically. In contrast, for any step
$t$ during which there are at least $p$ level-$(i + 1)$-active cups,
the emptier's actions reduce the level-$i$ integer fill by exactly
$p$. Using this fact, we show that the only way for the filler to
achieve large level-$i$ integer fill $\ell$ after a step $t_1$ is
to perform at least $p(t_1 - t_0 + 1) + \Omega(\ell)$ level-$i$
threshold crossings during the steps $t_0, \ldots, t_1$ for some step $t_0
\le t_1$.

If the emptier crosses $p(t_1 - t_0 + 1) + s$ level-$i$ thresholds
during a sequence of steps $t_0, \ldots, t_1$, then we say that
$\max(s, 0)$ is the \defn{level-$i$ threshold bolus} during the step
interval. In order to prove Theorem \ref{thm:randomized_pre}, the
challenge becomes to bound the level-$i$ threshold bolus for all step
intervals. Specifically, if $m_i$ is the maximum number of level-$i$
active cups over all steps, then we wish to bound the maximum
level-$i$ threshold bolus of any step interval by at most
$O(m_i^{.75})$ (unless $m_i$ is already very small).

The initial offsets $r_j$ placed into each cup $j$ serve to randomize
when threshold crossings (at any level) occur in each cup
$j$. As a consequence, one can show that for any fixed set $S$ of
cups, and for any fixed step interval $t_0, \ldots, t_1$, the number
of threshold crossings (at any level) in cups $S$ during the
step interval is, with (very) high probability in $|S|$, at most $p(t_1 - t_0
+ 1) + O(|S|^{0.75})$.

A key technical difficulty, however, is that the random offsets $r_j$
partially determine which cups are active at each level. It may be,
for example, that the filler can design a strategy in order to ensure
that almost all of the level-$i$ active cups have abnormally small
random offsets $r_j$. Thus the set of level-$i$ active cups after a
given step cannot be analyzed as a fixed set $S$.

The gaps of size two between successive levels plays an important role
in resolving this issue. Consider a step interval $t_0, \ldots, t_1$,
and suppose that the filler achieves a large level-$i$ threshold bolus
during the step interval. For any cup $j$ that is not level-$i$ active
at step $t_0 - 1$, the cup requires at least $2$ units of water before
it can begin crossings level-$i$ thresholds. Such cups $j$ are, in
some sense, a poor investment for the filler, since the total number
of thresholds crossed in the cup $j$ during the step interval is
deterministically smaller than the amount of water placed into the
cup. In fact, if the filler crosses level-$i$ thresholds in more than
$A^{(i)}(t_0 - 1)$ different cups $j$ that are not initially active
after step $t_0 - 1$ (i.e., the filler makes more than $A^{(i)}(t_0 -
1)$ bad investments), then it becomes impossible for the filler to
even achieve a \emph{positive} level-$i$ threshold bolus during the
step interval. In order for the filler to achieve a large level-$i$
threshold bolus, it follows that almost all (i.e., all but $O(m_i)$
units) of the water placed by the filler during the step interval must
be in a set $S$ of at most $O(m_i)$ cups.

Now, suppose there exists a set $S$ of at most $O(m_i)$ cups into
which the filler places almost all (i.e., all but $O(m_i)$ units) of
their water during the step interval $t_0, \ldots, t_1$. Whereas the
set of level-$i$ active cups is dependent on the random offsets $r_j$,
the set $S$ is not. It follows that, if such a set $S$ exists for a
step interval $t_0, \ldots, t_1$, we can bound the number of threshold
crossings (at any level) by at most $p(t_1 - t_0 + 1) +
O(m_i^{0.75})$ with (very) high probability in $m_i$. (With a
bit of work, this includes the threshold crossings by the $O(m_i)$
units of water not in in $S$.) On the other hand, if no such set $S$
exists for the step interval, then the level-$i$ threshold bolus will
deterministically be zero.

The analysis described above can be formalized to show that, with
probability at least $1 - 2^{-\polylog(n)}$, every level $i$ for which
$m_i$ is sufficiently large in $\Omega(p \log n) + \polylog(n)$ has
the property that $m_{i + 1} \le O(m_i^{0.75})$. It follows that for
some level $i \in O(\log \log n)$, the maximum number of level-$i$
active cups will never exceed $O(p\log n) + \polylog(n)$. Using the
analysis of the \emph{deterministic} greedy algorithm, we get that
no cup $j$ will ever have level-$i$ fill greater than $O(\log p + \log
\log n)$.

\subsection{Lower Bound Constructions}
\label{sec:lowerbounds_overview}

The known lower-bound constructions for the $p$-processor cup game on
$n$ cups achieve backlog $\Omega(\log (n/p))$ \cite{BenderFaKu19}. In
the proof of Theorem \ref{thm:lognbacklog_pre} we demonstrate slightly
stronger bound of $\Omega(\log (n - p))$. Unlike the previous lower
bounds, this construction relies on the fact that the emptier has no
resource augmentation.

In order to achieve backlog $\Omega(\log (n - p))$, the filler uses
the following strategy: At each step they place $1$ unit of water in
each of cups $1,2,\ldots, p - 1$; the remaining unit of water is
distributed among the cups $p, p + 1, \ldots, n$ as though the emptier
were playing a single-processor cup game on just those cups. The
filler follows a strategy in the simulated single-processor game such
that, if the emptier always removes only one unit of water from the
cups $p, p + 1, \ldots, n$, then the filler will achieve backlog
$\Omega(\log (n - p))$ in those cups within $O(n - p)$ steps. If, on
the other hand, the emptier chooses to at some point remove multiple
units of water from the cups $p, p + 1, \ldots, n$ in a single step
$t$, then the emptier will be forced to neglect one of the cups $1, 2,
\ldots, p - 1$, meaning that $\tot_{[p - 1]}(S_t) \ge \tot_{[p -
    1]}(S_{t - 1}) + 1$; such a step $t$ is called a \defn{growth
  step}. Whenever a growth step occurs, the filler restarts their
single-processor strategy on the cups $p, p + 1, \ldots, n$ from
scratch. Eventually, there must either have been a large number of
growth steps, in which case one of the cups $1, 2, \ldots, p - 1$ will
have large fill, or there must be a long sequence of steps in which
there are no growth steps, in which case one of the cups $p, p + 1,
\ldots, n$ must achieve fill $\Omega(\log (n - p))$.

In Section \ref{sec:randomizedlowerbounds}, we generalize this
construction to prove lower bounds against randomized emptiers with an
oblivious filler. Lower bound constructions are already known that
achieve backlog $\Omega(\log \log \frac{n}{p})$ against any emptying
strategy with probability at least $\frac{1}{\poly(n)}$
\cite{BenderFaKu19}. The focus in Section
\ref{sec:randomizedlowerbounds} is to demonstrate two other ways in
which Theorem \ref{thm:randomized_pre} is tight, and in which the
$p$-processor cup game differs fundamentally from the $(\epsilon,
\delta)$-augmented $p$-processor cup game.

First, we design an oblivious strategy for the filler that achieves
backlog $\Omega(\log \log p)$ with \emph{constant} probability. This
contrasts with the $(\epsilon, \delta)$-augmented game, for which a an
algorithm is known to achieve backlog $O(1)$ with probability $1 -
e^{- \Omega(\epsilon^2 p)}$ \cite{BenderFaKu19} under certain natural
conditions on $\epsilon, \delta$.

To achieve backlog $\Omega(\log \log p)$, the filler's strategy is
roughly as follows. They maintain an \defn{anchor set} $A$ of $p - 1$
cups, and at each step they always place $1$ unit of water in each cup
in the anchor set. On the remaining cups, the filler performs a
single-processor randomized construction such that, if the emptier
never removes from more than one non-anchor cup per step, then there
will be steps in which the filler has a reasonable probability (say,
$\frac{1}{\sqrt{p}}$) of achieving backlog $\Omega(\log \log p)$ in
some non-anchor cup $j$. For each such step, the filler will, with
some small probability $\frac{1}{\poly(p)}$ modify the anchor set $A$
by swapping cup $j$ with a random current member of $A$. In order for
the emptier to avoid the anchor set $A$ having cups with large fill
swapped into it, the emptier must occasionally empty from multiple
non-anchor cups in the same step. However, if the emptier does this
too frequently, then this alone could increase the fills of the anchor
cups to be unacceptably large. No matter the emptier's strategy, there
is at least constant probability that some anchor-cup has fill
$\Omega(\log \log p)$ at the end of the filler's construction.

Our second lower bound against randomized emptying strategies
considers the possibility of a randomized emptying algorithm for the
multi-processor cup game providing an unending guarantee, meaning that
for any (arbitrarily large) time step $t \in \mathbb{N}$, the algorithm
gives an at-least-constant-probability bound of $o(\log n)$ on the
backlog at time $t$. Unending guarantees are known to be possible with
the help of resource augmentation \cite{BenderFaKu19}. We show that,
without resource augmentation, unending guarantees are impossible for
any ``greedy-like'' emptying strategy, including both the smoothed
greedy algorithm and the multi-processor algorithm given by
\cite{BenderFaKu19}. The construction is similar to the others
outlined above.

\section{Analyzing Greedy}\label{sec:greedy}

Theorem \ref{thm:lognbacklog} establishes that if the emptier follows
the greedy strategy, then the backlog in the $p$-processor cup game
never exceeds $O(\log n)$, and that this is tight for $n \ge
2p$.

\begin{theorem}[Theorem \ref{thm:lognbacklog_pre} Restated]
The greedy algorithm for the $p$-processor cup game on $n$ cups
achieves backlog $O(\log n)$. Moreover, there is an adaptive strategy
for the filler that achieves backlog $\Omega(\log (n - p))$ against
any emptying strategy.
\label{thm:lognbacklog}
\end{theorem}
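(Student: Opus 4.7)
The plan is to prove the upper and lower bounds separately, following the roadmap laid out in Section~\ref{sec:greedy_overview}.

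\textbf{Upper bound.} I would proceed in two stages. First, I would prove the weaker bound that greedy achieves backlog $O(p + \log n)$ by analyzing the $N$-truncated game for $N$ a sufficiently large constant multiple of $p + \log n$. The central object is the family of $N$-skewed averages $f^N_k(S)$ defined in~\eqref{eq:fk}. The key step, which I would prove by induction on $t$, is Lemma~\ref{lem:truncatedinvariant_technical_overview}: for every step $t$ and every $k \in [n-p]$,
\[
f^N_k(S_t) \le 1 + \sum_{j=k+1}^{n} \tfrac{1}{j}.
\]
The induction is exactly the chain of inequalities sketched after~\eqref{eq:totB}: decompose the $p$ cups the greedy emptier touches into the sets $A$ (staying in the top $p+k$) and $B$ (dropping out), pass through the intermediate state $I'_t$, and use the $N$-truncation to guarantee $\tot_p(I'_t) \le p\cdot N$, which is what forces each cup in $B$ to have fill at least $f^N_{k+|B|}(I'_t)$. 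One then compares $I'_t$ to $S_{t-1}$ to add the filler's single round of $p$ units, picking up a $\tfrac{|B|}{k+|B|}$ term that cleanly absorbs into the harmonic sum. Setting $k=1$ gives $\tot_{p+1}(S_t) \le pN + O(\log n)$, and combining this with the combinatorial fact that any jump $S_t(1) > S_{t-1}(1)$ forces $\tot_{p+1}(S_{t-1}) \ge (p+1)(S_t(1)-1)$ yields the $O(p+\log n)$ bound.

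Second, I would sharpen $O(p + \log n)$ to $O(\log n)$. Using the first-stage bound, the quantity $M := \sup_{t,S_t} \av_p(S_t)$ is finite, so I can redo the invariant of the first stage at level $N := M$ for the \emph{non-truncated} game. The main new technical issue, as flagged in the overview, is that $\tot_p(I'_t)$ might exceed $pM$ by up to $|B|$; I would handle this by arguing via the definition of $M$ that this excess is absorbable into the combinatorial slack in the same induction. This gives~\eqref{eq:fbound} and hence $\tot_{p+1}(S_t) \le pM + O(\log n)$. To bound $M$ itself, I would focus on \emph{record-setting} steps, i.e.\ steps with $\tot_p(S_t)$ strictly exceeding all prior values, and prove~\eqref{eq:closetogethercups}: at any record-setting state, $S_t(1) - S_t(p+1) \le O(\log n)$. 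The mechanism is the inequality~\eqref{eq:singleconstraint}, derived by looking, for each $j \in [p]$, at the first step $t' \le t$ dominating $S_t$ on the top $j$ coordinates and observing that its cups of rank $j+1,\ldots,p+1$ must all be within $1$ of its $j$th cup (else earlier fills of ranks $j+1,\ldots,p+1$ would have beaten the record on $\tot_{p+1}$). Solving the induced linear system in $S_t(1),\ldots,S_t(p+1)$ gives~\eqref{eq:closetogethercups}. Feeding this into $\tot_{p+1}(S_t) \le pM + O(\log n)$ at a near-supremum record-setting state forces $M \le O(\log n)$, and hence backlog $O(\log n)$. I expect the algebraic step of solving~\eqref{eq:singleconstraint} and the handling of the $|B|$ excess in the non-truncated induction to be the main obstacles.

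\textbf{Lower bound.} I would follow the construction described in Section~\ref{sec:lowerbounds_overview}. The filler pins one unit of water in each of the cups $1,\ldots,p-1$ every step, and on the cups $p,\ldots,n$ it simulates the known adaptive single-processor strategy that guarantees backlog $\Omega(\log(n-p))$ in $O(n-p)$ steps against an emptier removing at most one unit per step from that region. Call a step a \emph{growth step} if the emptier empties from more than one of $\{p,\ldots,n\}$; then some cup in $[p-1]$ was neglected and $\tot_{[p-1]}(S_t)$ increases by at least $1$. Whenever a growth step occurs, restart the simulated single-processor construction on $\{p,\ldots,n\}$. Either growth steps occur at least $\Omega(\log(n-p))$ times, in which case by a pigeonhole/averaging argument some cup in $[p-1]$ reaches fill $\Omega(\log(n-p)/p) \cdot p/(p-1)$-magnitude --- more cleanly, since each growth step dumps at least $1$ unit into the $p-1$ anchor cups while greedy can remove only $p-1$ units from them per step, the key point is that $\tot_{[p-1]}$ is monotone during growth steps beyond the steady-state filling --- or else there is a run of $\Omega(n-p)$ consecutive non-growth steps, during which the simulated single-processor game executes uninterrupted and forces backlog $\Omega(\log(n-p))$ in some cup of $\{p,\ldots,n\}$. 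Making the bookkeeping on the anchor cups tight (so that either branch truly yields $\Omega(\log(n-p))$) is the only delicate point, and I would handle it by choosing the single-processor simulation length so that the ``growth step'' branch needs $\Omega(\log(n-p))$ restarts before its accumulated anchor overflow is realized as backlog in a single anchor cup.
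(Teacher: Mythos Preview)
Your plan is essentially the paper's own proof: the two-stage upper bound via $N$-skewed averages (first for the $N$-truncated game to get $O(p+\log n)$, then at level $N=M$ for the non-truncated game), the record-setting analysis culminating in~\eqref{eq:closetogethercups} to bound $M$, and the anchor-set construction for the lower bound all match Sections~\ref{sec:Mfinite}--\ref{sec:lowerbound} of the paper.

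One small correction on the lower bound bookkeeping: with $p-1$ anchor cups, $\Omega(\log(n-p))$ growth steps only guarantees \emph{average} anchor fill $\Omega(\log(n-p)/p)$, not $\Omega(\log(n-p))$, so your first branch as stated does not yield the target backlog. The paper's resolution is simply to run the phased construction indefinitely rather than for a fixed number of rounds: either some phase completes without a growth step and achieves backlog $\Omega(\log(n-p))$ in $B$, or growth steps keep occurring, and since each one increases $\tot_{[p-1]}$ by at least $1$ while non-growth steps never decrease it, after $\Theta(p\log(n-p))$ growth steps the anchor total is $\Theta(p\log(n-p))$ and averaging gives a single anchor cup of the required height. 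Tuning the single-processor simulation length, as you suggest, is neither possible (it is fixed at $n-p$ steps by Lemma~\ref{lem:lowerboundlemma}) nor necessary.
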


As a convention, throughout the section, we define $I_t$ to be
the \defn{intermediate state} between steps $S_{t - 1}$ and $S_t$, in
which the filler has inserted water into the cups, but the emptier has
not yet removed water from the cups.

To prove Theorem \ref{thm:lognbacklog}, we begin by establishing a
weaker bound. Lemma \ref{lem:Mfinite}, which we prove in Subsection
\ref{sec:Mfinite}, bounds the fill of every cup by $O(p + \log n)$.
\begin{lemma}
  The greedy algorithm for the $p$-processor cup game on $n$ cups
  achieves backlog $O(p + \log n)$.
  \label{lem:Mfinite}
\end{lemma}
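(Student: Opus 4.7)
The plan is to prove Lemma \ref{lem:Mfinite} by directly formalizing the strategy sketched in Section \ref{sec:greedy_overview}. Fix $N$ to be a sufficiently large constant multiple of $p + \log n$, and consider the $N$-truncated $p$-processor cup game, in which the filler is additionally forbidden from raising any cup above fill $N$. I will show that if the greedy emptier plays in this truncated game, then $S_t(1) \le N - 2$ for all $t$; since any legal play in the standard (non-truncated) game that ever produces backlog exceeding $N-2$ can be simulated step-by-step by a legal play in the $N$-truncated game (by simply withholding enough water so that no cup crosses $N$) producing the same backlog, the same bound transfers to the standard game.

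The first main step is to establish Lemma \ref{lem:truncatedinvariant_technical_overview}, the system of harmonic invariants $f^N_k(S_t) \le 1 + \sum_{i=k+1}^{n} \tfrac{1}{i}$ for all $k \in [n-p]$. I would follow the induction on $t$ outlined in the excerpt: decompose the emptier's step via the intermediate state $I_t$ and the partition of the $p$ emptied cups into those that remain in the top $p+k$ of $S_t$ (call it $A$) and those that fall out (call it $B$). The key algebraic fact is that rearranging the definition of $f^N_{k+|B|}(I'_t)$ and invoking the $N$-truncation bound $\tot_p(I'_t) \le pN$ shows $f^N_k(S_t) \le f^N_{k+|B|}(I'_t)$, after which $f^N_{k+|B|}(I'_t) \le f^N_{k+|B|}(S_{t-1}) + \tfrac{|B|}{k+|B|}$ (since only $|B|$ new units of water can enter the top $p+k+|B|$ cups), and the inductive hypothesis finishes the estimate. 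I would need to separately handle the degenerate case where some of the $p$ cups emptied by greedy contain less than $1$ unit, as well as the case where $f^N_k$ or $f^N_{k+|B|}$ vanishes — in both cases the bound becomes easier since the corresponding skewed averages are already controlled.

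Next, specializing the invariant to $k = 1$ gives $\tot_{p+1}(S_t) \le pN + O(\log n)$ for every $t$. I would then prove the following combinatorial claim: if $S_t(1) > S_{t-1}(1)$, then $\tot_{p+1}(S_{t-1}) \ge (p+1)(S_t(1) - 1)$. The reason is that for $S_t(1)$ to strictly exceed $S_{t-1}(1)$, the filler must have poured into the rank-$1$ cup of $S_t$ during step $t$, so its fill in $S_{t-1}$ was at least $S_t(1)-1$. Moreover, greedy emptied the $p$ fullest cups of $I_t$, so each of the top $p+1$ cups of $S_{t-1}$ had fill at least $S_t(1) - 1$ as well — otherwise greedy would have removed from a cup of fill less than $S_t(1)-1$, contradicting its priority rule. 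Summing gives the claim.

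Combining these two ingredients: if backlog $S_t(1)$ ever reached $N-2$, then taking the earliest such step $t$ we would have $S_t(1) > S_{t-1}(1)$, hence $\tot_{p+1}(S_{t-1}) \ge (p+1)(N-3)$, but the invariant forces $\tot_{p+1}(S_{t-1}) \le pN + O(\log n)$. Choosing $N = C(p + \log n)$ for sufficiently large $C$ yields a contradiction, proving backlog $\le N - 2 = O(p + \log n)$. The main obstacle I anticipate is carefully checking the edge cases in the inductive step of the skewed-average invariant (cups in $B$ having small fill, or the $N$-truncation being only \emph{approximately} respected by the state $I'_t$), since the argument needs $\tot_p(I'_t) \le pN$ to be used cleanly — in the $N$-truncated game this follows by definition, which is precisely why we move to the truncated game first.
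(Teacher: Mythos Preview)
Your proposal is correct and follows essentially the same approach as the paper: prove the harmonic invariant on the $N$-skewed averages in the $N$-truncated game via the $A$/$B$ decomposition of the emptied cups, then combine the $k=1$ case with the combinatorial fact that a new backlog record forces the top $p+1$ cups to be nearly tied, yielding a contradiction for $N$ a large enough multiple of $p + \log n$. The only cosmetic difference is that the paper's formal proof applies the ``top $p+1$ cups are near-tied'' observation to $S_t$ rather than $S_{t-1}$ (both versions are valid), and your justification of that claim could be tightened slightly---the relevant $p+1$ cups are the rank-$1$ cup of $S_t$ together with the $p$ cups greedy emptied from $I_t$, not literally the top $p+1$ cups of $S_{t-1}$---but the conclusion and overall structure are the same.
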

In addition to introducing technical ideas that reoccur later in the
section, Lemma \ref{lem:Mfinite} plays a small but important role in
the proof of Theorem \ref{thm:lognbacklog} by establishing that the
backlog achievable by the filler is \emph{finite}.

Critically, Lemma \ref{lem:Mfinite} enables us to define the quantity
$M \in \mathbb{R}^+$ to be,
\begin{equation}
M = \sup_{t \in \mathbb{N}, \, S_t \in \mathcal{S}_t} \av_p(S_t),
\label{eq:M}
\end{equation}
and to eliminate the possibility that $M = \infty$.

Lemma \ref{lem:Mtobacklog} establishes that, in order to bound the
backlog of each state $S_t$ by $O(\log n)$, it suffices to instead
bound $M$ by $O(\log n)$. Specifically, if a step $t$ achieves a
backlog $B$ that is greater than in the preceding step, then
Lemma \ref{lem:Mtobacklog} establishes that $\av_p(S_t) \ge B - 1$.

\begin{lemma}
If $S_t(1) > S_{t - 1}(1)$, then we have $S_t(1),
S_t(2), \ldots, S_t(p + 1) \ge S_t(1) - 1$.
\label{lem:Mtobacklog}
\end{lemma}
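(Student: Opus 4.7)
The plan is to show that, under the hypothesis $S_t(1) > S_{t-1}(1)$, the cup achieving the maximum in $S_t$ must have been \emph{ignored} by the greedy emptier at step $t$, which forces at least $p$ other cups to have been at least as full in the intermediate state $I_t$, and hence at least $S_t(1) - 1$ full in $S_t$.

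First, I would dispose of the trivial case $S_t(1) \le 1$: since fills are nonnegative, every cup trivially satisfies the desired inequality, so assume $S_t(1) > 1$. Let $c$ be any cup with $h_c(t) = S_t(1)$. I claim that the emptier did not remove water from $c$ at step $t$. Indeed, if the emptier did, then because $h_c(I_t) \ge h_c(t) = S_t(1) > 1$ the emptier would have removed exactly one unit, giving $h_c(t) = h_c(t-1) + f - 1$ where $f \in [0,1]$ is the filler's deposit. But $h_c(t) > S_{t-1}(1) \ge h_c(t-1)$ would then force $f > 1$, a contradiction. Hence $c$ is not among the $p$ fullest cups of $I_t$.

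Consequently there exist $p$ cups $c_1, \ldots, c_p$ distinct from $c$ whose fills in $I_t$ satisfy $h_{c_i}(I_t) \ge h_c(I_t) = h_c(t) = S_t(1) > 1$. Since each $c_i$ is among the $p$ fullest cups in $I_t$ and has more than $1$ unit of water there, the greedy emptier removes exactly $1$ unit from each, so $h_{c_i}(t) \ge S_t(1) - 1$. Together with cup $c$ itself, which has fill $S_t(1) \ge S_t(1) - 1$, we obtain $p + 1$ distinct cups whose fills in $S_t$ are all at least $S_t(1) - 1$, which is exactly the conclusion $S_t(1), S_t(2), \ldots, S_t(p+1) \ge S_t(1) - 1$.

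The whole argument is essentially a pigeonhole observation about greedy — the only delicate point is handling the ``emptier removes less than $1$'' corner case, which is why I separate out $S_t(1) \le 1$ at the start; once that is done, everything else is forced by the combinatorics of the intermediate state $I_t$. I do not expect any real obstacle here: the main content of the lemma is precisely the reduction from bounding backlog to bounding $M$ (as promised by the surrounding text), and this proof is the clean, short justification of that reduction.
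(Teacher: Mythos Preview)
Your proof is correct and follows essentially the same approach as the paper: show that the cup realizing $S_t(1)$ cannot have been among the $p$ cups the greedy emptier removed from (else its fill would not exceed its $S_{t-1}$ value), whence the $p$ emptied cups were each at least $S_t(1)$ in $I_t$ and hence at least $S_t(1)-1$ in $S_t$. Your version is slightly more explicit about the corner case where less than one unit is removed, but the argument is otherwise identical to the paper's.
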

\begin{proof}
Let $A$ denote the $p$ fullest cups in $I_t$. The emptier will remove
$1$ unit of water from each cup in $A$, resulting in the cup
containing no more water than it contained in state $S_{t -
1}$. Therefore, in order for the fullest cup $i$ in $S_t$ to be fuller
than the fullest cup in $S_{t - 1}$, it must be that $i \notin A$. Since each cup in $A$ has fill greater than the fill $B$ of cup
$i$ in state $I_t$, the cups in $A$ must continue to have fill at
least $B - 1$ in state $S_t$, as desired.
\end{proof}

For $1 \le k \le n - p$, and $N \in \mathbb{R}^+$, define the
\defn{$N$-skewed average} of the $p + k$ fullest cups in a state $S$
to be,
\begin{equation}
f^N_k(S) = \max\left(\frac{\tot_{p + k}(S) - p \cdot N}{k}, 0\right).
\label{eq:fk}
\end{equation}
When $N = M$, we will omit $N$, instead using $f_k(S)$ to denote the
$M$-skewed average.

Combinatorially, one should think of $f_k(S)$ as follows: Take the
total amount of water in the $p + k$ fullest cups; move as much of
that water to the $p$ fullest cups as possible, without allowing the
average fill of the $p$ fullest cups to exceed $M$ \footnote{If the
  average fill of the $p$ fullest cups initially exceeds $M$, then
  this means we are actually moving water from the $p$-fullest cups to
  the cups with ranks $p + 1, \ldots, p + k$.}; then distribute the
remaining water evenly among the cups of rank $p + 1, \ldots, p +
k$. The amount of water in each of the cups of rank $p + 1, \ldots, p
+ k$ is the skewed average $f_k(S)$.

Lemma \ref{lem:boundMbyskewaverage} exploits a (non-obvious)
combinatorial relationship between $M$ and $f_1(S_t)$ in order to
bound the former in terms of the latter.

\begin{lemma}
Suppose that $f_1(S_t) \le K$ for all $t \in \mathbb{N}$ and
$S_t \in \mathcal{S}_t$. Then,
$$M \le K + O(\log p).$$
\label{lem:boundMbyskewaverage}
\end{lemma}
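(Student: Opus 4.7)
The hypothesis $f_1(S_t) \le K$ unpacks to $\tot_{p+1}(S_t) \le pM + K$ for every reachable $S_t$. My plan is to argue by approximation. For an arbitrarily small $\epsilon > 0$, the definition of $M$ as a supremum supplies a reachable state $S_t$ with $\av_p(S_t) > M - \epsilon$; by replacing $t$ with the smallest such index I may assume that $t$ is \emph{record-setting} in the sense that $\tot_{p+1}(S_t) > \tot_{p+1}(S_{t'})$ for every $t' < t$. If I can further show that $S_t(p+1) \ge \av_p(S_t) - O(\log p)$, then $\tot_{p+1}(S_t) \ge (p+1)M - O(\log p + \epsilon)$, which combined with the hypothesis yields $M \le K + O(\log p) + O(\epsilon)$; letting $\epsilon \to 0$ completes the proof.

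The main work is to establish, for each $j \in [p]$, the inequality
\[
\frac{S_t(j+1) + \cdots + S_t(p+1)}{p+1-j} \;\ge\; S_t(j) - 1.
\]
Fix $j$ and let $t' \le t$ be the first step with $S_{t'}(i) \ge S_t(i)$ for every $i \le j$; such a $t'$ exists since $t$ itself works. A pigeonhole on order statistics shows that the minimality of $t'$ forces some physical cup $c$ with $S_{t'-1}(c) < S_{t'}(c)$ and $S_{t'}(c) \ge S_{t'}(j)$. Because $c$'s fill grew during step $t'$ while the filler adds at most one unit and greedy would have removed a full unit had $c$ been among the top $p$ of $I_{t'}$, cup $c$ must lie below rank $p$ in $I_{t'}$. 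Hence the top $p$ cups of $I_{t'}$ each have fill $\ge I_{t'}(c) = S_{t'}(c) \ge S_{t'}(j)$, and after greedy's single-unit removal they retain fill $\ge S_{t'}(j) - 1$ in $S_{t'}$. Together with $c$, this exhibits $p + 1$ cups of fill $\ge S_{t'}(j) - 1$, so in particular $S_{t'}(j+1), \ldots, S_{t'}(p+1) \ge S_{t'}(j) - 1$. Summing these, using $S_{t'}(j) \ge S_t(j)$, and applying the record-setting bound $\tot_{p+1}(S_t) \ge \tot_{p+1}(S_{t'})$ together with $\sum_{i \le j} S_t(i) \le \sum_{i \le j} S_{t'}(i)$ transfers the displayed inequality to $S_t$.

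Solving the resulting constraint system is mechanical. Writing $\alpha_j = S_t(j) - S_t(p+1)$ with $\alpha_{p+1} = 0$, the inequality for index $j$ becomes $\alpha_j \le 1 + (\alpha_{j+1} + \cdots + \alpha_{p+1})/(p+1-j)$. Setting $\gamma_j = (\alpha_{j+1} + \cdots + \alpha_{p+1})/(p+1-j)$ yields the recurrence $\gamma_{j-1} \le 1/(p+2-j) + \gamma_j$; iterating downward from $\gamma_p = 0$ produces $\gamma_j = O(\log p)$, hence $\alpha_j \le H_{p+1-j} + 1$ and $(\alpha_1 + \cdots + \alpha_p)/p = O(\log p)$. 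Therefore
\[
S_t(p+1) \;=\; \av_p(S_t) - \frac{\alpha_1 + \cdots + \alpha_p}{p} \;\ge\; M - \epsilon - O(\log p),
\]
which is exactly what was needed.

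The step I expect to be the main obstacle is the combinatorial derivation that $S_{t'}(j+1), \ldots, S_{t'}(p+1) \ge S_{t'}(j) - 1$ at the minimal step $t'$. It requires carefully tracking how ranks shift between $S_{t'-1}$, $I_{t'}$, and $S_{t'}$, and using the greedy rule together with minimality to exclude the ``newly dominant'' cup from the top $p$ of $I_{t'}$. The rest of the argument --- unpacking $f_1$, solving the harmonic recurrence, and the final arithmetic --- is routine.
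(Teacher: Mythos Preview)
Your overall strategy is exactly the paper's: pick a near-optimal state, establish the constraint system $\frac{S_t(j+1)+\cdots+S_t(p+1)}{p+1-j}\ge S_t(j)-1$ via a ``first time all top-$j$ ranks dominate'' argument combined with the cup-reset observation, solve the harmonic recurrence, and plug back into $f_1$. The paper presents these as Lemmas~\ref{lem:cupreset}, \ref{lem:recordsettingconstraints}, and \ref{lem:boundpplus1cup}.

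There is one genuine slip. Taking the smallest $t$ with $\av_p(S_t)>M-\epsilon$ gives you $\av_p$-record-setting (equivalently $\tot_p(S_t)>\tot_p(S_{t'})$ for $t'<t$), \emph{not} $\tot_{p+1}$-record-setting. Your transfer step explicitly invokes $\tot_{p+1}(S_t)\ge\tot_{p+1}(S_{t'})$, which you have not established and which need not hold. The paper's formal proof uses $\av_p$-record-setting; with only that, your transfer argument yields the constraint for ranks $j+1,\ldots,p$ (drop the $(p{+}1)$-st term), giving $S_t(1)-S_t(p)\le O(\log p)$. You then recover $S_t(p+1)\ge S_t(p)-1$ by applying your own cup-reset argument at step $t$ itself: since $\tot_p(S_t)>\tot_p(S_{t-1})$, some rank $j\le p$ satisfies $S_t(j)>S_{t-1}(j)$, whence $S_t(p+1)\ge S_t(j)-1\ge S_t(p)-1$. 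With this patch the rest of your computation goes through unchanged.
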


Combined, Lemmas \ref{lem:Mtobacklog} and
\ref{lem:boundMbyskewaverage} reduce the proof of the upper bound on
backlog in Theorem \ref{thm:lognbacklog} to proving that $f_1(S_t) \le
O(\log n)$ for all $S_t \in \mathcal{S}_t$. This is accomplished by Lemma
\ref{lem:skewaverage}, which bounds each of $f_1(S_t), f_2(S_t),
\ldots, f_{n - p}(S_t)$ for each step $t$.

\begin{lemma}
For all $t \in \mathbb{N}$, for all $S_t \in \mathcal{S}_t$, and for
all $1 \le k \le n - p$,
$$f_k(S_t) \le 1 + \left(\frac{1}{k + 1} + \frac{1}{k + 2} + \cdots + \frac{1}{n}\right).$$
\label{lem:skewaverage}
\end{lemma}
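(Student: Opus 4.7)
I would prove Lemma \ref{lem:skewaverage} by induction on $t$, closely mirroring the strategy already sketched for the truncated analog (Lemma \ref{lem:truncatedinvariant_technical_overview}). The base case $t = 0$ is trivial: the cups are all empty, so $\tot_{p+k}(S_0) = 0 \le pM$ and hence $f_k(S_0) = 0$. For the inductive step, fix $k \in [n - p]$, let $I_t$ denote the intermediate post-filler state, and partition the $p$ fullest cups of $I_t$ into the set $A$ of cups that remain among the top $p + k$ of $S_t$ and the complementary set $B$ with $|A| + |B| = p$. Let $I'_t$ be the hybrid state in which the emptier has removed $1$ unit from each cup of $A$ but not yet from $B$. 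The combinatorial identity $\tot_{p+k}(S_t) = \tot_{p+k+|B|}(I'_t) - \tot_B(I'_t)$ holds because the cups in $B$ are precisely the cups that are in the top $p + k + |B|$ of $I'_t$ but fall out of the top $p + k$ of $S_t$.

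The goal is then to establish the two-step chain
\[
f_k(S_t) \le f_{k+|B|}(I'_t) \le f_{k+|B|}(S_{t-1}) + \frac{|B|}{k+|B|}.
\]
Combined with the inductive hypothesis at rank $k + |B|$ and the elementary telescoping estimate $\frac{|B|}{k+|B|} \le \frac{1}{k+1} + \cdots + \frac{1}{k+|B|}$, this yields the claimed bound for $f_k(S_t)$. The second inequality is comparatively routine: the filler adds at most $p$ units of water per step and the passage $I_t \to I'_t$ removes $|A|$ units, so by a ``replace the optimal $(p + k + |B|)$-set'' argument analogous to the truncated case one obtains $\tot_{p+k+|B|}(I'_t) \le \tot_{p+k+|B|}(S_{t-1}) + (p - |A|)$, which divided by $k + |B|$ gives the stated increment.

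The hard step, and the main departure from the truncated proof, is the first inequality $f_k(S_t) \le f_{k+|B|}(I'_t)$. In the truncated setting it reduced instantly to the fact that each cup in $B$ has fill in $I'_t$ at least $f^N_{k+|B|}(I'_t)$, using the truncation bound $\tot_p(I'_t) \le pN$ to re-identify $f^N_{k+|B|}(I'_t)$ with the average fill of cups of rank $p+1, \ldots, p+k+|B|$. Without truncation, $\tot_p(I'_t)$ can exceed $pM$ by as much as $|B|$ (for instance, in the configuration called out in the text where each cup of $B$ contains $M+1$ and each cup of $A$ contains $M$), and this deficit propagates into an error that is, naively, too large to absorb into the harmonic sum $\sum_{j = k+1}^{k+|B|} 1/j$.

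To close this gap I would exploit the definition of $M$ applied to $S_{t-1}$: the cups of $A \cup B$ already satisfied $\tot_{A \cup B}(S_{t-1}) \le \tot_p(S_{t-1}) \le pM$, so any excess of $\tot_p(I'_t)$ over $pM$ is charged directly to filler-water added to cups in $B$ during step $t$. Since $B \subseteq$ top $p$ of $I'_t$ (cups in $B$ did not get emptied in $I'_t$, while cups in $A$ lost $1$ unit each), any such excess in the $B$-cups must be matched by a compensating deficit in the $A$-cups relative to $M$, which in turn forces the average fill of cups at ranks $p+1, \ldots, p+k+|B|$ to stay below $f_{k+|B|}(I'_t)$ by exactly enough that each cup of $B$ still sits above this average. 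Formalizing this balance yields $\tot_B(I'_t) \ge |B| \cdot f_{k+|B|}(I'_t)$ and hence $f_k(S_t) \le f_{k+|B|}(I'_t)$, completing the induction. I expect this compensation argument, rather than either telescoping step, to be the main technical obstacle.
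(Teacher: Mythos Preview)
Your inductive scaffold, the definitions of $A$, $B$, $I'_t$, and the two-step chain $f_k(S_t) \le f_{k+|B|}(I'_t) \le f_{k+|B|}(S_{t-1}) + \frac{|B|}{k+|B|}$ all match the paper exactly, and you correctly isolate the inequality $\av_B(I'_t) \ge f_{k+|B|}(I'_t)$ as the crux. The gap is in how you propose to obtain that inequality. You apply the definition of $M$ to the \emph{backward} state $S_{t-1}$ and then invoke a loose ``compensation'' between excess fill in $B$-cups and deficit in $A$-cups. But this reasoning does not pin down the needed bound: the top-$p$ set of $I'_t$ need not equal $A\cup B$, and even granting $\tot_{A\cup B}(I'_t)\le pM + w_B$, this says nothing about the $|B|$ additional cups that enter $\tot_{p+k+|B|}(I'_t)$ beyond $A\cup B\cup C$. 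The balance you describe is not actually forced by the hypothesis on $S_{t-1}$ alone.

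The paper's resolution applies the constraint $\av_p \le M$ to the \emph{forward} state $S_t$ instead. This is the key twist: in $S_t$ the cups of $B$ have dropped out of the top $p+k$, so the $p$ fullest cups of $S_t$ consist of the $A$-cups together with the $|B|$ fullest cups of $C$, all with the same fills they carry in $I'_t$. Thus $\av_p(S_t)\le M$ directly forces those top $|B|$ cups of $C$ --- and hence $\av_C(I'_t)$ --- to lie below the value $b$ defined by $\tot_A(I'_t)+b\,|B|=pM$. Writing $\av_B(I'_t)=b+\delta$ and splitting on the sign of $\delta$ then yields $\av_B(I'_t)\ge f_{k+|B|}(I'_t)$ in two lines. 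You should also add the boundary case $I_t(p)\le 1$, which the paper handles separately; without it some $A$-cup may hold less than one unit and the estimate $\tot_{p+k+|B|}(I'_t)\le\tot_{p+k+|B|}(S_{t-1})+|B|$ can fail.
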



The remainder of the section proceeds as follows. Subsection
\ref{sec:Mfinite} proves Lemma \ref{lem:Mfinite}, establishing an
upper bound on backlog of $O(p + \log n)$, and thus demonstrating that
$M$ is finite. Subsections \ref{sec:lemboundM} and
\ref{sec:lemskewaverage} prove Lemmas \ref{lem:boundMbyskewaverage}
and \ref{lem:skewaverage}, which together bound the backlog by $O(\log
n)$. Subsection \ref{sec:lowerbound} then completes the proof of
Theorem \ref{thm:lognbacklog} by constructing an adaptive strategy for
the filler that achieves backlog $\Omega(\log (n - p))$.

In Appendix \ref{sec:appgreedyintuition} we discuss the structural
relationship between our analysis of the $p$-processor game for $p >
1$, and the previously known analysis of the single-processor
game. The discussion will likely be most useful to the reader after
they have read Subsection \ref{sec:Mfinite}.

\subsection{Proof of Lemma \ref{lem:Mfinite}}\label{sec:Mfinite}

In this section, we prove that the greedy emptying algorithm for the
$p$-processor cup game on $n$ cups achieves a backlog of $O(p + \log
n)$.





Define the \defn{$N$-truncated} $p$-processor cup game to be the
standard $p$-processor cup game, except with one additional
restriction on the filler: the filler is never permitted to increase
the fill of any cup to be larger than $N$. The next lemma bounds the
$N$-skewed averages $f^N_k(S_t)$ after each step $t$ in the
$N$-truncated cup game.

\begin{lemma}
Suppose the emptier follows the greedy algorithm for the the
$N$-truncated $p$-processor cup game on $n$ cups. Then, for all $t \ge 0$, for all $S_t \in \mathcal{S}_t$, and for all $k \in [n - p]$,
$$f^N_k(S_t) \le 1 + \left(\frac{1}{k + 1} + \frac{1}{k + 2} + \cdots + \frac{1}{n}\right).$$
\label{lem:truncatedinvariant}
\end{lemma}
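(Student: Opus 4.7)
The plan is to follow the argument already outlined in the proof sketch for Lemma \ref{lem:truncatedinvariant_technical_overview}, but make precise the steps glossed over in the sketch and treat the edge cases carefully. I would induct on $t$. The base case $t = 0$ is trivial: the initial state is empty, so $f^N_k(S_0) = 0$ for all $k$. For the inductive step, let $I_t$ be the intermediate state after the filler but before the emptier, let $A \cup B$ be the $p$ cups that the greedy emptier removes from, where $A$ consists of those cups that remain among the top $p+k$ cups of $S_t$ and $B$ the rest, and let $I'_t$ be the state after removing a unit only from $A$.

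The structural observation I will prove first is that the top $p+k+|B|$ cups of $I'_t$ are exactly the top $p+k$ cups of $S_t$ together with $B$: the cups in $B$ have the same fill in $I'_t$ as in $I_t$ (where they were among the top $p$), and since fills only decrease from $I_t$ to $I'_t$, the cups in $B$ also sit among the top $p$ cups of $I'_t$. This yields the identity $\tot_{p+k}(S_t) = \tot_{p+k+|B|}(I'_t) - \tot_B(I'_t)$.

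The heart of the argument is then to compare $f^N_k(S_t)$ with $f^N_{k+|B|}(I'_t)$. Using the $N$-truncation of the filler (which forces every cup in $I_t$, and hence in $I'_t$, to have fill at most $N$, and thus $\tot_p(I'_t) \le p\cdot N$), I can write
\[
f^N_{k+|B|}(I'_t) \;\le\; \frac{\tot_{p+k+|B|}(I'_t) - \tot_p(I'_t)}{k+|B|},
\]
which is the average fill of the cups of rank $p+1, \dots, p+k+|B|$ in $I'_t$. Since each cup in $B$ has rank at most $p$ in $I'_t$ (by the observation above), each such cup has fill at least this average, so $\tot_B(I'_t) \ge |B|\cdot f^N_{k+|B|}(I'_t)$. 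Combining with the identity for $\tot_{p+k}(S_t)$ and unpacking the definitions of the two skewed averages yields $f^N_k(S_t) \le f^N_{k+|B|}(I'_t)$ (handling the case where either skewed average is $0$ by noting that the inequality is trivial if $f^N_k(S_t) = 0$).

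Finally I will bound $f^N_{k+|B|}(I'_t)$ against $f^N_{k+|B|}(S_{t-1})$. Since the filler adds at most $p$ units total and the emptier has already removed $|A| = p - |B|$ units in passing from $S_{t-1}$ to $I'_t$, the total fill of any fixed set of cups has grown by at most $|B|$, so $\tot_{p+k+|B|}(I'_t) \le \tot_{p+k+|B|}(S_{t-1}) + |B|$, giving $f^N_{k+|B|}(I'_t) \le f^N_{k+|B|}(S_{t-1}) + |B|/(k+|B|)$. Applying the inductive hypothesis and the harmonic estimate $|B|/(k+|B|) \le \sum_{j=k+1}^{k+|B|} 1/j$ telescopes to the desired bound. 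The main obstacle I expect is step three: verifying cleanly that the $N$-truncation actually delivers $\tot_p(I'_t) \le p\cdot N$ and that the cups in $B$ lie among the top $p$ of $I'_t$, together with correctly handling the edge cases where some of the top-$p$ cups in $I_t$ contain fewer than $1$ unit of water (so the emptier removes less) and where the max in the definition of $f^N_k$ is attained at $0$.
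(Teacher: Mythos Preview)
Your proposal is correct and follows essentially the same approach as the paper's proof: induction on $t$, the decomposition of the emptied cups into $A$ and $B$, the intermediate state $I'_t$, the key inequality $f^N_k(S_t) \le f^N_{k+|B|}(I'_t)$ via the $N$-truncation bound $\tot_p(I'_t) \le pN$, and the harmonic telescoping. The only cosmetic differences are that the paper dispatches the edge case $I_t(p) \le 1$ up front (showing directly that then $f^N_k(S_t) \le 1$) and splits the passage $S_{t-1} \to I'_t$ into two steps $S_{t-1} \to I_t \to I'_t$, whereas you combine them; your one-line justification ``the total fill of any fixed set of cups has grown by at most $|B|$'' is slightly loose (the set realizing $\tot_{p+k+|B|}$ is not fixed), but your structural observation already gives what is needed.
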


Before proving Lemma \ref{lem:truncatedinvariant}, we describe how to
use it in order to bound the backlog in the (non-truncated)
$p$-processor game. Suppose that during an $N$-truncated $p$-processor
game on $n$ cups, there is some step $t$ such that $S_t(1) \ge N - 2$,
and let $t$ be the first such step. (Importantly, the truncation
restriction on the filler has not actually affected the filler by step
$t$.) Then by Lemma \ref{lem:Mtobacklog}, all of the cups with ranks
$1, 2, \ldots, p + 1$ in state $S_t$ must contain fill at least $N -
3$. The skewed average $f^N_1(S_t)$ is therefore
\begin{align*}
  & \max\left(\frac{\tot_{p + 1}(S_t) - p \cdot N}{1}, 0 \right) \\ & \ge (p
  + 1)(N - 3) - p \cdot N \ge N - 3 - 3p.
\end{align*}
By Lemma \ref{lem:truncatedinvariant}, however, this means that $N$
can be at most $O(\log n + p)$. Thus the only values $N$ for which the
filler can at some point achieve backlog at least $N - 3$ in the
$N$-truncated game satisfy $N \le O(p + \log n)$, which completes the
proof of Lemma \ref{lem:Mfinite}.

We complete the subsection by proving Lemma
\ref{lem:truncatedinvariant}.
\begin{proof}[Proof of Lemma \ref{lem:truncatedinvariant}]
We prove the result by induction on $t$. As a base case, when $t = 0$,
the lemma holds trivially, since $f^N_k(S_0) = 0$ for all
$k$. Consider $t > 0$, and suppose as an inductive hypothesis that the
result holds for $t - 1$.

Suppose that $I_t(p) \le 1$, meaning that at least one of the $p$
fullest cups in state $I_t$ contains fill $1$ or smaller. Then, in
state $S_t$, the $p$ fullest cups each contain fill at most $N$ (by
$N$-truncation) and the remaining cups each contain fill at most
$1$. This implies that $f^N_k(S_t) \le 1$ for all $k > 0$. It follows
that, for the remainder of the proof, we may assume that $I_t(1),
\ldots, I_t(p) > 1$.

To get from state $I_t$ to state $S_t$, one unit of water is removed
from each of the $p$ fullest cups. Of these $p$ cups, let $A$ denote
the subset that remain among the $p + k$ fullest cups in $S_t$,
and let $B$ denote the subset that do not. Let $C$ denote the cups in
$I_t$ with ranks $p + 1, \ldots, p + k$. Notice that $A \cup B \cup C$
comprise the $p + k$ fullest cups in state $I_t$.

Since $I_t$ differs from $S_{t - 1}$ by the insertion of $p$ units of water,
\begin{equation}
  f^N_{k + |B|}(I_t) \le f^N_{k + |B|}(S_{t - 1}) + \frac{p}{k + |B|}.
  \label{eq:primefIt}
\end{equation}

Let $I'_t$ denote an intermediate state between $I_t$ and $S_t$ at
which one unit of water has been removed from each cup in $A$, but not
from each cup in $B$. By the definition of $A$, the $p + k$ fullest
cups in $I'_t$ are the same as those in $I_t$. Since $I'_t$ contains
$|A|$ fewer units of water in cups $A$ than does $I_t$, either $f^N_{k +
  |B|}(I'_t) = 0$, or
\begin{equation}
  f^N_{k + |B|}(I'_t) \le f^N_{k + |B|}(I_t) - \frac{|A|}{k + |B|}.
  \label{eq:primefIprimet}
\end{equation}
Combining \eqref{eq:primefIt} and \eqref{eq:primefIprimet} yields,
\begin{align*}
 f^N_{k + |B|}(I'_t) & \le f^N_{k + |B|}(S_{t - 1}) + \frac{p - |A|}{k +
    |B|}\\ & = f^N_{k + |B|}(S_{t - 1}) + \frac{|B|}{k + |B|}. 
\end{align*}

To complete the proof, we wish to show that
\begin{equation}
  f^N_k(S_t) \le f^N_{k + |B|}(I'_t).
  \label{eq:primefkplusBtojustk}
\end{equation}

In particular, by the inductive hypothesis for $t - 1$, this would yield
\begin{equation*}
  \begin{split}
    f^N_k(S_t) & \le 1 + \left(\frac{1}{k + |B| + 1} + \frac{1}{k + |B| + 2} + \cdots + \frac{1}{n - p}\right) \\ & \phantom{foob} + \frac{|B|}{k + |B|} \\
    & \le 1 + \left(\frac{1}{k + 1} + \frac{1}{k + 2} + \cdots + \frac{1}{n - p}\right),
  \end{split}
\end{equation*}
as desired.

In proving \eqref{eq:primefkplusBtojustk}, it suffices to prove that,
in state $I'_t$, the average fill of the cups in $B$ is at least
$f^N_{k + |B|}(I'_t)$. That is,
\begin{equation}
  \av_B(I'_t) \ge f_{k + |B|}(I'_t).
  \label{eq:avBoriginal}
\end{equation}
In particular, \eqref{eq:avBoriginal} means that
$$\tot_{p + k}(S_t) \le \tot_{p + k + |B|}(I'_t) - |B| \cdot f_{k + |B|}(I'_t),$$
and it would follow that, as long as $f_k(S_t) > 0$, then
\begin{equation*}
  \begin{split}
    k \cdot f_k(S_t) & = \tot_{p + k}(S_t) - p \cdot M  \\
    & \le \left(\tot_{p + k + |B|}(I'_t) - p \cdot M\right) - |B| \cdot f_{k + |B|}(I'_t) \\
    & \le (k + |B|) \cdot f_{k + |B|}(I'_t) - |B| \cdot f_{k + |B|}(I'_t) \\
    & = k \cdot f_{k + |B|}(I'_t),
  \end{split}
\end{equation*}
which implies \eqref{eq:primefkplusBtojustk}.

We complete the proof by establishing \eqref{eq:avBoriginal}. The key
is to exploit the $N$-truncation restriction on the filler, which
guarantees that $\tot_p(I'_t) \le N \cdot p$. It follows that, unless
$f_{k + |B|}^N(I'_t) = 0$,
\begin{equation*}
  \begin{split}
    f^N_{k + |B|}(I'_t) & = \frac{\tot_{p + k}(I'_t) - p \cdot N}{k + |B|} \\
    & \le \frac{\tot_{p + k}(I'_t) - \tot_p(I'_t)}{k + |B|} \\
    & = \frac{S_{p + 1}(I'_t) + \cdots + S_{p + k + |B|}(I'_t)}{k + |B|}.
  \end{split}
\end{equation*}
The final expression is the average fill of the cups with ranks
$p + 1, \ldots, p + k + |B|$ in $I'_t$. Since the cups $B$ each
contain more fill in state $I'_t$ than do any of the cups with ranks $p + 1,
\ldots, p + k + |B|$, the cups $B$ must each contain fill at least
$f^N_{k + |B|}(I'_t)$. This completes the proof of \eqref{eq:avBoriginal}.
\end{proof}

\subsection{Proof of Lemma \ref{lem:boundMbyskewaverage}}\label{sec:lemboundM}

In this section, we prove Lemma \ref{lem:boundMbyskewaverage}, which
establishes that for some step-number $t$ and some state $S_t \in
\mathcal{S}_t$, $f_1(S_t) \ge M - O(\log p)$. This means that the
quantity $M$ can be bounded indirectly by proving a bound on
$f_1(S_t)$.

Call a step $t$ a \defn{record setter} if $\av_p(S_t)$ is larger than any
of $\av_p(S_1), \ldots, \av_p(S_{t - 1})$. The key to proving Lemma
\ref{lem:boundMbyskewaverage} is to show that, whenever a step $t$ is
a record setter, all of the fills $S_t(1), \ldots, S_t(p + 1)$ must be
quite close to each other. Specifically, we show that $S_t(1) - S_t(p
+ 1) \le O(\log p)$. This implies that when a record-setting step $t$
achieves $\av_p(S_t) \ge M - \epsilon$, for $\epsilon$ sufficiently
small, then
$$f_1(S_t) \ge S_t(p + 1) - p \cdot \epsilon \ge M - O(\log p),$$ as desired.

In order to bound $S_t(1) - S_t(p + 1)$ for a record-setting step $t$,
we begin by proving a generalization of Lemma \ref{lem:Mtobacklog}.
\begin{lemma}
Suppose that $S_t(j) > S_{t - 1}(j)$ for some $j \le p$. Then $S_t(j +
1), S_t(j + 2), \ldots, S_t(p + 1)$ are each at least $S_t(j) - 1$.
\label{lem:cupreset}
\end{lemma}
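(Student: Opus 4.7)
My plan is to generalize the argument used for Lemma \ref{lem:Mtobacklog}. Let $A$ denote the set of $p$ cups from which the greedy emptier removes water, i.e., the $p$ fullest cups in the intermediate state $I_t$. The key observation (which was already the crux of Lemma \ref{lem:Mtobacklog}) is that every cup $c \in A$ satisfies $S_t(c) \le S_{t-1}(c)$: the filler adds at most $1$ unit to $c$, and then the emptier removes (up to) $1$ unit, so $c$'s fill cannot strictly increase across step $t$.

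The first step is to identify a cup among the top $j$ of $S_t$ that is not in $A$. Suppose for contradiction that every one of the $j$ fullest cups in $S_t$ lies in $A$. Then each such cup has $S_{t-1}$-fill at least as large as its $S_t$-fill, producing $j$ distinct cups whose fills in $S_{t-1}$ are at least $S_t(1), \ldots, S_t(j)$ respectively. This forces $S_{t-1}(j) \ge S_t(j)$, contradicting the hypothesis. Therefore some top-$j$ cup in $S_t$, call it $c^*$, lies outside $A$. Because greedy picks $A$ to be the top $p$ in $I_t$, we have $I_t(c^*) \le I_t(p)$, while $c^*$ being among the top $j \le p$ of $S_t$ gives $I_t(c^*) \ge S_t(c^*) \ge S_t(j)$. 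Combining these yields $I_t(p) \ge S_t(j)$.

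From here I finish with a counting argument. If $S_t(j) \le 1$ the conclusion is trivial, so assume $S_t(j) > 1$: then every cup in $A$ has $I_t$-fill strictly greater than $1$, so greedy removes a full unit from each, and every $c \in A$ satisfies $S_t(c) = I_t(c) - 1 \ge I_t(p) - 1 \ge S_t(j) - 1$. Since at most $j-1$ cups of $A$ lie in the top $j$ of $S_t$ (we exhibited $c^*$ outside $A$), at least $p - j + 1$ cups of $A$ lie outside the top $j$; together with the top $j$ cups themselves (which trivially have fill $\ge S_t(j) - 1$), this gives a disjoint collection of at least $p+1$ cups each with $S_t$-fill $\ge S_t(j) - 1$. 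Hence $S_t(p+1) \ge S_t(j) - 1$, and by monotonicity $S_t(j+1), \ldots, S_t(p+1) \ge S_t(j) - 1$, as required. There is no real obstacle here; the only subtlety is the boundary case $I_t(p) < 1$ in which greedy removes less than a full unit, and that case is absorbed trivially when $S_t(j) \le 1$.
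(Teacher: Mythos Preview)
Your proof is correct and follows essentially the same approach as the paper: both arguments show that the top $j$ cups in $S_t$ cannot all lie in the emptier's set $A$, deduce that the rank-$p$ (equivalently, rank-$(p+1)$) fill in $I_t$ is at least $S_t(j)$, and then count $p+1$ cups whose $S_t$-fill is at least $S_t(j)-1$. Your version is slightly more explicit about the boundary case $S_t(j)\le 1$ and about the disjointness in the final count; the only cosmetic issue is that you overload the notation $S_t(\cdot)$ to mean both ``fill of cup with label $c$'' and ``fill of rank-$i$ cup,'' which clashes with the paper's convention but does not affect the argument.
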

\begin{proof}
Let $R$ denote the $p$ fullest cups in state $I_t$. Between states
$I_t$ and $S_t$, the emptier removes $1$ unit of water from each cup
in $R$, ensuring that each cup $i \in R$ contains at most as much
water after step $t$ as after step $t - 1$. In order so that $S_t(j) >
S_{t - 1}(j)$, it follows that the cups of rank $1, 2, \ldots, j$ in
$S_t$ cannot all be in $R$ (otherwise, the $j$ fullest cups in $S_{t}$
would each contain at least as much water in $S_{t - 1}$ as they do in
$S_t$). This means that the rank-$(p + 1)$ cup in $I_t$ is among the $j$
fullest cups in $S_t$. Since each of the $p + 1$ fullest cups in $I_t$
contains at least $S_t(j)$ water, the same $p + 1$ cups will continue
to have fill at least $S_t(j) - 1$ in state $S_t$. Thus each of $S_t(j
+ 1), S_t(j + 2), \ldots, S_t(p + 1)$ are at least $S_t(j) - 1$.
\end{proof}

Using Lemma \ref{lem:cupreset} as a building block, we next prove a
set of constraints on the fills $S_t(1), \ldots, S_t(p + 1)$ of any
record-setting step $t$.
\begin{lemma}
  For any record-setting step $t$, and for any $i \in [p]$,
  $$\frac{S_t(i + 1) + \cdots + S_t(p + 1)}{p + 1 - i} \ge S_t(i) - 1.$$
  \label{lem:recordsettingconstraints}
\end{lemma}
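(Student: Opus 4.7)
The plan is, for each $i \in [p]$, to locate a step $t' \le t$ where Lemma~\ref{lem:cupreset} already produces the desired lower bound on ranks $i{+}1,\ldots,p{+}1$, and then to transfer that bound from time $t'$ back to time $t$ using the record-setting hypothesis, as suggested in the technical overview.

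First, let $t' \in \{0,1,\ldots,t\}$ be the smallest step satisfying $S_{t'}(k) \ge S_t(k)$ for every $k \in [i]$; such a $t'$ exists because $t$ itself has this property. If $t' = 0$, the empty initial state forces $S_t(i) = 0$ and the inequality holds trivially. Otherwise, the minimality of $t'$ guarantees that some rank $j \in [i]$ satisfies $S_{t'-1}(j) < S_t(j) \le S_{t'}(j)$, so $S_{t'}(j) > S_{t'-1}(j)$. Applying Lemma~\ref{lem:cupreset} at step $t'$ then yields $S_{t'}(k) \ge S_{t'}(j) - 1 \ge S_t(j) - 1 \ge S_t(i) - 1$ for every $k \in \{j{+}1,\ldots,p{+}1\}$, which covers $\{i{+}1,\ldots,p{+}1\}$ since $j \le i$. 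Combining this with $S_{t'}(k) \ge S_t(k)$ on $[i]$ yields
$$
\tot_{p+1}(S_{t'}) \;\ge\; \tot_i(S_t) + (p{+}1{-}i)\bigl(S_t(i) - 1\bigr).
$$

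Next I transfer to time $t$: once $\tot_{p+1}(S_t) \ge \tot_{p+1}(S_{t'})$ is established, subtracting $\tot_i(S_t)$ from both sides of the inequality above yields exactly $S_t(i{+}1) + \cdots + S_t(p{+}1) \ge (p{+}1{-}i)(S_t(i)-1)$, which is the lemma. The main obstacle is precisely this transfer: the record-setting hypothesis only furnishes $\tot_p(S_t) > \tot_p(S_{t'})$ and says nothing directly about rank $p{+}1$. To bridge the gap, I plan to apply Lemma~\ref{lem:cupreset} a second time, at step $t$ itself, using the rank $j_\star \in [p]$ with $S_t(j_\star) > S_{t-1}(j_\star)$ guaranteed by record-setting. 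When $j_\star \le i$, Lemma~\ref{lem:cupreset} already gives $S_t(k) \ge S_t(i) - 1$ for all $k \in [i{+}1, p{+}1]$, and the entire $t'$-detour is unnecessary. When $j_\star > i$, the resulting lower bound $S_t(p{+}1) \ge S_t(j_\star) - 1$ must be combined with the strict inequality $\sum_{k=i+1}^{p} S_t(k) > (p{-}i)(S_t(i){-}1)$ inherited from the previous paragraph, together with the ranking relations $S_t(k) \ge S_t(j_\star)$ for $k \in [i{+}1, j_\star]$. I expect this final combination --- squeezing out the slack between $S_t(j_\star)$ and $S_t(i)$, possibly by restricting attention inside Lemma~\ref{lem:boundMbyskewaverage} to steps that are record-setting for $\tot_{p+1}$ as well as for $\tot_p$ --- to be the most delicate step.
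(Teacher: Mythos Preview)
Your approach is the same as the paper's: define $t'$ (the paper calls it $t_i$) to be the first step at which $S_{t'}(k) \ge S_t(k)$ for all $k \in [i]$, apply Lemma~\ref{lem:cupreset} at $t'$ to get $S_{t'}(k) \ge S_t(i) - 1$ for $k \in \{i{+}1,\ldots,p{+}1\}$, and then transfer the resulting inequality back to time $t$ via the record-setting hypothesis.

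You have correctly flagged the delicate point. The paper's formal proof simply asserts the transfer step: ``Since state $S_t$ is a record setter, it follows that the average fill of the cups with ranks $i+1, i+2, \ldots, p+1$ in state $S_t$ is also at least $S_t(i) - 1$.'' Taken literally, this uses $\tot_{p+1}(S_t) \ge \tot_{p+1}(S_{t_i})$, which does \emph{not} follow from the stated definition of record-setting (which concerns $\av_p$, i.e., $\tot_p$). From $\tot_p(S_t) \ge \tot_p(S_{t_i})$ together with $\tot_i(S_{t_i}) \ge \tot_i(S_t)$ one only obtains $\sum_{k=i+1}^{p} S_t(k) \ge (p-i)(S_t(i)-1)$; the $(p{+}1)$-st term is exactly the missing piece you identified. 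Your case split on $j_\star$ recovers the lemma when $j_\star \le i$ but does not close the gap when $j_\star > i$, as you yourself note.

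The paper's own technical overview (Section~\ref{sec:greedy_overview}) in fact writes the transfer as ``Using the fact that $\tot_{p+1}(S_t) > \tot_{p+1}(S_{t'})$ for all $t' < t$,'' i.e., it treats record-setting as a $\tot_{p+1}$ property --- precisely the alternative you propose at the end. So you have not missed a trick; you have located an inconsistency between the paper's formal definition and its actual argument. The clean resolution is the one you suggest: work with steps that are record-setting for $\tot_{p+1}$, under which hypothesis your (and the paper's) argument goes through directly, and then check that the choice of $t$ in the proof of Lemma~\ref{lem:boundMbyskewaverage} can be arranged to satisfy this stronger property.
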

\begin{proof}
  For $i \le p$, say that rank $i$ is \defn{reset} during a step $k$
  if $S_k(i) > S_{k - 1}(i)$. By Lemma \ref{lem:cupreset}, whenever a
  rank $i$ is reset during a step $k$, the average amount of water in
  the cups with ranks $i + 1, i + 2, \ldots, p + 1$ must be at least
  $S_k(i) - 1$.

  For $i = 1, \ldots, p$, define $t_i$ to be the first step such that
  the $i$ fullest cups in $S_{t_i}$ each contain at least as much
  water as the $i$ fullest cups in $S_t$; that is,
  $$S_{t_i}(1) \ge S_t(1), \, S_{t_i}(2) \ge S_t(2), \ldots, \,
  S_{t_i}(i) \ge S_t(i).$$

  By the definition of $t_i$, at least one rank $j \in [i]$ must be
  reset at step $i$. By Lemma \ref{lem:cupreset}, it follows that the
  average amount of water in each of the cups with ranks $i + 1, i +
  2, \ldots, p + 1$ in $S_{t_i}$ is at least $S_{t_i}(i) - 1 \ge
  S_t(i) - 1$. Since state $S_t$ is a record setter, it follows that
  the average fill of the cups with ranks $i + 1, i + 2, \ldots, p +
  1$ in state $S_t$ is also at least $S_t(i) - 1$.
\end{proof}

Using the constraints from Lemma \ref{lem:recordsettingconstraints},
we can now bound $S_t(1) - S_t(p + 1)$ for a record-setting step $t$.
\begin{lemma}
For any record-setting step $t$, $S_t(1) - S_t(p + 1) \le O(\log p)$.
 \label{lem:boundpplus1cup}
\end{lemma}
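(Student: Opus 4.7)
My plan is to turn the system of inequalities from Lemma~\ref{lem:recordsettingconstraints} into an explicit harmonic-series bound on the gap $S_t(1)-S_t(p+1)$ by a downward induction. To exploit the fact that only the gaps matter, I would first normalize the fills. Let $t$ be a record-setting step and define
$$c_i = S_t(i) - S_t(p+1) \quad \text{for } i=1,2,\ldots,p+1,$$
so that $c_{p+1}=0$ and the sequence $c_1 \ge c_2 \ge \cdots \ge c_{p+1}=0$ is non-negative and non-increasing. Subtracting $S_t(p+1)$ from every fill in the statement of Lemma~\ref{lem:recordsettingconstraints} converts the constraints, for each $i\in[p]$, into
$$c_i - 1 \;\le\; \frac{c_{i+1}+c_{i+2}+\cdots+c_p}{p+1-i}.$$
Thus the question becomes purely arithmetic: bound $c_1$ given these constraints and the boundary condition $c_{p+1}=0$.

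The key step is a downward induction on $i$ showing that $c_i \le H_{p+1-i}$, where $H_k=1+\tfrac12+\cdots+\tfrac1k$. The base case $i=p$ follows immediately from the constraint at $i=p$, which gives $c_p \le 1 = H_1$. For the inductive step, assume $c_j \le H_{p+1-j}$ for every $j>i$. Plugging into the constraint and using the standard identity
$$\sum_{k=1}^{m} H_k \;=\; (m+1)H_m - m,$$
with $m=p-i$, yields
$$c_i \;\le\; 1 + \frac{(p-i+1)H_{p-i} - (p-i)}{p+1-i} \;=\; H_{p-i} + \frac{1}{p+1-i} \;=\; H_{p+1-i},$$
closing the induction. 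Setting $i=1$ gives $c_1 \le H_p = O(\log p)$, which is precisely the desired bound $S_t(1)-S_t(p+1) \le O(\log p)$.

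The main obstacle is essentially arithmetic rather than combinatorial: getting the induction to close with the correct constant requires the harmonic-sum identity above, and it is a small miracle that the extra $+1$ on the right-hand side of each constraint combines with the averaging to produce exactly the next harmonic number. If I first attempted the naive bound $c_i \le 1 + \max_{j>i} c_j$ I would lose a factor of $p$, so the crucial insight is to bound $c_i$ by the \emph{average} of the later $c_j$'s and guess the harmonic-number ansatz. Everything else—the normalization, the translation of the constraints, and the final conclusion—is routine once this inductive invariant is in place.
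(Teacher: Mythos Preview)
Your proof is correct, and it takes a genuinely different route from the paper's. The paper works with the \emph{consecutive} gaps $\Delta_j = S_t(p+1-j) - S_t(p+2-j)$ and reformulates Lemma~\ref{lem:recordsettingconstraints} as the linear system $\sum_{j=1}^{i} j\,\Delta_j \le i$ for all $i\in[p]$; it then argues by an exchange/LP argument that the objective $\sum_j \Delta_j$ is maximized when every $\Delta_j = 1/j$, yielding the harmonic bound. You instead work with the \emph{cumulative} gaps $c_i = S_t(i)-S_t(p+1)$ and run a clean downward induction showing $c_i \le H_{p+1-i}$, with the harmonic-sum identity making the step close exactly. Your argument is more direct and avoids the optimization machinery; the paper's approach has the advantage of explicitly exhibiting the extremal configuration (equal consecutive gaps $1/j$), which gives some structural intuition. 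Both land on the same sharp bound $c_1 \le H_p$.
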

\begin{proof}
  For $j = 1, \ldots, p$, define $\Delta_j = S_t(p + 1 - j) - S_t(p +
  2 - j)$ to be the difference in fill between the rank-$(p + 1 - j)$
  and rank-$(p + 2 - j)$ cups in state $S_t$. Lemma
  \ref{lem:recordsettingconstraints} establishes that for each $i = 1,
  \ldots, p$, the average fill of the cups with ranks $p + 2 - i,
  \ldots, p + 1$ is at most one smaller than the fill of the rank-$(p
  + 1 - i)$ cup in $S_t$. This enforces the inequality,
  \begin{equation}
    \sum_{j = 1}^{i} \Delta_j \cdot j \le i,
    \label{eq:deltaineq}
  \end{equation}
  for all $i = 1, \ldots, p$. In order to complete the proof, we will
  show that the inequalities given by \eqref{eq:deltaineq} together
  imply that
  \begin{equation}
    \sum_{j = 1}^p \Delta_j \le O(\log p).
    \label{eq:deltasum}
  \end{equation}

  Intuitively, under the constraints of \eqref{eq:deltaineq}, the
  objective function $\sum_j \Delta_j$ is maximized by setting each
  $\Delta_j = \frac{1}{j}$. To see this formally, consider values
  $\Delta_1, \ldots, \Delta_p \in [0, 1]$ that satisfy
  \eqref{eq:deltaineq}. Suppose that $\Delta_i \neq \frac{1}{i}$ for
  some $i$. We will show that there are values $\Delta_1', \ldots,
  \Delta_p'$ satisfying \eqref{eq:deltaineq} such that $\sum_i
  \Delta'_i \ge \sum_i \Delta_i$ and such that the number of $i$ for
  which $\Delta_i' = \frac{1}{i}$ is greater than the number of $i$
  for which $\Delta_i = \frac{1}{i}$. Iteratively applying this
  observation, it follows that the objective function $\sum_i
  \Delta_i$ is maximized by setting $\Delta_i = \frac{1}{i}$ for each
  $i$.

  Suppose that $\Delta_i \neq \frac{1}{i}$ for some $i$, and let $i$
  be the smallest $i$ for which $\Delta_i > \frac{1}{i}$. (Note that
  if $\Delta_i \le \frac{1}{i}$ for all $i$ then we can simply define
  $\Delta'_i = \frac{1}{i}$ for all $i$ to complete the proof.) Then
  by \eqref{eq:deltaineq}, there must be some $j < i$ for which
  $\Delta_j < \frac{1}{j}$. Set $\Delta_1', \ldots, \Delta_p'$ so that
  $\Delta'_k = \Delta_k$ for $k \neq i, j$, and such that
  $$\Delta'_i = \Delta_i - \frac{\min(i \cdot (\Delta_i - 1/i),
    j \cdot (1/j - \Delta_j))}{i},$$
  and
  $$\Delta'_j = \Delta_i + \frac{\min(i \cdot (\Delta_i - 1/i), j
    \cdot (1/j - \Delta_j))}{j}.$$ Notice that at least one of
  $\Delta'_i$ or $\Delta'_j$ is equal to $\frac{1}{i}$ or
  $\frac{1}{j}$, respectively, as desired. Moreover, since $j < i$,
  the objective function $\sum_i \Delta'_i$ is larger than the
  objective function $\sum_i \Delta_i$. To complete the proof, it
  remains to establish that the $\Delta'_k$'s satisfy
  \eqref{eq:deltaineq}.

  Notice that $\Delta_k' \le \frac{1}{k}$ for $k < i$, ensuring that
  $$\sum_{k = 1}^{r} \Delta'_k \cdot k \le r,$$
  for each $r < i$. On the other hand, for $r \ge i$,
  $$\sum_{k = 1}^{r} \Delta'_k \cdot k = \sum_{k = 1}^{r} \Delta_k
  \cdot k,$$ since $i \cdot \Delta'_i + j \cdot \Delta'_j = i \cdot
  \Delta_i + j \cdot \Delta_j$. Therefore, the $\Delta'_k$'s satisfy
  the inequalities \eqref{eq:deltaineq}, as desired.
\end{proof}

We now complete the proof of Lemma \ref{lem:boundMbyskewaverage}.
\begin{proof}[Proof of Lemma \ref{lem:boundMbyskewaverage}]
We wish to construct a state $S_t \in \mathcal{S}_t$ for some $t$ such
that
$$M \le f_1(S) + O(\log p).$$

Let $\epsilon > 0$. By the definition of $M$, there must exist a
series of steps with states $S_1, \ldots, S_t$ such that $\av_p(S_1),
\ldots, \av_p(S_{t - 1}) < M - \epsilon$, but $\av_p(S_t) \ge M -
\epsilon$. By Lemma \ref{lem:boundpplus1cup}, since step $t$ is a
record setter and since $S_t(1) \ge M - \epsilon$, we have
\begin{equation}
  S_t(p + 1) \ge M - \epsilon - O(\log p).
  \label{eq:pplus1cupfull}
\end{equation}

On the other hand, since the average of $S_t(1), \ldots, S_t(p)$ is at
least $M - \epsilon$, the amount of water that oen can move from the
cup with rank $p + 1$ in state $S_t$ to the cups with ranks $1,
\ldots, p$, without increasing $\av_p$ to more than $M$ is at most
$\epsilon \cdot p$. It follows that
\begin{equation}
  f_1(S_t) \ge S_t(p + 1) - \epsilon \cdot p.
  \label{eq:firstcupfull}
\end{equation}

Setting $\epsilon$ to be sufficiently small, \eqref{eq:pplus1cupfull}
and \eqref{eq:firstcupfull} imply that
$$f_1(S_t) \ge M - O(\log p),$$
as desired.
\end{proof}

\subsection{Proof of Lemma \ref{lem:skewaverage}}\label{sec:lemskewaverage}

In this section, we wish to prove:
\begin{lemma}[Lemma \ref{lem:skewaverage} restated]
For all $t \in \mathbb{N}$, for all $S_t \in \mathcal{S}_t$, and for
all $1 \le k \le n - p$,
$$f_k(S_t) \le 1 + \left(\frac{1}{k + 1} + \frac{1}{k + 2} + \cdots + \frac{1}{n}\right).$$
\end{lemma}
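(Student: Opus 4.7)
The plan is to closely follow the inductive argument used to prove Lemma \ref{lem:truncatedinvariant} on the $N$-truncated game, now with the truncation cap $N$ replaced by the supremum $M$ defined in \eqref{eq:M}, and to exploit the combinatorial definition of $M$ to patch the one place where the truncation hypothesis was genuinely used.

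I would induct on $t$. The base case $t=0$ is immediate since $f_k(S_0)=0$ for all $k$. For $t>0$, I introduce the same auxiliary objects as in the truncated proof: the post-fill intermediate state $I_t$, the partially-emptied state $I'_t$ obtained from $I_t$ by removing one unit of water from each cup in the set $A$ of top-$p$ cups of $I_t$ that remain in the top $p+k$ of $S_t$, and the set $B$ of the remaining top-$p$ cups of $I_t$ (those that drop below rank $p+k$ in $S_t$). The accounting identity
\[
\tot_{p+k}(S_t) \;=\; \tot_{p+k+|B|}(I'_t) \;-\; \tot_B(I'_t)
\]
again holds, and the same algebraic manipulation as in the truncated proof shows that $f_k(S_t) \le f_{k+|B|}(I'_t)$ as soon as one establishes the inequality
\[
\av_B(I'_t) \;\ge\; f_{k+|B|}(I'_t),
\]
which I will call $(\star)$. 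Given $(\star)$, the proof concludes exactly as in the truncated case: $I'_t$ has at most $|B|$ more total water than $S_{t-1}$, so $f_{k+|B|}(I'_t)\le f_{k+|B|}(S_{t-1})+|B|/(k+|B|)$, and combining with the inductive hypothesis at step $t-1$ together with the telescoping estimate $|B|/(k+|B|)\le \sum_{j=k+1}^{k+|B|}1/j$ yields the desired bound.

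The main obstacle is establishing $(\star)$. In the truncated proof this was immediate from $\tot_p(I'_t)\le pN$ combined with $B\subseteq$ top-$p$ of $I'_t$, which together yield $\av_B(I'_t) \ge$ average fill of ranks $p+1,\ldots,p+k+|B|$ in $I'_t$, and that average upper-bounds $f^N_{k+|B|}(I'_t)$. Here we only have $\av_p(S_t)\le M$ for every $t$, and since $I'_t$ is obtained from $S_t$ by putting back the one unit of water removed from each cup of $B$, we get the strictly weaker estimate $\tot_p(I'_t)\le pM + |B|$. My plan for recovering $(\star)$ is to exploit the fact that this $|B|$-unit excess sits entirely in the cups of $B$: the excess equals $\sum_{b\in B}(I'_t(b)-S_t(b)) = |B|$, so it inflates $\tot_B(I'_t)$ by exactly the same amount that it inflates the budget on the right-hand side of $(\star)$. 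A careful algebraic manipulation --- splitting $\tot_{p+k+|B|}(I'_t)=\tot_p(I'_t)+\tot_{\{p+1,\ldots,p+k+|B|\}}(I'_t)$, bounding the second summand by $(k+|B|)\,I'_t(p+1)$, and using $I'_t(b)\ge I'_t(p+1)$ for each $b\in B$ together with the defining supremum property of $M$ --- should let the extra $|B|$ cancel cleanly rather than propagate. Carrying out this cancellation, together with handling the edge case $I_t(p)\le 1$ separately (exactly as in the truncated proof), is the technical heart of the argument and the step I expect to be the principal difficulty.
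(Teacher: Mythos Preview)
Your overall strategy matches the paper exactly: the same induction on $t$, the same intermediate state $I'_t$ and sets $A,B$, the same reduction to the key inequality $(\star):\ \av_B(I'_t)\ge f_{k+|B|}(I'_t)$, and the same separate treatment of the edge case $I_t(p)\le 1$ (where the paper now invokes $\av_p(S_t)\le M$ rather than truncation). The only place your sketch is genuinely incomplete is the one you flag as the principal difficulty. Your proposed split $\tot_{p+k+|B|}(I'_t)=\tot_p(I'_t)+\sum_{j>p}I'_t(j)$ together with the bound $\sum_{j>p}I'_t(j)\le(k+|B|)\,I'_t(p+1)$ leaves you needing $\av_B(I'_t)-I'_t(p+1)\ge\frac{\tot_p(I'_t)-pM}{k+|B|}$, and your outline does not explain how ``the defining supremum property of $M$'' closes this: the right-hand side can be strictly positive while the left-hand side is zero (e.g.\ when all top $p{+}k{+}|B|$ fills in $I'_t$ coincide), and only the constraint $\av_p(S_t)\le M$ on the \emph{post}-emptying state rules such configurations out. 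So the ``clean cancellation'' you anticipate does not happen at the level of $I'_t$ alone.

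The paper establishes $(\star)$ by a different and more explicit device. It introduces the threshold $b$ defined by $\tot_A(I'_t)+b\,|B|=pM$, observes that $A$ together with the $|B|$ fullest cups of $C$ form some $p$-element subset of $S_t$ and hence have total fill at most $pM$, which gives $\av_C(I'_t)\le b$; writing $\av_B(I'_t)=b+\delta$ and expressing $f_{k+|B|}(I'_t)$ in terms of $\delta$ and $\av_C(I'_t)$ then yields $(\star)$ via a two-line case split on the sign of $\delta$. Your intuition that the excess ``sits entirely in $B$ and cancels'' can also be made rigorous, but the right split is to pull out $\tot_B$ rather than $\tot_p(I'_t)$: using the paper's identity $\tot_{p+k+|B|}(I'_t)=\tot_B(I'_t)+\tot_{p+k}(S_t)$ reduces $(\star)$ to $k\,\av_B(I'_t)\ge\tot_{p+k}(S_t)-pM$, which follows from $\av_B(I'_t)\ge I_t(p)\ge S_t(p+1)$ together with $\tot_p(S_t)\le pM$.
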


We follow the same inductive strategy as in the proof of Lemma
\ref{lem:Mfinite}. 

As in the proof of Lemma \ref{lem:Mfinite}, we consider the case in
which $I_t(p) \le 1$ separately. By the definition of $M$, we know
that $\av_p(S_t) \le M$. If $I_t(p) \le 1$, then it follows that
$S_t(p), S_t(p + 1), \ldots, S_t(n) \le 1$. Combining this with the
fact that $\av_p(S_t) \le M$, it follows that $f_k(S_t) \le 1$ for all
$k > 0$.

Now consider the case of $I_t(p) > 1$. Define the sets $A, B, C$ and
the state $I'_t$ as in the proof of Lemma \ref{lem:Mfinite}. In order
to prove the lemma, it suffices to demonstrate that,
\begin{equation}
  \av_B(I'_t) \ge f_{k + |B|}(I'_t),
  \label{eq:avB}
\end{equation}
which can then be used to complete the proof just as in Lemma
\ref{lem:Mfinite}.

Unlike in the proof of Lemma \ref{lem:Mfinite}, however, we cannot
assume that, in state $I'_t$, each of the cups in $A \cup B$ contains
fill at most $M$. (Recall that Lemma \ref{lem:Mfinite} considered only
the $N$-truncated game, which forced each cup's fill to, by
definition, never exceed $N$.) In general, a cup in $B$ at state
$I'_t$ could contain fill up to one greater than the same cup in state
$S_{t - 1}$, allowing for $\av_B(I'_t)$ to be as large as $M + 1$. To
see why having $\av_B(I'_t) > M$ may be a problem, suppose that in
state $I'_t$, the cups in $A$ each contain $M$ units of water, and the
cups in $B \cup C$ each contain $M + \epsilon$ units of water for some
$\epsilon > 0$. Then $f_{k + |B|}(I'_t) = M + \epsilon +
\frac{\epsilon |B|}{|C|} > \av_B(I'_t)$. On the other hand, in this
example, the average fill of the $p$ fullest cups in state $S_t$
satisfies $\av_p(S_t) > M$, which contradicts the definition of
$M$. This suggests that the key to proving \eqref{eq:avB} is to
exploit the requirement $\av_p(S_t) \le M$.

Define $b$ so that, in state $I'_t$, if each cup in $B$ had fill $b$,
then $\av_{A \cup B}(I'_t)$ would be $M$. That is,
$$\av_{A}(I'_t) \cdot |A| + b \cdot |B| = M.$$ By the requirement that
$\av_p(S_{t + 1}) \le M$, it must be that, in state $I'_t$, the
average fill of the $|B|$ fullest cups in $C$ is at most $b$. This, in
turn, implies that
\begin{equation}
  \av_C(I'_t) \le b.
  \label{eq:b}
\end{equation}
Define $\delta \in \mathbb{R}$ so that
$\av_B(I'_t) = b + \delta$. In state $I'_t$, the total amount of water
in cups $A \cup B$ can be written as $M \cdot p + \delta \cdot
|B|$. Thus
$$f_{k + |B|}(I'_t) \le \frac{\delta \cdot |B| + \tot_C(I'_t)}{k + |B|} = \frac{\delta \cdot |B|}{k + |B|} + \av_C(I'_t).$$
If $\delta \le 0$, then it follows that $f_{k + |B|}(I'_t) \le \av_C(I'_t) \le \av_B(I'_t)$, which implies \eqref{eq:avB} as desired. If, on the other hand, $\delta > 0$, then since \eqref{eq:b} gives $\av_C(I'_t) \le b$, we have
$$f_{k + |B|}(I'_t) \le \frac{\delta \cdot |B|}{k + |B|} + \av_C(I'_t) \le \delta + b = \av_B(I'_t),$$
which again implies \eqref{eq:avB}, as desired.

\subsection{A lower bound of $\Omega(\log (n - p))$}\label{sec:lowerbound}

The known lower-bound constructions for the $p$-processor cup game on
$n$ cups achieve backlog $\Omega(\log (n/p))$ \cite{BenderFaKu19}. In this section, we
prove a slightly stronger bound of $\Omega(\log (n - p))$, which
completes the proof of Theorem \ref{thm:lognbacklog}, establishing that
the greedy algorithm for the empty is asymptotically optimal when $n
\ge 2p$.

Formally, we prove the following proposition:
\begin{proposition}
  In the $p$-processor cup game on $n$ cups, there exists an adaptive
  strategy for the filler that guarantees backlog at least
  $$\left(\frac{1}{2} + \frac{1}{3} + \cdots + \frac{1}{n - p + 1}\right) \ge \Omega(\log (n - p))$$
  after some step, regardless of the strategy followed by the emptier.
  \label{prop:lowerbound}
\end{proposition}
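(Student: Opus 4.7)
The plan is to adapt the known single-processor lower bound (e.g.~\cite{AdlerBeFr03}) by letting $p-1$ of the filler's units per step serve as ``anchors'' that tie up $p-1$ of the emptier's removals. Designate cups $1,\ldots,p-1$ as the \emph{anchor set} and cups $p,\ldots,n$ as the \emph{active set}, which has $m := n-p+1$ cups. At each step the filler pours exactly $1$ unit into each anchor and uses its remaining unit to drive an adaptive single-processor lower-bound strategy on the active set; such a strategy forces backlog at least $H := \sum_{j=2}^{m} 1/j$ after some bounded number $T(m)$ of steps starting from an empty $m$-cup state. Call a step \emph{regular} if the emptier removes from every anchor cup (so exactly one active cup is emptied, and the active-set exchange is identical to a single round of the single-processor cup game), and \emph{growth} otherwise. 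On a growth step at least one anchor is neglected, so the filler pours $p-1$ units into the anchor set while the emptier removes at most $p-2$, yielding $\tot_{[p-1]}(S_t) \ge \tot_{[p-1]}(S_{t-1}) + 1$.

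The filler executes the single-processor strategy against a \emph{virtual} active-set state, which is reset to $0$ in every coordinate immediately after each growth step and otherwise updated during each regular step by the filler's pour and the emptier's single active-cup removal. A short induction, using that $\max(x-1,0) \ge \max(y-1,0)$ whenever $x \ge y$, shows that the actual active fills pointwise dominate the virtual fills throughout any maximal run of regular steps (an \emph{epoch}). Consequently, any epoch of length at least $T(m)$ forces the virtual — and therefore the actual — backlog in the active set to reach $H$.

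The argument finishes with a win/win analysis. If some epoch lasts $T(m)$ steps, the active-set bound is attained directly. Otherwise growth steps occur at least once per $T(m)+1$ consecutive steps, so after $\lceil (p-1)H \rceil \cdot (T(m)+1)$ steps the emptier has allowed at least $(p-1)H$ growth steps, giving $\tot_{[p-1]}(S_t) \ge (p-1) H$ and hence some anchor cup of fill at least $H$ by averaging. Either way, $S_t(1) \ge H = \sum_{j=2}^{n-p+1} 1/j$ for some step $t$. The most delicate step is the reconciliation between the virtual state (used only for the filler's bookkeeping) and the actual cup state across epoch resets: residual fill from past epochs is ignored by the filler's strategy, and one must check that this only helps the filler — the pointwise-domination invariant is precisely what is needed to transfer the single-processor guarantee to the true state without loss.
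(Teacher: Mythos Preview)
Your proposal is correct and follows essentially the same approach as the paper's proof: anchor the cups $1,\ldots,p-1$, run the single-processor lower-bound construction on the remaining $n-p+1$ cups, and restart that construction after each growth step, concluding via the same win/win between long epochs and accumulated anchor fill. The only cosmetic difference is that the paper handles the epoch resets by invoking its Lemma~\ref{lem:lowerboundlemma} in the form ``from any initial state $S_0$, some cup gains at least $H$ in fill after $n-p$ steps,'' whereas you achieve the same effect via the virtual-state/pointwise-domination bookkeeping; these are two phrasings of the same observation.
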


We begin with a well-known construction in the case of $p = 1$
\cite{BenderFaKu19, BenderFeKr15, DietzRa91}, which we reproduce for completeness.
\begin{lemma} 
  In the $1$-processor cup game on $n$ cups, starting in any initial
  cup state $S_0$, there is a strategy that the filler can follow so
  that after step $n - 1$, some cup has fill at least
  $$\frac{1}{2} + \frac{1}{3} + \frac{1}{4} + \cdots + \frac{1}{n}.$$
  \label{lem:lowerboundlemma}
\end{lemma}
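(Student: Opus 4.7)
The plan is to exhibit an explicit strategy for the filler based on a nested chain of ``active'' subsets of cups $A_0 \supsetneq A_1 \supsetneq \cdots \supsetneq A_{n-1}$, with $|A_t| = n - t$, and to show by a direct bookkeeping argument that the (unique) cup remaining in $A_{n-1}$ has accumulated fill at least $\frac{1}{n} + \frac{1}{n-1} + \cdots + \frac{1}{2}$ by the end of step $n-1$.

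Here is the strategy. Set $A_0 = [n]$. At step $t \in \{1, 2, \ldots, n-1\}$, the filler puts exactly $\frac{1}{n - t + 1}$ units of water into each cup in $A_{t-1}$; since $|A_{t-1}| = n - t + 1$ this uses exactly $1$ unit of water total and respects the per-cup cap. After the emptier removes a unit of water from some cup $c^\star$, the filler defines $A_t$ by removing from $A_{t-1}$ the cup $c^\star$ (if $c^\star \in A_{t-1}$), or else an arbitrary cup of $A_{t-1}$. In either case $|A_t| = n - t$.

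The key invariant is that for every step $s \le t$, no cup in $A_t$ was emptied by the emptier at step $s$. Indeed, $A_t \subseteq A_{s}$, and whenever the emptier empties a cup $c$ that happens to lie in $A_{s-1}$, our rule guarantees $c \notin A_{s}$, hence $c \notin A_t$. Consequently, every cup $c \in A_t$ received $\frac{1}{n-s+1}$ water at each step $s \in \{1, \ldots, t\}$ and had no water removed, so
\[
S_t(c) \;\ge\; S_0(c) + \sum_{s=1}^{t} \frac{1}{n - s + 1}.
\]
Applying this at $t = n-1$ to the (unique) cup in $A_{n-1}$ yields a fill of at least $S_0(c) + \sum_{s=1}^{n-1} \frac{1}{n - s + 1} = \frac{1}{2} + \frac{1}{3} + \cdots + \frac{1}{n}$, and since $S_0(c) \ge 0$ the lemma follows.

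There is no real obstacle in this argument; the only subtle step is handling the case in which the emptier chooses a cup outside $A_{t-1}$. This is absorbed by the ``arbitrary removal'' rule, which never affects the invariant because the bookkeeping only tracks cups that \emph{remain} in the active set, and those cups are precisely the ones the emptier did not touch.
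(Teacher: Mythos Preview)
Your proof is correct and follows essentially the same approach as the paper: in step $t$ the filler spreads one unit evenly over the $n-t+1$ cups not yet emptied from, and the surviving cup accumulates $\sum_{s=2}^{n}\frac{1}{s}$. Your version is slightly more careful in that it explicitly handles the case where the emptier selects a cup outside the current active set (by discarding an arbitrary active cup instead), a case the paper glosses over; otherwise the two arguments are identical.
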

\begin{proof}
  In the first step, the filler places $1/n$ water in each of the $n$
  cups. In the second step, the filler places $1/(n - 1)$ water in
  each of the $n - 1$ cups that have not yet been emptied
  from. Continuing like this, in the $i$-th step, the filler places $1
  / (n - i + 1)$ units of water in each of the cups that have not yet
  been emptied from. After the $(n - 1)$-th step, there is one cup
  containing fill at least
  $$\frac{1}{2} + \frac{1}{3} + \frac{1}{4} + \cdots + \frac{1}{n}$$
  greater than it contained in $S_0$.
\end{proof}

Using Lemma \ref{lem:lowerboundlemma} as a subprocedure, we construct
our strategy for the filler in the $p$-processor game.
\begin{proof}[Proof of Proposition \ref{prop:lowerbound}]
Let $A = \{1, 2, \ldots, p - 1\}$ and $B = \{p, p + 1, p + 2, \ldots,
n\}$. At each step, the filler places one unit in each cup from $A$,
and then distributes up to one unit (using a yet to be described
strategy) to the cups in $B$.

Between successive steps, the amount of water in the cups $A$ can
never decrease. Call a step $t$ a \defn{growth step} if the emptier
removes water from more than one cup in $B$. Since at least one cup in
$A$ must be neglected by the emptier during a growth step, the total
fill of the cups in $A$ grows by at least $1$.

The filler's strategy proceeds in phases, with a new phase beginning
after each growth step. In each phase, the filler applies Lemma
\ref{lem:lowerboundlemma} to the $n - p + 1$ cups $B$, until either
there is a step in which the emptier removes a unit of water from more
than one cup in $B$, or until $n - p$ steps have passed. In the
former case, a growth step occurs, and in the latter case, some
cup in $B$ has fill at least,
\begin{equation}
  \frac{1}{2} + \frac{1}{3} + \cdots + \frac{1}{n - p + 1}.
  \label{eq:backlogend}
\end{equation}

It follows that, until a backlog of \eqref{eq:backlogend} is achieved,
the filler can continue to generate additional growth steps. After
sufficiently many growth steps, a backlog of \eqref{eq:backlogend}
must be achieved simply by virtue of the water in the cups $A$,
completing the proof.
\end{proof}

\section{Analyzing Smoothed Greedy}
\label{sec:smoothedgreedy}

In this section, we analyze the \defn{smoothed greedy algorithm} for
the $p$-processor cup game on $n$ cups \cite{BenderFaKu19}. At the beginning of
algorithm, the emptier selects independent values $r_j$ uniformly at
random from the interval $[0, 1]$ for each $j = 1, \ldots, n$. Prior
to the first step of the game, the emptier inserts $r_j$ water into
each cup $j$. The emptier's strategy at the end of each step is then
to simply remove $1$ unit of water from each of the $p$ fullest
cups. If, however, one or more of the $p$ fullest cups contains less
than $1$ unit of water, then the emptier does not remove from those
cups. This ensures the important property that the fractional amount
of water (i.e., the amount of water modulo $1$) in each cup $j$ after
a step $t$, is a function only of the initial random offset $r_j$ and
the filler's actions in the first $t$ steps.

It was shown by Bender et al. \cite{BenderFaKu19} that, if the emptier
has $\epsilon \ge \frac{1}{\polylog n}$ resource augmentation, then
with high probability in $n$ the smoothed greedy algorithm achieves
backlog $O(\log \log n)$ in each step of the single-processor cup
game.

In this section we present a new analysis of the smoothed greedy
algorithm that applies without the use of resource augmentation, and
to $p > 1$. Prior to this result, no sub-logarithmic bounds were known
for any randomized algorithm without resource augmentation, including in
the case of $p = 1$.

\begin{theorem}[Theorem \ref{thm:randomized_pre} Restated]
  Let $c$ be at least a sufficiently large constant. Then with
  probability at least $1 - \exp\left(\log^c n\right)$, the smoothed
  greedy algorithm achieves backlog $O(\log p + c \log \log n)$ after
  all of the first $\exp(\log^c n)$ steps.
  
  \label{thm:randomized}
\end{theorem}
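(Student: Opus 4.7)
The plan is to maintain $\Theta(\log\log n)$ concurrent analyses, one per \defn{level}, where a cup is level-$i$ active after step $t$ if its fill $h_j(t)$ is at least $2(i - 1)$. Write $h^{(i)}_j(t) = \max(h_j(t) - 2(i - 1), 0)$ for its level-$i$ fill, let $A^{(i)}(t)$ denote the number of level-$i$ active cups after step $t$, and let $m_i$ be the supremum of $A^{(i)}(t)$ over the first $\exp(\log^c n)$ steps. The goal is to establish, with probability at least $1 - \exp(-\log^c n)$, that $m_{i+1} \le O(m_i^{0.75})$ whenever $m_i$ exceeds a threshold of $\Theta(p\log n) + \polylog(n)$. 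Iterating this inequality for $O(\log\log n)$ levels drives $m_i$ below the threshold at some level $i^{\ast} = O(\log\log n)$, at which point applying Theorem \ref{thm:lognbacklog} to the level-$i^{\ast}$ subgame (whose effective number of cups is at most $m_{i^{\ast}}$) gives level-$i^{\ast}$ fill at most $O(\log m_{i^{\ast}}) = O(\log p + \log\log n)$, yielding total backlog $2 i^{\ast} + O(\log p + \log\log n) = O(\log p + c\log\log n)$.

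First I would introduce the \defn{level-$i$ integer fill} $T^{(i)}(t) = \sum_j \max(\lfloor h^{(i)}_j(t) - 1\rfloor, 0)$ and the notion of a \defn{level-$i$ threshold crossing}: cup $j$ crosses level-$i$ threshold $s \ge 2$ during step $t$ if the filler's deposit pushes $h^{(i)}_j$ from below $s$ up to at least $s$. Two deterministic facts drive the inductive step. First, $T^{(i)}(t) \ge A^{(i+1)}(t)$, since every level-$(i+1)$ active cup contributes at least one to the sum. Second, on any step with at least $p$ level-$(i+1)$ active cups the greedy rule causes the emptier to remove exactly $p$ units from $T^{(i)}$. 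Consequently, if $T^{(i)}$ reaches some value $\ell$ at step $t_1$, there must exist $t_0 \le t_1$ during which the filler performed at least $p(t_1 - t_0 + 1) + \Omega(\ell)$ level-$i$ threshold crossings; I call the excess the \defn{level-$i$ threshold bolus} of the interval $[t_0, t_1]$, and the inductive step reduces to bounding the maximum such bolus by $O(m_i^{0.75})$.

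The main obstacle is that which cups are level-$i$ active is itself a function of the random offsets $r_j$, so one cannot directly union-bound over fixed cup sets. I would resolve this through a two-part argument. The \emph{investment step} is deterministic: any cup that is not level-$i$ active at step $t_0 - 1$ must absorb at least $2$ units of water before it can contribute a level-$i$ crossing, so each such ``bad investment'' is a net loss against the bolus. Hence, unless the filler concentrates all but $O(m_i)$ units of water into some set $S$ of at most $O(m_i)$ cups that was already level-$i$ active at step $t_0 - 1$, the bolus is automatically non-positive. Since the set of cups active at step $t_0 - 1$ can be read off from the state at that point, we may condition on it and treat $S$ as fixed with respect to the ensuing randomness in how fractional parts evolve in $S$; a short side argument handles the $O(m_i)$ ``leaked'' water that lands outside $S$. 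The \emph{probabilistic step} then bounds, for any fixed $S$ of size $O(m_i)$ and any fixed interval $[t_0, t_1]$, the number of level-$i$ threshold crossings inside $S$: because the fractional part of each $h_j$ equals $r_j$ plus the filler's deposits modulo $1$ with the $r_j$'s independent uniforms on $[0, 1]$, each deposit contributes a crossing with probability equal to its amount and independently of the other cups, so a Chernoff/Azuma bound gives at most $p(t_1 - t_0 + 1) + O(m_i^{0.75})$ crossings except with probability $\exp(-\Omega(m_i^{1/2}))$. A union bound over the $\exp(\polylog(n))$ relevant triples $(S, t_0, t_1)$ and over the $O(\log\log n)$ levels keeps the total failure probability below $\exp(-\log^c n)$ provided $m_i \ge \polylog(n)$.

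Combining the pieces, $m_{i+1} \le T^{(i)} \le O(m_i^{0.75})$ at every level with the desired probability, so the iteration terminates at some $i^{\ast} = O(\log\log n)$ with $m_{i^{\ast}} \le O(p\log n) + \polylog(n)$. Applying Theorem \ref{thm:lognbacklog} to the level-$i^{\ast}$ subgame on those at most $m_{i^{\ast}}$ cups produces level-$i^{\ast}$ fill at most $O(\log p + \log\log n)$, and hence total backlog $O(\log p + c\log\log n)$ throughout the first $\exp(\log^c n)$ steps, completing the analysis.
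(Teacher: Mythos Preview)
Your high-level architecture matches the paper's: levels spaced by $2$, level-$i$ integer fill $T^{(i)}$, threshold crossings and boluses, the ``bad investment'' argument showing that a positive bolus forces concentration of water in a small set $S$, and the final application of Theorem~\ref{thm:lognbacklog} at the terminal level. The gap is in your probabilistic step.

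You propose to either ``condition on'' the set of level-$i$ active cups at $t_0-1$ and treat $S$ as fixed, or to union-bound over ``the $\exp(\polylog(n))$ relevant triples $(S,t_0,t_1)$''. Neither works. There is no fresh randomness after $t_0-1$: the only randomness in the entire game is the initial offsets $r_j$, and the state at $t_0-1$ (hence the active set) is a deterministic function of those offsets, so conditioning on it destroys the independence and uniformity you need for Chernoff. As for the union bound, at early levels $m_i$ can be $\Theta(n)$, so the number of candidate sets $S$ of size $O(m_i)$ is $2^{\Theta(n)}$, which swamps the per-set failure probability $\exp(-\Omega(m_i^{1/2})) = \exp(-\Omega(\sqrt{n}))$.

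The paper's resolution is to avoid any dependence on $S$ in the probability bound. The condition ``there exists a set $S$ of size at most $m$ receiving at least $p(t_1-t_0+1)-m$ units of water during $[t_0,t_1]$'' is a \emph{deterministic} property of the oblivious filler's strategy. When it holds, it forces $\sum_j y_j \le 2m$, where $y_j \in [0,1)$ is the fractional part of the total water placed in cup $j$ over the interval; one then applies a single Chernoff bound to the \emph{total} number of threshold crossings $\sum_j x_j + \sum_j Y_j$ over \emph{all} cups (the $Y_j$ being independent Bernoullis with $\Pr[Y_j=1]=y_j$), using $S$ only to control the mean. The union bound is then only over pairs $(t_0,t_1)$, of which there are at most $\exp(2\log^c n)$. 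This is the missing idea in your sketch.
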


It is known that an oblivious filler can achieve backlog $\Omega(\log
\log \frac{n}{p})$ (against any emptying strategy) with probability at
least $\frac{1}{\poly(n)}$ \cite{BenderFaKu19}. Theorem \ref{thm:randomized}
matches this bound when $n$ is large relative to $p$ (i.e., $p \le
\polylog n$). Closing the gap between the upper and lower bounds when
$p \gg \polylog(n)$ remains an open question.

The proof of Theorem \ref{thm:randomized} analyzes $\Theta(\log \log
n)$ cup games concurrently, where the participant cups in the
level-$i$ cup game, also known as the \defn{level-$i$ active cups},
are the cups containing fill $2(i - 1)$ or greater. We denote the
\defn{number of level-$i$ active cups after step $t$ by
  $A^{(i)}(t)$}. If $m_i$ is the maximum number of level-$i$ active
cups during any of the first $2^{\polylog(n)}$ steps, then we wish to
show that $m_{i + 1} \le m_i^{0.75}$ (unless $m_i$ is already quite
small).

Define $h_j(t)$ to be the height (i.e. fill) of cup $j$ after step
$t$. Define the \defn{level-$i$ fill} $h^{(i)}_j(t)$ of cup $j$ after
step $t$ to be $\max(h_j(t) - 2 (i - 1), 0)$.  The set of
level-$(i + 1)$ active cups after step $t$ is precisely the set of
cups for which $h^{(i)}_j(t) \ge 2$.

Rather than bounding the number of level-$(i + 1)$ active cups
directly, we instead bound a larger quantity that we call the
level-$i$ integer fill. The \defn{level-$i$ integer fill} after step
$t$, denoted by $T^{(i)}(t)$, is given by
$$T^{(i)}(t) = \sum_{j = 1}^n \max\left(\lfloor h^{(i)}_j(t) - 1
\rfloor, 0 \right).$$ We say a cup \defn{crosses a level-$i$
  threshold} $s$ at step $t$, whenever the amount of water $f$ placed
by the filler into cup $j$ during step $t$ satisfies $h^{(i)}_j(t - 1)
< s \le h^{(i)}_j(t - 1) + f$ for some integer $s \ge 2$. One can
think of the level-$i$ integer fill as counting the number of
level-$i$ threshold crossings that have not yet been undone by the
emptier.

If the emptier crosses $p(t_1 - t_0 + 1) + s$ level-$i$ thresholds
during a sequence of steps $t_0, \ldots, t_1$, then we say that
$\max(s, 0)$ is the \defn{level-$i$ threshold bolus} during the step
interval. We begin by showing that, in order for the filler to achieve
a large level-$i$ integer fill $T^{(i)}(t_1)$ at some step $t_1$,
there must be a step interval $t_0, \ldots, t_1$ in which the filler
achieves a level-$i$ threshold bolus of size close to
$T^{(i)}(t_1)$. When $p = 1$, we can simply select $t_0$ to be the
largest $t_0 < t_1$ for which $T^{(i)}(t_0 - 1) = 0$. This ensures
that during each step $t_0, \ldots, t_1$, the emptier makes progress
at least $1$ at decreasing the level-$i$ integer fill, which therefore
forces the filler to cross $(t_1 - t_0 + 1) + T^{(i)}(t_1)$
level-$i$ threshold crossings in order to achieve level-$i$ integer
fill $T^{(i)}(t_1)$ after step $t_1$. The same choice of $t_0$ does
not work for $p > 1$, however, since there may be steps in the
interval $t_0, \ldots, t_1$ during which fewer than $p$ cups are
level-$i$ active, and thus the emptier makes progress less than $p$ at
decreasing the integer fill. Lemma \ref{lem:fillerprogress} uses an
alternative choice for $t_0$ to prove the result for $p > 1$.

\begin{lemma}
Let $d$ be a sufficiently large constant. For any step $t_1$, there
must be some $t_0 \le t_1$ such that the number of level-$i$
thresholds crossed in steps $t_0, \ldots, t_1$ is at least
$$p(t_1 - t_0 + 1) + T^{(i)}(t_1) - d p \log n.$$
\label{lem:fillerprogress}
\end{lemma}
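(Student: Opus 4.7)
The plan is to choose $t_0$ just after the most recent \emph{inefficient} step before $t_1$, where I call step $t$ inefficient if fewer than $p$ cups are level-$(i+1)$ active in the intermediate state $I_t$; that is, fewer than $p$ cups have $h^{(i)}_j(I_t) \ge 2$. The key observation is that at every non-inefficient step the $p$ fullest cups in $I_t$ all satisfy $h^{(i)}_j \ge 2$, so the greedy emptier removes a full unit from each (those cups also have $h_j \ge 2 \ge 1$) and decreases the level-$i$ integer fill $T^{(i)}$ by exactly $p$. Writing $c_t$ for the number of level-$i$ thresholds crossed at step $t$ (the filler's contribution to $T^{(i)}$) and $e_t$ for the emptier's decrease at step $t$, the telescoping identity
\[
\sum_{t=t_0}^{t_1} c_t \;=\; T^{(i)}(t_1) - T^{(i)}(t_0-1) + \sum_{t=t_0}^{t_1} e_t
\]
reduces the lemma to exhibiting a $t_0 \le t_1$ with $T^{(i)}(t_0 - 1) + \sum_{t=t_0}^{t_1}(p - e_t) \le dp\log n$.

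For the choice, I let $\tau$ be the largest step in $\{0, 1, \dots, t_1\}$ at which $A^{(i+1)}(I_\tau) < p$ (taking $\tau = 0$ if no such step exists). In the main case $\tau < t_1$, set $t_0 = \tau + 1$; by construction $e_t = p$ for every $t \in [t_0, t_1]$, so the wasted-emptying sum vanishes. If $\tau = 0$, then $T^{(i)}(0) = 0$ because the smoothed-greedy initialization places only $r_j \in [0,1]$ water in each cup, giving $h^{(i)}_j(0) = 0$ for all $i \ge 1$. If $\tau \ge 1$, the fact that the emptier only decreases levels gives $A^{(i+1)}(\tau) \le A^{(i+1)}(I_\tau) < p$, and every cup that contributes to $T^{(i)}(\tau)$ is level-$(i+1)$ active with $h^{(i)}_j \le h_j \le O(\log n)$; here I invoke Theorem~\ref{thm:lognbacklog} (which applies to smoothed greedy because after the initial offset its dynamics are identical to deterministic greedy started from a state of backlog at most $1$, which can be absorbed into the filler's first few moves). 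Hence $T^{(i)}(t_0 - 1) \le (p-1)\cdot O(\log n) \le dp\log n$ once $d$ is a sufficiently large constant.

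The edge case is $\tau = t_1$, where the above construction would give the empty interval $t_0 = t_1 + 1$. Here the same counting applied directly at $t_1$ shows $T^{(i)}(t_1) \le O(p\log n)$, so choosing $t_0 = t_1$ makes the right-hand side of the lemma's inequality equal to $p + T^{(i)}(t_1) - dp\log n$, which is non-positive for $d$ large enough, and the inequality holds vacuously. The main technical obstacle I foresee is verifying the bookkeeping that underpins the telescoping identity: one must check that the filler's move contributes exactly ``number of integer thresholds $s \ge 2$ crossed'' to $T^{(i)}$, even at cups with $h^{(i)}_j \in [0,1)$ where the floor in the definition $\max(\lfloor h^{(i)}_j - 1\rfloor, 0)$ is negative; and that the emptier's decrease equals $\min(p, A^{(i+1)}(I_t))$, which requires that the greedy emptier actually removes a full unit from each of the top $p$ cups at a non-inefficient step (guaranteed since those cups have $h_j \ge 2$). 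Once that bookkeeping is pinned down at the case boundaries, the counting argument above delivers the desired inequality.
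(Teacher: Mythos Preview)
Your proposal is correct and follows essentially the same approach as the paper. Both arguments use the telescoping identity for $T^{(i)}$ together with the deterministic $O(\log n)$ backlog bound (Theorem~\ref{thm:lognbacklog}) to find a recent step $t_0$ at which $T^{(i)}(t_0-1) \le O(p\log n)$ and after which the emptier decreases $T^{(i)}$ by exactly $p$ each step. The only cosmetic difference is the parameterization of $t_0$: the paper picks $t_0$ directly as the last step with $T^{(i)}(t_0-1) \le d(p-1)\log n$, whereas you pick it via the last inefficient step (fewer than $p$ level-$(i+1)$ active cups in $I_\tau$) and then use the backlog bound to conclude $T^{(i)}(\tau)$ is small; these are two sides of the same implication.
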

\begin{proof}
By Theorem \ref{thm:lognbacklog}, there exists some constant $d$ such that no
cup $j$ ever has fill greater than $d \log n$. (Note that Theorem
\ref{thm:lognbacklog} requires that the cups initially be empty, which
is not the case for the smoothed greedy algorithm; nonetheless,
Theorem \ref{thm:lognbacklog} can instead be applied to the water
sitting above height one in each cup, treating cups that contain $1$
or less units of water as empty.)

It follows that if $T^{(i)}(t) > d(p - 1) \log n$ for some step $t$,
then after step $t$ there must be at least $p$ cups that are level-$i$
active. Thus during step $t$, the emptier is able to remove water from
$p$ cups, decreasing the level-$i$ integer fill by $p$.
  
Let $t_0$ be the largest $t_0 \le t_1$ such that $T^{(i)}(t_0 - 1) \le
d (p - 1) \log n$. Then during each step $t_0, \ldots, t_1$, the
emptier reduces the level-$i$ integer fill by $p$. In order so that
the integer fill increases to $T^{(i)}(t_1)$ by the end of step $t_1$,
it follows that the number of level-$i$ threshold crossings in steps
$t_0, \ldots, t_1$ must be at least
\begin{equation*}
  \begin{split}
    & p(t_1 - t_0 + 1) + T^{(i)}(t_1) - T^{(i)}(t_0 - 1) \\
    & \ge p(t_1 - t_0 + 1) + T^{(i)}(t_1) - d(p - 1)\log n.
  \end{split}
\end{equation*}
\end{proof}

Next we show that, in order for the filler to cross $p(t_1 - t_0 + 1)$
or more level-$i$ thresholds during steps $t_0, \ldots, t_1$, the
filler must cross all of those level-$i$ thresholds in some relatively
small set of cups. Specifically, the set of cups in which the
level-$i$ threshold crossings occur cannot exceed $2 A^{(i)}(t_0 - 1)$
(i.e., twice the number of level-$i$ active cups at step $t_0 - 1$).
\begin{lemma}
Suppose that in steps $t_0, \ldots, t_1$, at least $p(t_1 - t_0 + 1)$
level-$i$ thresholds are crossed. Then the set of cups $S$ in which
those threshold crossings occur must satisfy
$$|S| \le 2 A^{(i)}(t_0 - 1).$$
\label{lem:cupworkingset}
\end{lemma}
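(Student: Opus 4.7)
For each cup $j$, let $w_j$ denote the total water placed into $j$ by the filler during steps $t_0, \ldots, t_1$, and $c_j$ the number of level-$i$ threshold crossings in $j$ during that interval. The filler's budget gives $\sum_j w_j \le p(t_1 - t_0 + 1)$, so by hypothesis $\sum_j (c_j - w_j) \ge 0$. Partition $S = S_a \sqcup S_b$, where $S_a$ is the set of cups in $S$ that are level-$i$ active at step $t_0 - 1$ and $S_b := S \setminus S_a$; by definition $|S_a| \le A^{(i)}(t_0 - 1)$.

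The crux of the proof is to establish $c_j \le w_j + 1$ for $j \in S_a$ and $c_j \le w_j - 1$ for $j \in S_b$. I would prove these by tracking the total upward filler movement $U_j \le w_j$ induced in $h^{(i)}_j$. For $j \in S_b$, $h^{(i)}_j$ starts at $0$, so reaching the first crossable threshold (which is $2$) already requires $U_j \ge 2$. Each subsequent crossing consumes at least one further unit of $U_j$: either it is the up-crossing of a new, higher integer threshold (requiring at least $1$ further unit of upward movement), or it is a re-crossing of an already-crossed threshold after a descent, which the emptier (removing $1$ unit per step when it acts on $j$) and filler must jointly produce. Summing yields $U_j \ge 2 + (c_j - 1) = c_j + 1$, hence $c_j \le w_j - 1$. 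For $j \in S_a$, the same per-extra-crossing cost of $1$ applies, but the initial crossing may cost only $\varepsilon > 0$ if $j$ begins just below an integer threshold; this yields the weaker $U_j \ge c_j - 1$, hence $c_j \le w_j + 1$.

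Summing the per-cup bounds over all $j$ (and using $c_j - w_j \le 0$ for $j \notin S$, since $c_j = 0$) gives
\[
0 \;\le\; \sum_j (c_j - w_j) \;\le\; |S_a| - |S_b|,
\]
so $|S_b| \le |S_a| \le A^{(i)}(t_0 - 1)$, and hence $|S| = |S_a| + |S_b| \le 2\,A^{(i)}(t_0 - 1)$. The main obstacle is the rigorous justification of the per-extra-crossing cost of $1$ unit of filler water; this rests essentially on the fact that the emptier removes water in unit-size chunks, so that after any descent past a threshold the filler must again contribute at least one further unit of water to re-cross it. Without this integer-level discreteness, the gap between $c_j$ and $w_j$ would not be large enough to drive the argument.
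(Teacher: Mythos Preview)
Your approach is correct and matches the paper's: partition $S$ into the cups that are level-$i$ active at step $t_0-1$ versus those that are not, establish the per-cup bounds $c_j \le w_j + 1$ on the former and $c_j \le w_j - 1$ on the latter, and sum. The paper simply asserts those two per-cup inequalities with minimal justification; your per-crossing accounting is slightly loose as written (an individual later crossing can consume less than one unit of filler water when the previous crossing overshoots its threshold), but the aggregate bounds you need do hold, and your identification of the emptier's unit-size removals as the structural fact driving them is exactly right.
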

\begin{proof}
Suppose that $|S| > 2A^{(i)}(t_0 - 1)$. Let $X$ denote the cups in $S$
that are level-$i$ active at the beginning of step $t_0$, and $Y$
denote the cups in $S$ that are level-$i$ inactive at the beginning of
step $t_0$. Let $s_j$ denote the amount of water placed by the filler
into cup $j$ during steps $t_0, \ldots, t_1$. Define $s_X = \sum_{j
  \in X} s_j$ and $s_Y = \sum_{j \in Y}s_Y$ to be the total water
placed by the filler into cups $X$ and $Y$ during the step interval.

The number of level-$i$ thresholds crossed in each cup $j \in X$
during steps $t_0, \ldots, t_1$ is at most $s_j + 1$. The number of
level-$i$ thresholds crossed in cups $X$ during the steps $t_0,
\ldots, t_1$ is therefore at most
$$\sum_{j \in X} (s_j + 1) = s_X + |X| = s_X + A^{(i)}(t_0 - 1).$$

Since each cup $j \in Y$ is inactive at the beginning of step $t_0 -
1$, but crosses at least one level-$i$ threshold during steps $t_0,
\ldots, t_1$, it must be that $s_j \ge 2$. Moreover, the number of
level-$i$ thresholds crossed in cup $j$ during steps $t_0, \ldots,
t_1$ is at most $s_j - 1$ (because cup $j$ is level-$i$ inactive prior
to step $t_0$). It follows that the number of level-$i$ thresholds
crossed in cups $Y$ during steps $t_0, \ldots, t_1$ is at most,
$$\sum_{j \in Y} (s_j - 1) = s_Y - |Y| \le s_Y - A^{(i)}(t_0 - 1) -
1,$$ where the final inequality follows from the assumption that $|Y|
> A^{(i)}(t_0 - 1)$.

Combining the level-$i$ threshold crossings in $X$ and $Y$, the total
number of crossings is at most
\begin{align*}
&  s_Y - A^{(i)}(t_0 - 1) - 1 + s_Y + A^{(i)}(t_0 - 1) \\ & < s_X + s_Y \le
p(t_1 - t_0 + 1),\end{align*} which completes the proof of the lemma.
\end{proof}

The next lemma establishes a lower bound on the amount of water that
the filler must place in cups $S$ in order for a large number of
level-$i$ threshold crossings to occur in those cups.
\begin{lemma}
Suppose that in steps $t_0, \ldots, t_1$, at least $p(t_1 - t_0 + 1)$
level-$i$ thresholds are crossed in a set of cups $S$. Then at least
$p(t_1 - t_0 + 1) - |S|$ units of water must be placed into cups $S$
during steps $t_0, \ldots, t_1$.
\label{lem:lotsofwaterinS}
\end{lemma}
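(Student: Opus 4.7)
The plan is to prove the lemma through a per-cup bound: for each cup $j \in S$, the number $c_j$ of level-$i$ threshold crossings in cup $j$ during $[t_0, t_1]$ satisfies $c_j \le s_j + 1$, where $s_j$ denotes the total amount of water placed by the filler in cup $j$ during the interval. Given this, summing over $j \in S$ yields $\sum_{j \in S} s_j \ge \sum_{j \in S}(c_j - 1) \ge p(t_1 - t_0 + 1) - |S|$, which is exactly the desired conclusion.

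The heart of the argument is the per-cup inequality, which I would prove via a telescoping argument on $L_j(t) = \lfloor h_j(t)\rfloor$, the integer part of the (non-level-adjusted) fill. Because the filler places at most one unit of water in any cup during any step, every filler action changes $L_j$ by either $0$ or $+1$, with a $+1$ happening precisely when $h_j$ crosses an integer upward. Because the smoothed greedy emptier either removes a full unit (when $h_j \ge 1$) or removes nothing at all, each emptier action on cup $j$ changes $L_j$ by either $0$ (removing no water) or $-1$ (removing exactly one unit of water). Letting $u_j$ be the total number of upward integer crossings of $h_j$ caused by the filler during $[t_0, t_1]$ and letting $e_j$ be the total water the emptier removes from cup $j$, telescoping gives
\begin{equation*}
  L_j(t_1) - L_j(t_0 - 1) = u_j - e_j.
\end{equation*}
Combining this with the elementary bounds $L_j(t_1) \le h_j(t_1)$ and $L_j(t_0 - 1) \ge h_j(t_0 - 1) - 1$, together with the identity $h_j(t_1) = h_j(t_0 - 1) + s_j - e_j$, I obtain $u_j - e_j \le s_j - e_j + 1$, hence $u_j \le s_j + 1$.

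Finally, each level-$i$ threshold crossing at some $s \ge 2$ in $h^{(i)}_j$ is equivalent to an upward integer crossing of $h_j$ at the integer $s + 2(i - 1) \ge 2$, so the level-$i$ threshold crossings counted by $c_j$ form a subset of all upward integer crossings of $h_j$, giving $c_j \le u_j \le s_j + 1$. I do not anticipate any real obstacle beyond verifying that the $\max(\cdot, 0)$ in the definition $h^{(i)}_j = \max(h_j - 2(i-1), 0)$ does not introduce spurious crossings, which it does not because the thresholds of interest sit at height $\ge 2i$ in $h_j$, well above the truncation region. The only other delicate bookkeeping point is the equality $h_j(t_1) = h_j(t_0 - 1) + s_j - e_j$, which holds exactly under the smoothed greedy emptying rule since the emptier always removes either $0$ or exactly $1$ unit from any selected cup.
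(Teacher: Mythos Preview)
Your proposal is correct and follows essentially the same approach as the paper: both reduce to the per-cup inequality $c_j \le s_j + 1$ (equivalently, $k$ crossings require at least $k-1$ units of water) and then sum over $j \in S$. The paper simply asserts this per-cup bound in one sentence, whereas you supply a careful justification via telescoping on $\lfloor h_j(t)\rfloor$; your added detail is sound and the two proofs are otherwise identical in structure.
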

\begin{proof}
 In order for a cup to cross $k$ level-$i$ thresholds during steps
 $t_0, \ldots, t_1$, at least $k - 1$ units of water must be placed
 into that cup during steps $t_0, \ldots, t_1$. Combining this
 observation for all cups in $S$ completes the proof of the lemma.
\end{proof}

So far we have shown that, in order for the emptier to achieve a large
level-$i$ threshold bolus during steps $t_0, \ldots, t_1$, the filler
must place almost all (i.e., all but $|S|$ units) of their water into
some set $S$ of cups satisfying $|S| \le 2A^{(i)}(t)$. The next lemma
shows that, given the existence of such a set $S$, the level-$i$
threshold bolus will be small as a function of $|S|$ with high
probability in $|S|$. The key ingredient in the proof of the lemma is
the use of the initial random offsets $r_i$ to randomize when
threshold crossings occur in each cup.

\begin{lemma}
Let $m$ satisfy $m \ge 36 \log^{2c} n$ for some $c > 0$. Suppose there
exists a set $S \subset [n]$ of size at most $m$ such that in steps
$t_0, \ldots, t_1$, at least $p(t - t_0 + 1) - m$ units of water are
placed into set $S$. Then with probability at least $1 - \exp(-\log^c
n)$, the number of (any-level) thresholds crossings in steps $t_0,
\ldots, t$ is at most
$$p(t_1 - t_0 + 1) + m^{0.75}.$$
\label{lem:smallStosmallbacklog}
\end{lemma}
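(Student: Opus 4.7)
The plan is to reduce the statement to a standard Chernoff-style concentration bound by expressing the number of integer crossings in each cup as a clean function of the random offset $r_j$. The filler is oblivious, so the quantity $F_j(t)$ denoting the total water placed into cup $j$ through step $t$ is deterministic; all randomness in the fill $h_j(t) = r_j + F_j(t) - G_j(t)$ lives in $r_j$, where $G_j(t) \in \mathbb{Z}_{\ge 0}$ counts the emptier's removals from cup $j$. Because the emptier only removes from a cup with fill at least $1$, each removal decreases $\lfloor h_j \rfloor$ by exactly $1$ and is itself one downward integer crossing.

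Using (upward integer crossings in cup $j$) $=$ (net change in $\lfloor h_j \rfloor$) $+$ (downward crossings), together with the identity $\lfloor h_j(t) \rfloor = \lfloor r_j + F_j(t) \rfloor - G_j(t)$, a telescoping argument over $\{t_0, \ldots, t_1\}$ yields
\begin{equation*}
  \lfloor r_j + F_j(t_1) \rfloor - \lfloor r_j + F_j(t_0 - 1) \rfloor \;=\; \lfloor w_j \rfloor + X_j,
\end{equation*}
where $w_j := F_j(t_1) - F_j(t_0 - 1)$ is deterministic and $X_j := \mathbf{1}[\{r_j + F_j(t_0 - 1)\} + \{w_j\} \ge 1]$ is a Bernoulli of mean $\{w_j\}$ (since $\{r_j + F_j(t_0 - 1)\}$ is uniform on $[0,1)$). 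The $X_j$'s are independent because the $r_j$'s are. This expression upper-bounds the threshold crossings at any level in cup $j$, so summing over $j$ and using $\sum_j \lfloor w_j \rfloor = W - \sum_j \{w_j\}$ with $W := \sum_j w_j \le p(t_1 - t_0 + 1)$, it suffices to show $\sum_j (X_j - \{w_j\}) \le m^{0.75}$ with probability at least $1 - \exp(-\log^c n)$.

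To finish, I would split the sum at $S$ versus $\bar{S} := [n] \setminus S$. Over $S$, the sum has at most $|S| \le m$ independent Bernoulli summands of range $1$, so Hoeffding's inequality bounds its deviation above by $m^{0.75}/2$ with probability at least $1 - \exp(-m^{0.5}/2)$. Over $\bar{S}$, the hypothesis $\sum_{j \in \bar{S}} w_j \le m$ gives both $\sum_{j \in \bar{S}} \{w_j\} \le m$ and $\sum_{j \in \bar{S}} \var(X_j) \le m$, so Bernstein's inequality bounds its deviation above by $m^{0.75}/2$ with probability at least $1 - \exp(-\Omega(m^{0.5}))$. Since $m \ge 36 \log^{2c} n$ forces $m^{0.5} \ge 6 \log^c n$, a union bound completes the argument.

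The main obstacle will be the closed-form crossings identity, because the emptier's behaviour is itself a randomized function of all of the $r_j$'s. The saving observation is that the filler's schedule $F_j(t)$ carries no randomness (oblivious filler) and the emptier's unit removals change $\lfloor h_j \rfloor$ only by integers, so once we track $\lfloor h_j \rfloor$ and $G_j$ separately, all of the randomness in the crossings count at cup $j$ is absorbed into the single offset $r_j$; independence across cups then follows automatically. Once this is in place, the decomposition into Bernoullis and the concentration argument are routine.
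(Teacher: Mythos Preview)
Your proof is correct and follows essentially the same route as the paper: both reduce the crossing count in cup $j$ to $\lfloor w_j \rfloor + X_j$ for independent Bernoullis $X_j$ with $\Pr[X_j = 1] = \{w_j\}$, using that the emptier's integer removals leave fractional parts untouched so that each $X_j$ depends only on $r_j$. The only difference is cosmetic: the paper bounds $\sum_j \{w_j\} \le 2m$ directly (at most $m$ from cups outside $S$, at most $|S| \le m$ from cups in $S$) and finishes with a single multiplicative Chernoff bound, whereas you split the deviation across $S$ and $\bar S$ and invoke Hoeffding plus Bernstein.
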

\begin{proof}
Say that a cup $j$ \defn{crosses a threshold} (not associated with any
particular level) at a step $t$ if the amount of water $f$ placed by
the filler into cup $j$ during step $t$ satisfies $h_j(t - 1) < s \le
h_j(t - 1) + f$ for some $s \in \mathbb{N}$. Note that the number of
thresholds crossed during each step is independent of which cups the
emptier decides to empty from in preceding steps, since the emptier
always removes integer quantities of water from cups. That is, if
$g_j(t)$ denotes the total amount of water placed by the filler into
cup $j$ in the first $t$ steps (excluding $r_j$), then a threshold is
crossed in cup $j$ during step $t$ if and only if $r_j + g_j(t - 1) <
s \le r_j + g_j(t)$ for some $s \in \mathbb{N}$.
  
For each cup $j$, let $x_j + y_j$ denote the amount of water placed
into cup $j$ during steps $t_0, \ldots, t_1$, where $x_j \in
\mathbb{Z}$ is an integer and $y_j \in [0, 1)$. The first $x_j$ units
  of water that are placed into cup $j$ during steps $t_0, \ldots,
  t_1$ deterministically cross $x_j$ integer thresholds. The final
  $y_j$ units of water placed into cup $j$ then cross an additional
  threshold if and only if
  \begin{equation}
    r_j + g_j(t_0 - 1) < s \le r_j + g_j(t_0 - 1) + y_j,
    \label{eq:crossingcondition}
  \end{equation}
  for some $s \in \mathbb{N}$. Note that \eqref{eq:crossingcondition}
  is equivalent to the condition, $r_j + g_j(t_0 - 1) \pmod 1 \in [1 -
    y_j, 1)$. Since $r_j$ is selected uniformly from $[0, 1)$, it
      follows \eqref{eq:crossingcondition} holds with probability
      exactly $y_j$, meaning that the $y_j$ units of water cross a
      threshold with probability exactly $y_j$.
      
The number of thresholds $L$ crossed during steps $t_0, \ldots, t_1$
can be expressed as
  $$L = \sum_j x_j + \sum_j Y_j,$$ where the $Y_j$'s are independent
zero-one random variables satisfying $\Pr[Y_j = 1] = y_j$.

Since $\E[L] = \sum_j x_j + \sum_j y_j = p(t_1 - t_0 + 1)$, if the number of
thresholds crossings exceeds $p(t_1 - t_0 + 1) + m^{0.75}$, then it must be that
\begin{equation}
  \sum_j Y_j  > \E\left[\sum_j Y_j\right] + m^{0.75}.
  \label{eq:sumyj}
\end{equation}

Recall that there exists a set $S$ of size $m$ such that in steps
$t_0, \ldots, t_1$, at most $m$ units of water are placed in cups $[n]
\setminus S$. It follows that
\begin{equation*}
  \begin{split}
    \E\left[\sum_j Y_j\right] & = \sum_j y_j \\
    & \le m + \sum_{j \in S} y_j \\
    & \le m + |S| \le 2m.
  \end{split}
\end{equation*}

By a multiplicative Chernoff bound, the probability of
\eqref{eq:sumyj} is at most,
\begin{equation*}
  \begin{split}
&  \exp\left(-(m^{-0.25}/2)^2 \cdot (2m) / 3 \right) \\
    & \le \exp\left(-m^{0.5} / 6 \right) \\
    & \le \exp\left(-(\log n)^c\right).
  \end{split}
\end{equation*}
\end{proof}

Combining Lemmas \ref{lem:fillerprogress}, \ref{lem:cupworkingset},
\ref{lem:lotsofwaterinS}, and \ref{lem:smallStosmallbacklog}, we now bound
the probability that, after a given step $t_1$, the number of
level-$(i + 1)$ active cups is large (i.e., $A^{(i + 1)}(t_1) >
2m^{0.75}$ for some $m$) without the number of level-$i$ active cups
having ever been large (i.e., without $A^{(i)}(t_0) > m / 2$ for some
$t_0 \le t_1$).
\begin{proposition}
Let $d$ be a sufficiently large constant and suppose that $c \ge
d$. Let $$m \ge \max(36 (\log n)^{2c}, dp \log n),$$ and let $i \in \{0,
1, 2, \ldots\}$. Then for any step $t_1$, the following statement
holds with probability at least $1 - t_1 \exp\left(-\log^c
n\right)$: Either $A^{(i + 1)}(t_1) \le 2m^{0.75}$, or there is
some step $t_0 \le t_1$ such that $A^{(i)}(t_0) > m / 2$ .
\label{prop:levelanalysis}
\end{proposition}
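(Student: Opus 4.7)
The plan is to argue by contradiction: fix $t_1$, assume both $A^{(i+1)}(t_1) > 2m^{0.75}$ and $A^{(i)}(t) \le m/2$ for every $t \le t_1$, and show that (conditional on the first event) the second event fails with probability at least $1 - t_1 \exp(-\log^c n)$. The argument chains the four preceding lemmas together: large level-$(i+1)$ activity forces a large level-$i$ threshold bolus, this bolus forces the filler to pour almost all of its water into a small set $S$ of cups, and the random offsets $r_j$ then make it highly unlikely that so many threshold crossings actually occur in $S$.

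First, every level-$(i+1)$ active cup $j$ satisfies $h^{(i)}_j(t_1) \ge 2$ and therefore contributes at least $1$ to $T^{(i)}(t_1)$, so the hypothesis $A^{(i+1)}(t_1) > 2m^{0.75}$ gives $T^{(i)}(t_1) > 2m^{0.75}$. Applying Lemma \ref{lem:fillerprogress} produces a step $t_0 \le t_1$ during which the number of level-$i$ threshold crossings in steps $t_0, \ldots, t_1$ is at least
\[
p(t_1 - t_0 + 1) + T^{(i)}(t_1) - dp\log n \;>\; p(t_1 - t_0 + 1) + m^{0.75},
\]
where the final inequality absorbs the slack term $dp\log n$ into one copy of $m^{0.75}$; this is precisely the point at which the hypothesis that $d$ is sufficiently large (so that $m^{0.75} \ge dp\log n$) is used. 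Because this count strictly exceeds $p(t_1 - t_0 + 1)$, Lemma \ref{lem:cupworkingset} implies that the set $S$ of cups in which those crossings occur satisfies $|S| \le 2 A^{(i)}(t_0 - 1) \le m$. Lemma \ref{lem:lotsofwaterinS} then gives that the filler must deposit at least $p(t_1 - t_0 + 1) - m$ units of water into $S$ during the interval.

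These are exactly the hypotheses of Lemma \ref{lem:smallStosmallbacklog} (the lower bound $m \ge 36(\log n)^{2c}$ is also part of the standing assumption), which guarantees that with probability at least $1 - \exp(-\log^c n)$, the total number of any-level threshold crossings in steps $t_0, \ldots, t_1$ is at most $p(t_1-t_0+1) + m^{0.75}$. Since level-$i$ crossings form a subset of any-level crossings, this directly contradicts the strict lower bound established above; thus the conjunction of the two assumptions is precluded for this particular $t_0$. Since Lemma \ref{lem:fillerprogress} picks out some $t_0 \in \{1, 2, \ldots, t_1\}$ without telling us which one, a union bound over the $t_1$ possible choices of $t_0$ yields the overall failure probability of $t_1 \exp(-\log^c n)$. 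The main obstacle is book-keeping the additive $O(p\log n)$ slack from Lemma \ref{lem:fillerprogress} against the $m^{0.75}$ savings afforded by the randomized analysis, which is where the quantitative hypotheses on $d$, $c$, and $m$ do the critical work; the remainder of the argument is a mechanical composition of the four lemmas.
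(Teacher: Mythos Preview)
Your proposal is correct and follows essentially the same approach as the paper: define the bad event, use Lemma~\ref{lem:fillerprogress} to extract a step $t_0$ with a large level-$i$ threshold bolus, invoke Lemmas~\ref{lem:cupworkingset} and~\ref{lem:lotsofwaterinS} to produce a small set $S$ absorbing nearly all the filler's water, apply Lemma~\ref{lem:smallStosmallbacklog} to bound the probability for that $t_0$, and finish with a union bound over $t_0 \le t_1$. The order of the lemmas, the role of each hypothesis on $m$, and the handling of the $O(p\log n)$ slack all match the paper's argument.
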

\begin{proof}
Define $E$ to be the event that $A^{(i + 1)}(t_1) > 2m^{0.75}$ but
$A^{(i)}(t_0) \le m / 2$ for all $t_0 \le t_1$.

Since $T^{(i)}(t_1) \ge A^{(i + 1)}(t_1) > 2m^{0.75}$ in event $E$,
Lemma \ref{lem:fillerprogress} implies that there must be some $t_0
\le t$ such that the number of level-$i$ thresholds $L$ crossed in
steps $t_0, \ldots, t$ satisfies
$$L > p(t_1 - t_0 + 1) + 2m^{0.75} - O(p \log n).$$
For $d$ a sufficiently large constant, it follows that
$$L > p(t_1 - t_0 + 1) + m^{0.75}.$$

By Lemma \ref{lem:cupworkingset}, if event $E$ occurs, then the set
$S$ of cups in which the $L$ level-$i$ thresholds are crossed must
satisfy
$$|S| \le 2 A^{(i)}(t_0 - 1) \le m.$$

In order so that $L \ge p(t_1 - t_0 + 1)$, Lemma
\ref{lem:lotsofwaterinS} requires that the filler places at least
$p(t_1 - t_0 + 1) - |S|$ units of water into the cups $S$ during steps
$t_0, \ldots, t_1$.

So far, we have shown that in order for event $E$ to occur, there must
be some $t_0 \le t_1$ such that more than $p(t_1 - t_0 + 1) + m^{0.75}$
level-$i$ thresholds are crossed in steps $t_0, \ldots, t_1$, but also
such that there exists a set $S$ of size $|S| \le m$ into which the
filler places all but $|S|$ units of water during steps $t_0, \ldots,
t_1$.

For any $t_0$ for which there exists such a set $S$, however, Lemma
\ref{lem:smallStosmallbacklog} bounds the probability of more than
$(t_1 - t_0 + 1) + m^{0.75}$ level-$i$ thresholds being crossed during
the steps by at most $\exp(-\log^c n)$. By a union bound over all $t_0
\le t_1$, it follows that the probability of event $E$ occurring is at
most $t_1 \exp(-(\log n)^c)$.
\end{proof}

To complete the proof of Theorem \ref{thm:randomized}, we use
Proposition \ref{prop:levelanalysis} to show that, with high probability,
there exists some $\ell \le O(\log \log n)$ such that the number of
level-$\ell$ active cups $A^{(\ell)}(t)$ never exceeds $O(p \log n) +
\log^{O(c)} n$ for any $t \le \log^c n$. We then use Theorem
\ref{thm:lognbacklog} to analyze the performance of the greedy
algorithm on level-$\ell$ active cups, thereby bounding the total backlog by
$O(\log p + c \log \log n)$.

\begin{proof}[Proof of Theorem \ref{thm:randomized}]
Let $d$ be a sufficiently large constant and assume that $c \ge d$.
Define $g(1) = n$, and for $i > 1$ define
$$g(i) = \max(4 g(i - 1)^{0.75}, 36 \log^{8c} n, dp \log n).$$

For each step $t_1$ and level $i$, define $X_{t_1, i}$ to be the event
that $A^{(i + 1)}(t_1) > g(i)$ but that for all $t_0 \le t_1$,
$A^{(i)}(t_0) \le g(i - 1)$. By Proposition
\ref{prop:levelanalysis}, applied to $m = 2g(i)$,
$$\Pr[X_{t_1, i}] \le t_1 \exp\left(-\log^{4c}
n\right).$$

Let $\ell \le O(\log \log n)$ be the smallest level $\ell$ such that
$g(\ell) = \max(36 \log ^{8c} n, dp \log n)$. Define $X$ to be the
event that there exists some $i \le \ell$, such that after one of the
steps $t_1 \in \{1, \ldots, \exp(\log^c n)\}$, the number of $i$-level
active cups satisfies $A^{(i)}(t_1) > g(i)$. If event $X$ occurs, and
$i$ is the minimum such $i$, then event $X_{t_1, i - 1}$ must also
occur. (Note that $i \neq 1$, since $g(1) = n$, and thus $A^{(1)}(t) \le
g(1)$ deterministically for all $t$.) Thus we can bound the
probability of event $X$ by,
\begin{equation*}
  \begin{split}
    \Pr[X] & \le \sum_{i = 1}^{\ell} \sum_{t_1 = 1}^{\exp(\log^c n)} t_1 \exp\left(-\log^{4c} n\right) \\
    & \le O\left(\log \log n \cdot \exp(2\log^{c} n) \cdot \exp\left(-\log^{4c} n\right)\right) \\
    & \le \exp\left(-\log^{c} n\right). \\
  \end{split}
\end{equation*}

For the rest of the proof, suppose that event $X$ does not occur. In particular,
\begin{equation}
  A^{(\ell)}(t) \le \max(36 \log^{8c} n, dp \log n) \le p\log^{O(c)} n,
  \label{eq:numactivecups}
\end{equation}
for all $t \le \exp(\log^c n)$ and for some level $\ell \le O(\log
\log n)$. Consider the $\exp(\log^c n)$-step cup game $G$ that occurs
at level $\ell$, in which the fill of each cup $j$ after each step $t$
is the level-$\ell$ fill $h^{\ell}_j(t)$. To complete the proof, we
wish to show that the backlog in $G$ never exceeds $O(\log p + c \log
\log n)$. Since the number of cups containing non-zero water in $G$
never exceeds $p\log^{O(c)} n$ at the end of any step, the number of
cups containing non-zero water must not exceed $p\log^{O(c)} n + p$
at any point in $G$ (including in the intermediate states between steps, when the filler
has inserted water that the emptier has not yet removed). Because the
emptier's greedy algorithm is agnostic to the numbering of cups in
$G$, we may assume without loss of generality that the only cups in
$G$ to ever be non-empty are cups $1, \ldots, p(\log n)^{O(c)} +
p$. Applying the analysis of the greedy algorithm (Theorem
\ref{thm:lognbacklog}) to those cups bounds the backlog by $O(\log p +
c \log \log n)$, as desired.

\end{proof}

\subsection{Lower bounds against randomized emptiers}\label{sec:randomizedlowerbounds}

In this section, we consider two natural ways in which one might hope
to improve Theorem \ref{thm:randomized}, and we prove lower bounds
prohibiting any such improvements.

The first potential improvement to Theorem \ref{thm:randomized} is to
establish a \defn{constant-backlog guarantee}, meaning that when $p$
is large, the emptier can achieve constant backlog $O(1)$ with high
probability in $p$. A constant-backlog guarantee is known to be
possible when the emptier is given resource augmentation
\cite{BenderFaKu19}. Such a guarantee is not possible without the use
of resource augmentation, however. Specifically, Theorem
\ref{thm:randomizedlowerbound1} presents a strategy for the filler
that achieves backlog $\Omega(\log \log p)$ with at least constant
probability.

The second potential improvement to Theorem \ref{thm:randomized} to
establish an \defn{unending guarantee}, meaning that for any
(arbitrarily large) time step $t \in \mathbb{N}$, we present a
high-probability bound on the backlog at time $t$. It was shown by
Bender et al. \cite{BenderFaKu19} that randomizations of the greedy
algorithm can achieve an unending guarantee in the presense of
resource augmentation. Theorem \ref{thm:randomizedlowerbound2} shows
the same type of guarantee cannot be given without the use of resource
augmentation, at least as long as the emptier uses any ``greedy-like''
algorithm, and as long as $p > 1$. Determining whether Theorem
\ref{thm:randomizedlowerbound2} can be generalized to apply to
non-greedy-like algorithms, or to the case of $p = 1$, is an
interesting open question.

We begin by presenting and proving Theorem
\ref{thm:randomizedlowerbound1}, which establishes the impossibility of constant-backlog guarantees.
\begin{theorem}
Suppose $n \ge p + c\log p$ for some constant $c$, and $p \ge 1$. Then
in the $p$-processor cup game on $n$ cups, there exists an oblivious
strategy for the filler and a step-number $s \le \poly(p)$, such that
the backlog at step $s$ is at least $\Omega(\log \log p)$ with
constant probability, regardless of the (possibly randomized) strategy
followed by the emptier.
\label{thm:randomizedlowerbound1}
\end{theorem}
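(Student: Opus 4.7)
I would adopt the anchor-set construction outlined in Section \ref{sec:lowerbounds_overview}. The filler designates a set $A$ of $p-1$ cups as the \emph{anchor set} and, at every step, places $1$ unit of water in each anchor cup; the remaining unit per step is used on the $n - p + 1 \ge c \log p$ non-anchor cups to simulate a single-processor oblivious randomized construction of Dietz--Raman / Bender et al.\ type, which, over $\poly(p)$ steps, identifies a specific non-anchor cup $j^{\ast}$ that reaches fill $\Omega(\log \log p)$ with probability $\Omega(1/\sqrt p)$ whenever the emptier removes water from at most one non-anchor cup per step. The filler runs this subroutine in independent epochs; at the end of each epoch, the filler flips an independent coin of bias $1/\poly(p)$ and, if it comes up heads, swaps $j^{\ast}$ with a uniformly random member of $A$ (the new $A$ is then used for all subsequent epochs). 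Because the filler knows its own coins, $j^{\ast}$ is well-defined even though the overall strategy is oblivious: it is a fixed distribution over action sequences.

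\textbf{Dichotomy on the emptier.} Call a step \emph{balanced} if the emptier removes $1$ unit from every anchor cup, and \emph{unbalanced} otherwise. In an unbalanced step one has $\tot_A(S_t) \ge \tot_A(S_{t-1}) + 1$, so if a constant fraction of the $\poly(p)$ steps are unbalanced, then $\tot_A$ exceeds $|A| \cdot c \log \log p$ at some step and some anchor cup has fill $\Omega(\log \log p)$ deterministically. Otherwise a constant fraction of the epochs are entirely balanced, and in such an epoch the non-anchor game is literally a single-processor game on $n - p + 1 \ge c\log p$ cups against an emptier that removes $1$ unit from at most $1$ cup per step; by the black-box guarantee, the filler's designated $j^{\ast}$ reaches fill $\Omega(\log \log p)$ with probability $\Omega(1/\sqrt p)$, and these successes are independent across epochs. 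I would choose the epoch count large enough in $\poly(p)$ so that the expected number of ``clean'' epochs in which both (a) $j^{\ast}$ succeeds and (b) the filler's swap coin triggers is a large constant, and then extract a constant-probability lower bound via second-moment concentration.

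\textbf{Why swaps are effective.} A successful swap places a cup of fill $h = \Omega(\log \log p)$ into $A$. Once inside $A$, the filler deposits $1$ unit into it each subsequent step, so its fill can drop by at most $1$ per unbalanced step, and not at all in a balanced step (which removes exactly the newly deposited unit). Consequently, driving the swapped cup's fill below $h/2$ requires $\Omega(h)$ additional unbalanced steps, each of which, as above, raises some other anchor's fill by $1$. So after each swap either the cup retains fill $\Omega(\log \log p)$ in $A$ until the end, or enough extra unbalanced steps occur to push $\tot_A$ into the first branch of the dichotomy. A union bound over epochs, combined with the independence of the swap coins from everything else, then gives backlog $\Omega(\log \log p)$ with constant probability within $\poly(p)$ steps.

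\textbf{Main obstacle.} The delicate point is that an adaptive emptier can \emph{mix} the two failure modes: unbalance just enough to keep $\tot_A$ below the first threshold, while also emptying additional non-anchor cups often enough to suppress the DR subroutine's per-epoch success probability and to erode any swapped-in cups. The main technical work is to construct a joint potential that simultaneously tracks $\tot_A$ and the conditional probability (over the filler's remaining coins) that some future epoch contributes a successful, swapped-in, and not-yet-drained cup, and to show that no emptier strategy can drive both quantities below $\Omega(\log \log p)$ and $\Omega(1)$, respectively. A secondary, more bookkeeping obstacle is verifying that the oblivious single-processor construction can be boosted to probability $\Omega(1/\sqrt p)$ on $c\log p$ cups within $\poly(p)$ steps with constants aligned so that $1/\poly(p)$ swap probability together with $\poly(p)$ epochs yields the required constant-probability event.
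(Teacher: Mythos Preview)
Your high-level architecture matches the paper's: an anchor set $A$ of $p-1$ cups, a single-processor subroutine on the non-anchor cups, randomized swaps into $A$, and a dichotomy between steps that grow $\tot_A$ and steps that let the subroutine succeed. Two genuine gaps, however, prevent the sketch from going through as written.

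\textbf{Wrong subroutine.} You invoke a ``Dietz--Raman / Bender et al.\ type'' oblivious construction on the $c\log p$ non-anchor cups to reach fill $\Omega(\log\log p)$. But the Dietz--Raman lower bound achieves backlog $\Omega(\log\log m)$ on $m$ cups; with $m=c\log p$ this yields only $\Omega(\log\log\log p)$, off by a logarithm. The paper does \emph{not} use that construction. Instead it takes the \emph{adaptive} harmonic construction of Lemma~\ref{lem:lowerboundlemma} (which achieves fill $\Omega(\log m)=\Omega(\log\log p)$ on $m=c\log p$ cups) and makes it oblivious by having the filler discard a \emph{random} cup from $B_r$ at each step rather than the cup the emptier touched. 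The price is that the success probability drops to roughly $1/(c\log p)!$, but the target fill stays at $\Omega(\log\log p)$. This specific subroutine is what makes the numbers work; a generic $\log\log m$ black box does not.

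\textbf{Adaptive emptier.} Your global dichotomy (count unbalanced steps, then argue that the remaining balanced epochs each succeed with probability $\Omega(1/\sqrt p)$) is vulnerable: within an epoch the emptier observes the filler's random discards in real time and can unbalance \emph{only} those epochs in which the subroutine is about to succeed, paying one unbalanced step each. Your proposed ``joint potential'' is exactly where the work lies, and you have not built it. The paper sidesteps this by (i) proving the per-round statement ``with probability at least $q$, the round is either non-anchor-preserving \emph{or} the surviving cup has fill $\Omega(\log\log p)$,'' which holds against \emph{any} adaptive emptier step-by-step, and (ii) hiding which round in each phase is the swap round by choosing it uniformly among $p^3$ rounds. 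The emptier then cannot selectively sabotage the swap round without making $\Omega(p^2)$ rounds non-anchor-preserving, which by itself forces $\tot_A$ large. Your independent per-epoch swap coin can be made to play a similar role, but you would still need the per-epoch ``either/or'' guarantee from (i), and that in turn requires the randomized-harmonic subroutine rather than a DR-type one.

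Minor point: once a cup is in $A$ its fill is non-decreasing (filler adds $1$, emptier removes at most $1$ each step), so your ``drops by at most $1$ per unbalanced step'' is pessimistic; the real risk is being swapped \emph{out} of $A$, which the paper controls by having at most $p$ swaps total and removing a uniform member of $A$ each time.
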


\begin{proof}
The filler performs a series of $p$ phases, each of which consists
of $p^3$ rounds, each of which consists of $c\log p - 1$ steps for
some small constant $c$. At the beginning of each round $r$, the
filler selects an \defn{anchor collection} $A_r$ of $p - 1$
cups. (Note that rounds are numbered starting at $1$ and ending at
$p^{4}$, i.e., the numbering does not restart within each phase.)
The anchor collection begins as $A_1 = \{1, \ldots, p - 1\}$. For most
rounds $r$, the anchor collection $A_{r + 1}$ is selected to be
$A_{r}$. However, for each phase $i$, the filler selects one round $r$
at random from the $p^3$ rounds in the phase, and constructs $A_{r +
  1}$ from $A_{r}$ by selecting a random cup $k \in A_{r}$ and
replacing cup $k$ with a new cup $j \not\in A_{r - 1}$; we describe
the method for selecting cup $j$ shortly. When $A_{r + 1} \neq A_{r}$,
we call round $r$ a \defn{new-anchor round}. Each phase contains
exactly one new-anchor round.

At the beginning of each round $r$, define $B_r$ to be the set of $c
\log p$ smallest-numbered cups not contained in $A_r$. During each of
the $c \log p - 1$ steps in round $r$, the filler places $1$ unit of
water in each cup in $A_r$. Additionally, the filler spreads $1$ unit
of water evenly among the cups in $B_r$, and the filler then removes
one cup at random from the set $B_r$. At the end of round $r$, if
round $r + 1$ is a new-anchor round, then the anchor collection $A_{r
  + 1}$ is selected to be $(A_r \setminus \{k\}) \cup \{j\}$ for a
random $k \in A_r$ and for $j$ the unique cup remaining in $B_r$.

We now prove that the strategy described above for the filler achieves
backlog $\Omega(\log \log p)$ with at least constant probability, at
the end of the construction. Suppose that at the end of some round $r$
in some phase $q$, at least one cup $k$ in $A_r$ contains fill
$\Omega(\log \log p)$ or greater. Then with constant probability at
least
$$\left(1 - \frac{1}{p - 1}\right)^{p - q + 1} \ge \Omega(1),$$ cup $k$ remains in
all of $A_{r + 1}, \ldots, A_{p^{4}}$ (since there are at most $p - q
+ 1$ new-anchor rounds in which cup $k$ could be removed), in which
case the backlog at the end of phase $p$ is at least $\Omega(\log \log
p)$. To complete the proof of the theorem, it suffices to show
that, with at least constant probability, there is some round $r$ at
the end of which at least one cup contains fill $\Omega(\log \log
p)$. We call such a round $r$ a \defn{backlog-enabling round}.

Recall that each phase consists of $p^3$ rounds, each of which
consists of $c \log p - 1$ steps. Call a round $r$
\defn{anchor-preserving} if the emptier removes one unit of water from
each cup in $A_r$ during each step of the round. We claim that for
each phase $q$, the new-anchor round $r$ has probability at least
$\frac{1}{p}$ of either begin backlog-enabling or of being
non-anchor-preserving.  During each step of a round, the emptier can
either remove water from multiple cups in $B_r$, in which case the
round will not be anchor-preserving, or the emptier can remove water
from at most one cup $j$ in $B_r$. In the latter case, since the
filler then removes a random cup from $B_r$, the filler has
probability at least $\frac{1}{|B_r|}$ of removing the cup $j$ that
the emptier emptied from. It follows that with probability at least
$$\frac{1}{c \log p} \cdot \frac{1}{c \log p - 1} \cdot \cdots \cdot
\frac{1}{2} \ge \left(\frac{1}{c \log p}\right)^{c \log p},$$ either
the set $B_r$ never contains any cups that have been emptied out of
during round $r$, or round $r$ is non-anchor-preserving. In the former
case, the single cup $j$ remaining in $B_r$ at the end of round $r$
has fill
$$\frac{1}{c \log p} + \frac{1}{c \log p - 1} + \cdots + \frac{1}{2}
\ge \Omega(\log \log p).$$ Thus there is probability at least $1 / (c
\log p)^{c \log p}$ either that the new-anchor round $r$ is non-anchor
preserving, or that some cup in $A_{r + 1}$ contains at least
$\Omega(\log \log p)$ water at the end of round $r$. For $c$ a
sufficiently small positive constant, $1 / (c \log p)^{c \log p} \ge
\frac{1}{p}$. Thus, with probability at least $\frac{1}{p}$, the
new-anchor round $r$ in a phase $q$ is either non-anchor-preserving or
backlog-enabling.

For a given phase $q$, let $\ell_q$ denote the number of rounds $r$ in
phase $q$ that would be anchor-preserving if the filler were to select
the final round in phase $q$ to be the new-anchor round. Then with
probability $\frac{\ell_q}{p^3}$, the actual new-anchor round $r$
selected by the filler is an anchor-preserving round\footnote{Note, in
  particular, that until the end of the new-anchor round $r$, the
  filler's behavior is indistinguishable from the case where the
  new-anchor round $r$ satisfies $r = p^3$.}. The probability that
new-anchor round $r$ is backlog-enabling is therefore at least
$\frac{1}{p} - \frac{\ell_q}{p^3}$. If $\ell_q \le p^2 / 2$, then the
new-anchor round $r$ is backlog-enabling with probability at least
$\frac{2}{p}$. If, on the other hand, $\ell_q \ge p^2 / 2$, then with
probability at least $\frac{1}{4p}$, the filler selects a new-anchor
round $r$ among the final $p^2 / 4$ rounds in phase $q$;
and thus at least $p^2 / 4$ rounds in phase $q$ are non-anchor
preserving (and all of these rounds occur before the new-anchor round
$r$). But each non-anchor-preserving round increases the amount of
water in cups $A_r$ by at least $1$, meaning that the average fill in
$A_r$ is at least $\Omega(p)$ by the end of the new-anchor round $r$;
and thus round $r$ is backlog-enabling. Thus, if $\ell_q \ge p^2 / 2$,
then there is a backlog-enabling round $r$ in phase $q$ with
probability at least $\frac{1}{4p}$.

Regardless of the value of $\ell_q$, we have shown that there is a
backlog-enabling round in phase $q$ with probability at least
$\Omega(1/p)$. Since there are $p$ phases, some phase will contain a
backlog-enabling round with probability at least $\Omega(1)$. Given
that some phase contains a backlog-enabling round, the backlog at the
end of the construction will be $\Omega(\log \log p)$ with at least
constant probability, completing the proof.
\end{proof}

Call an emptying strategy for the $p$-processor cup game
\defn{$\ell$-greedy-like} if there exists a constant $c$ for which the
following property holds: Whenever there are at least $2$ cups
containing $\ell$ or more water, the emptier will remove water from at
least $2$ cups that contain fill $\ell / c$ or more. That is, the
emptier will not choose $p - 1$ cups that contain less than $\ell / c$
water, if there are at least two cups that contain $\ell$ water or
more.

Theorem \ref{thm:randomizedlowerbound2} establishes that no
greedy-like algorithm can achieve an unending guarantee.

\begin{theorem}
  Let $p \ge 2$. Let $\ell \le n - p$, and suppose $\ell$ is at least
  a sufficiently large constant multiple of $\log p$. Then there
  exists a step $t \le \exp(\ell)$ and an oblivious filling strategy
  that achieves backlog $\Omega(\log \ell)$ after step $t$ with
  probability at least $1 - \exp(-\ell / 5)$, as long as the emptier
  follows a $(-q + \ln \ell)$-greedy-like emptying strategy for some
  sufficiently large constant $q$.
  \label{thm:randomizedlowerbound2}
\end{theorem}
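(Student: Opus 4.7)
\vspace{.3cm}
\noindent The plan is to generalize the anchor-plus-subconstruction template of the proof of Theorem \ref{thm:randomizedlowerbound1} to produce backlog $\Omega(\log \ell)$ with the much stronger probability bound $1 - \exp(-\ell/5)$. The filler designates an anchor set $A$ of $p - 1$ cups and inserts one unit per step into each anchor, while running a randomized subconstruction on the remaining $n - p + 1 \ge \ell + 1$ non-anchor cups. The $(-q + \ln \ell)$-greedy-like hypothesis enters through a dichotomy: once the anchor cups reach fill $\ln \ell - q$---which must happen within $\ln \ell - q$ steps unless the emptier is already devoting $p - 1$ of its $p$ slots per step to the anchors---the greedy-like rule forces the emptier to remove water from at least two cups of fill $\Omega(\log \ell)$ each step. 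Either both such cups are anchor cups, in which case the emptier has at most one slot per step left for non-anchor cups and the non-anchor subgame is effectively single-processor, or one is a non-anchor cup with fill $\Omega(\log \ell)$, in which case the desired backlog is already attained.

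Within the effective single-processor subgame on non-anchor cups, the filler runs a sequence of rounds of the harmonic-spread subconstruction from the proof of Theorem \ref{thm:randomizedlowerbound1}: at the start of each round a working set $B$ of non-anchor cups is designated, and at each step the filler spreads one unit of water uniformly over the current $B$ and then randomly evicts one cup from $B$. A \emph{successful} round is one in which the emptier's single non-anchor removal coincides with the filler's random eviction at every step; in this case the unique surviving cup of $B$ accumulates fill $H_{|B|} - 1 = \Theta(\log |B|)$. An unsuccessful round either sees the emptier emptying two non-anchor cups in some step---so the anchor total grows by at least $1$ that step---or produces no non-anchor backlog. Over many rounds, either the anchor total accumulates to $\Omega((\log \ell)(p-1))$ (so some anchor cup has fill $\Omega(\log \ell)$), or enough rounds are successful that at least one produces $\Omega(\log \ell)$ non-anchor backlog.

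The main obstacle is driving the overall failure probability below $\exp(-\ell/5)$ within the $\exp(\ell)$-step budget. A single round of length $\ell$ succeeds with probability only $1/(\ell+1)!$, which is too small to amplify directly: $\exp(\ell)/\ell$ independent attempts would give overall success probability only $1 - (1 - 1/(\ell+1)!)^{\exp(\ell)/\ell}$, which remains close to $0$ for large $\ell$. To remedy this, I would shorten the rounds and hierarchically compose partial successes so that $\Theta(\log \ell / \log \log \ell)$ partial payoffs of $\Theta(\log \log \ell)$ each aggregate into $\Omega(\log \ell)$ fill in a single ``target'' cup the filler commits to randomly in advance; amplifying across $\exp(\Omega(\ell))$ such rounds via a Chernoff-type bound should then drive the failure probability below $\exp(-\ell/5)$. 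The delicate parts, and the main difficulty, are re-engineering the subconstruction so that its per-round success probability is at least $\exp(-\ell/6)$ without sacrificing the $\Omega(\log \ell)$ cumulative payoff, handling the lack of true independence across rounds (which I would address via a union bound over the emptier's bounded memory state at round boundaries), and ensuring that the anchor cups stay at or near fill $\ln \ell - q$ throughout the run so that the greedy-like rule remains active.
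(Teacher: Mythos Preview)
Your template (anchor set plus harmonic subconstruction on non-anchor cups) matches the paper's, but two concrete missteps prevent your outline from going through.

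\textbf{The probability obstacle you identify is an artifact of your parametrization.} You take rounds of length $\ell$ and conclude the per-round success probability is $1/(\ell+1)!$, which indeed cannot be amplified within $\exp(\ell)$ steps. The paper instead takes rounds of length $c\ell - 1$ for a \emph{small} positive constant $c$. The harmonic payoff in the surviving cup(s) is then $\sum_{j=2}^{c\ell} 1/j \ge -1.5 + \ln(\ell/c)$, which is still $\Omega(\log\ell)$; meanwhile the probability that the emptier's single non-anchor removal always coincides with the filler's random eviction is at least $1/(c\ell)^{c\ell}$, and for $c$ sufficiently small this is $\ge \exp(-\ell/4)$. Now $\exp(\ell/2)$ rounds per ``sequence'' and $p\ell$ sequences fit inside the $\exp(\ell)$ step budget, and a union bound gives failure probability at most $p\ell\exp(-\ell/4) \le \exp(-\ell/5)$. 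No hierarchical composition, no Chernoff bound, and no target-cup commitment are needed.

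\textbf{You invoke the greedy-like hypothesis in the wrong direction.} You arrange for the \emph{anchor} cups to reach fill $\ln\ell - q$ so that greedy-like behavior kicks in on them; but if the two mandated high-fill removals are both anchor cups, the emptier still has $p-2$ free slots to spend on non-anchor cups, so the non-anchor game is \emph{not} effectively single-processor. The paper does the opposite: it assumes the anchor cups are still \emph{below} $\Omega(\log\ell)$, and runs the harmonic construction down to \emph{two} surviving non-anchor cups, each with fill $\ge -1.5 + \ln(\ell/c)$. At that moment the $(-q + \ln\ell)$-greedy-like rule \emph{forces} the emptier to remove from two cups of fill $\Omega(\log\ell)$, and since no anchor cup qualifies, both removals must be non-anchor. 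This neglects an anchor cup and makes the phase non-anchor-preserving, pumping the anchor total by $1$. Thus the greedy-like hypothesis is not used to \emph{witness} non-anchor backlog (the oblivious filler cannot check that anyway) but to deterministically convert a lucky subconstruction outcome into anchor-fill growth. After $\Omega(p\log\ell)$ non-anchor-preserving phases, some anchor cup has fill $\Omega(\log\ell)$, and anchor fills are monotone nondecreasing, so this persists to the end.
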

\begin{proof}
Let $c$ be a sufficiently small positive constant, and suppose that
the emptier is $(-1.5 + \ln \frac{\ell}{c})$-greedy-like.
  
The filler's strategy is similar to that in the proof of Theorem
\ref{prop:lowerbound}. The strategy consists of $p \ell \cdot
\exp({\ell / 2}) \le \exp({3\ell/4})$ phases, each of which consists
of $c\ell - 1$ steps. At the beginning of a phase $q$ the filler sets
an anchor-set $A = \{1, \ldots, p - 1\}$ and a non-anchor set $B =
\{p, \ldots, p + c\ell - 1\}$. During the $t$-th step of phase $q$,
the filler places $1$ unit of water into each cup $j \in A$, and
distributes $1$ unit of water evenly among the cups in $B$. The filler
then removes one cup at random from the set $B$.

Call a phase \defn{anchor-preserving} if the emptier removes $1$ unit
from each cup in $A$ during each step of the phase, and \defn{mostly
  anchor-preserving} if the emptier removes $1$ unit from each cup in
$A$ during each of the first $c \ell - 2$ steps of the phase (but not
necessarily during the final step). We claim that if no cups in $A$ contain fill $\Omega(\log \ell)$ at the beginning
of a phase $q$, then phase $q$ is anchor-preserving with probability
at most $1 - 1/(c\ell)^{c \ell}$. Notice that if phase $q$ is mostly
anchor-preserving, then the emptier removes water from at most one cup
in $B$ during each of the steps $1, \ldots, c \ell - 2$ of the
phase. It follows that, with probability at least
$$\frac{1}{c\ell} \cdot \frac{1}{c\ell - 1} \cdot \cdots \cdot
\frac{1}{3} \ge \left(\frac{1}{c\ell}\right)^{c \ell},$$ the phase $q$
is either non-mostly-anchor-preserving (and therefore also not
anchor-preserving), or the two cups $j_1, j_2$ remaining in $B$ prior
to the final step of the phase each contain fill at least
$$\frac{1}{c\ell} + \frac{1}{c\ell - 1} + \cdots + \frac{1}{3} \ge -1.5 + \ln
\frac{\ell}{c}.$$ If cups $j_1, j_2$ both have fill at least
$\Omega(\log \ell)$, then since no cup $j \in A$ has fill $\Omega(\log
\ell)$, the $(- 1.5 + \ln \frac{\ell}{c})$-greedy-like emptier is forced
to remove water from at least two cups in $[n] \setminus A$; this
means that the phase $q$ is non-anchor-preserving. Therefore, if all
the cups in $A$ contain fill sufficiently small in $O(\log \ell)$, then the
probability that phase $q$ is anchor-preserving is at most $1 -
1/(c\ell)^{c \ell}$.

For $c$ a sufficiently small constant, $1/(c\ell)^{c\ell} \ge \exp(-\ell
  / 4)$. Call a sequence of $\exp(\ell / 2)$ phases \defn{successful} if
either some cup in $A$ contains fill $\Omega(\log \ell)$ at the end of
the sequence, or if at least one of the phases is non-anchor
preserving. The probability that a given sequence of $\exp(\ell / 2)$
phases is successful is at least
$$1 - (1 - \exp(-\ell/4))^{\exp(\ell / 2)} \ge 1 - \exp(-\ell / 4).$$

By a union bound, it follows after the first $p \ell \cdot e^{\ell / 2}$
phases (i.e., at the end of the construction), with probability at
least
$$1 - p \ell \exp(\ell / 4) \ge 1 - \frac{1}{e^{\ell / 5}},$$
either some cup in $A$ contains fill $\Omega(\log \ell)$, or
there have been at least $p \log \ell$ non-anchor-preserving
phases. In the latter case, since each non-anchor-preserving phase
increases the total fill in $A$ by at least $1$, some cup must have
fill at least $\Omega(\log \ell)$. This completes the proof of the
theorem.
\end{proof}

\bibliographystyle{abbrv}
\bibliography{all}

\appendix

\section{Comparison of Analysis for $p = 1$ Versus for Larger $p$}\label{sec:appgreedyintuition}

Prior to our work, it was known how to bound the backlog in the
$p$-processor cup emptying game only for $p = 1$ \cite{BenderFaKu19,
  AdlerBeFr03, LitmanMo09}. Our analysis, which applies for $p \ge 1$,
implicitly reduces to (a version) of the known proof when $p$ is taken
to be $1$. In this section, we discuss the known analysis for $p = 1$,
and the relationship to our analysis for $p \ge 1$. We begin by
presenting the analysis for the single-processor cup game:

\begin{lemma}[Adler et al. \cite{AdlerBeFr03}]
The greedy algorithm for the $1$-processor cup game on $n$ cups
achieves backlog $O(\log n)$.
\label{lem:greedyold}
\end{lemma}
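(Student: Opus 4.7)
The plan is to derive Lemma \ref{lem:greedyold} as a special case of the analysis in Subsections \ref{sec:Mfinite}--\ref{sec:lemskewaverage}, which collapses considerably when $p = 1$. Concretely, I would set up a harmonic-style invariant of the form
$$\av_k(S_t) \;\le\; 1 \;+\; \sum_{j=k}^{n-1}\frac{1}{j}$$
for every $1 \le k \le n$ and every step $t$, and prove it by induction on $t$. Instantiating at $k = 1$ then gives $S_t(1) \le 1 + H_{n-1} = O(\log n)$, as required. When $p = 1$ there is no need for the auxiliary quantity $M$ or the $N$-truncation device from the multi-processor proof, because the $p$-fullest-cups subtraction in the definition of $f_k$ collapses to a single cup and the invariant can be stated directly on the ordinary averages $\av_k$.

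For the inductive step I would fix $k$, let $I_t$ denote the intermediate state between $S_{t-1}$ and $S_t$, and track $\tot_{k+1}$ across one round. The filler adds at most one unit of water in total, so $\tot_{k+1}(I_t) \le \tot_{k+1}(S_{t-1}) + 1$; then the emptier removes one unit from the single fullest cup $A$ of $I_t$. Mirroring Lemma \ref{lem:skewaverage}, I would split on whether $A$ remains in the top $k$ of $S_t$. If yes, the top-$k$ set of $I_t$ and of $S_t$ coincide, so the emptier's $-1$ exactly cancels the filler's $+1$ and $\tot_k(S_t) \le \tot_k(S_{t-1})$; the invariant is preserved outright. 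If no, the top-$k$ cups of $S_t$ are precisely the cups of rank $2,\ldots,k+1$ in $I_t$, each of fill at least $I_t(k+1)$, so $\av_k(S_t) \le \av_{k+1}(I_t)$; feeding this into the inductive hypothesis at parameter $k+1$ on $S_{t-1}$ and absorbing the $+1$ contribution of the filler into the $\tfrac{1}{k}$ of slack obtained by re-indexing the harmonic tail from $j \ge k+1$ down to $j \ge k$ closes the induction.

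The main obstacle is the familiar identity-swap issue: the cup sets underlying $\tot_k(S_{t-1})$, $\tot_k(I_t)$, and $\tot_k(S_t)$ are a priori different, so one cannot simply compare sums across steps. In the $p = 1$ case this is exactly what the two-case split above controls, essentially because the sets $A$ and $B$ from Lemma \ref{lem:skewaverage} together have total size $1$. An equally standard alternative would be the harmonic-weighted potential $\Phi(S) = \sum_i \tfrac{1}{i} \, S(i)$ applied to the ranked fills, for which one shows that $\Phi$ increases by at most a constant in excess of what the emptier removes per step and reads off the backlog bound; I would prefer the invariant-based plan because it matches the proof template used for the $p > 1$ case in the rest of the paper, but either route produces the same $O(\log n)$ bound.
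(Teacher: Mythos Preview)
Your proposal is correct and is essentially the same argument the paper gives: a harmonic invariant on $\av_k(S_t)$ proved by induction on $t$, with the two-case split according to whether the emptied (rank-$1$) cup of $I_t$ stays among the $k$ fullest cups of $S_t$. The paper uses the slightly tighter form $\av_k(S_t)\le\sum_{j=k+1}^{n}\tfrac{1}{j}$ (so the filler's $+\tfrac{1}{k+1}$ matches the new harmonic term exactly rather than being absorbed with slack), but the structure and the case analysis are identical, and your additive $+1$ is harmless for the $O(\log n)$ conclusion.
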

\begin{proof}
  We prove that for each step $t$, and for all $k \in [n]$,
  \begin{equation}
    \av_k(S_t) \le \frac{1}{k + 1} + \frac{1}{k + 2} + \cdots + \frac{1}{n}.
    \label{eq:simpleinvariant}
  \end{equation}
  The lemma follows from the case of $k = 1$.

  When $t = 0$, \eqref{eq:simpleinvariant} holds trivially. Suppose by
  induction that \eqref{eq:simpleinvariant} holds for $t - 1$. Let $s$
  be the index of the fullest cup in intermediate state $I_t$. There
  are two cases:

  \noindent \textbf{Case 1: $s$ is among the $k$ fullest cups in
    $S_t$. }In this case, $\tot_k(S_t) = \tot_k(I_t) - 1$. On the
  other hand, since at most one unit of water is added between states
  $S_{t - 1}$ and $I_t$, we have $\tot_k(I_t) \le \tot_k(S_{t - 1}) +
  1$. Thus $\av_k(S_t) \le \av_k(S_{t - 1})$, which by the inductive
  hypothesis satisfies
  $$\av_k(S_{t - 1}) \le \frac{1}{k + 1} + \frac{1}{k + 2} + \cdots + \frac{1}{n}.$$

  \noindent \textbf{Case 2: $s$ is not among the $k$ fullest cups in
    $S_t$. }The average fill of the cups with ranks $2, 3, \ldots, k +
  1$ in $I_t$ is at most as large as the average fill of the cups with
  ranks $1, 2, \ldots, k + 1$ (i.e., $\av_{k + 1}(I_t)$). Since the
  cups with ranks $2, 3, \ldots, k + 1$ in $I_t$ are precisely the $k$
  fullest cups in $S_t$, it follows that
  $$\av_k(S_t) \le \av_{k + 1}(I_t).$$ On the other hand, since at
  most one unit of water is added between states $S_{t - 1}$ and $I_t$,
  \begin{align*}
    \av_{k + 1}(I_t) & \le \av_{k + 1}(S_{t - 1}) + \frac{1}{k + 1} \\ & \le
    \frac{1}{k + 1} + \frac{1}{k + 2} + \cdots + \frac{1}{n},
  \end{align*}
  where
  the final inequality follows by the inductive hypothesis.  
\end{proof}

Extending the proof of Lemma \ref{lem:greedyold} to the $p$-processor
cup game has historically proven difficult due to the following issue,
which we describe in the case of $p = 2$ for simplicity. Suppose that
between states $I_t$ and $S_t$, the rank-2 cup $s$ in state $I_t$
drops to rank greater than $k$. In order to proceed as in Case 2 of
Lemma \ref{lem:greedyold}, we would want to argue that the average
fill of the cups with ranks $1, 3, 4, \ldots, k + 1$ in $I_t$ is at
most the average fill of the cups with ranks $1, 2, 3, \ldots, k +
1$. However, this is only true if $S_2(I_t) \ge \av_{k + 1}(I_t)$. The
critical problem is that, although the rank-1 cup in $I_t$ necessarily
has fill at least $\av_{k + 1}(I_t)$, the same cannot be said for the
cups with ranks $2, 3, \ldots, p$.

In order to try to fix the analysis, instead of proving
\eqref{eq:simpleinvariant}, one could attempt to prove that
\begin{equation}
  \av_k'(S_t) \le \frac{1}{k + 1} + \frac{1}{k + 2} + \cdots + \frac{1}{n},
  \label{eq:simpleinvariantbroken}
\end{equation}
where the \defn{shifted average} $\av_k'(S_t)$ is defined to be the
average fill in cups with ranks $p + 1, \ldots, p + k$. The difficult
case for proving \eqref{eq:simpleinvariantbroken} arises when the
ranks of the cups in $I_t$ are the same as in $S_t$, however. Despite
the fact that the filler may have placed up to $p$ units of water
into the cups with rank $p + 1, \ldots, p + k$ in $I_t$, the emptier
only removes water from the cups with ranks $1, \ldots, p$, allowing
$\av_k'$ to grow by as much as $p / k$.

One of the key insights in Section \ref{sec:greedy} is to analyze the
$N$-bounded skewed average $f^N_k(S_t)$, which combines the positive
aspects of both the standard average and the shifted average. Just as
removing $a$ units of water from the $k$ fullest cups reduces the
standard average $\av_k$ by $a / k$ (assuming the indices of the $k$
fullest cups do not change), removing $a$ units of water from the
$p$ fullest cups reduces the $N$-bounded skewed average $f^N_k$ by $a
/ k$ (assuming the indices of the $k + p$ fullest cups does not
change). On the other hand, just as the $p$ fullest cups in a state
$S$ each have fill at least as large as the shifted average
$\av'_k(S)$, the $p$ fullest cups also have fill at least as large as
the skewed average $f^N_k(S)$, \emph{assuming} that no cup contains
fill more than $N$. Combined, these two properties enable the analysis
of the $N$-truncated cup game given in Section
\ref{sec:Mfinite}. Extending the analysis to non-truncated cup games,
as is done in Section \ref{sec:lemskewaverage}, requires additional
ideas not present in the original analysis for $p = 1$.

The use of the skewed average in our analysis for $p \ge 1$, in place
of the standard average used for $p = 1$, both simplifies the analysis
(as described above), while simultaneously introducing several
combinatorial challenges. Whereas the bound,
$$\av_k(S_t) \le \frac{1}{k + 1} + \frac{1}{k + 2} + \cdots +
\frac{1}{n}$$ immediately implies a bound on the backlog in state $S_t$
(by examining the case of $k = 1$), the same cannot be said for the analogous bound,
$$f^M_k(S_t) \le 1 + \frac{1}{k + 1} + \frac{1}{k + 2} + \cdots +
\frac{1}{n},$$ proven in Section \ref{sec:greedy}. The fact that an
upper bound on $f^M_k(S_t)$ can be used to, in turn, bound $M$, is
somewhat unexpected (to us), since, in general, \emph{larger} values
of $M$ enable \emph{smaller} values of $f^M_k(S_t)$. Nonetheless, by
exploiting the combinatorial structure of the $p$-processor cup game,
and its relationship to the combinatorial definition of $M$, Section
\ref{sec:greedy} is able to achieve such a bound $M$ (and,
consequently, on backlog).

\end{document}